\DeclareMathOperator*{\argmax}{arg\,max}
\newtheorem{lemma}{Lemma}
\newtheorem{corollary}{Corollary}
\newtheorem{remark}{Remark}
\newtheorem{proposition}{Proposition}
\newcommand{\fac}{\Phi}
\newcommand{\alg}{\textsc{Alg}}
\newcommand{\x}{\mathbf x}
\newcommand{\opt}{\bm{\omega}}
\newcommand{\W}{\mathcal{W}}
\title{\bfseries Universal and Tight Online Algorithms for \\ Generalized-Mean Welfare}
\author{Siddharth Barman\thanks{Indian Institute of Science. {\tt barman@iisc.ac.in}} \quad Arindam Khan\thanks{Indian Institute of Science. {\tt arindamkhan@iisc.ac.in}} \quad Arnab Maiti \thanks{Indian Institute of Technology, Kharagpur. {\tt maitiarnab9@gmail.com}}}
\date{}
\begin{document}
\maketitle

\thispagestyle{empty}
\begin{abstract}
We study fair and efficient allocation of divisible goods, in an online manner, among $n$ agents. The goods arrive online in a sequence of $T$ time periods. The agents' values for a good are revealed only after its arrival, and the online algorithm needs to fractionally allocate the good, immediately and irrevocably, among the agents. Towards a unifying treatment of fairness and economic efficiency objectives, we develop an algorithmic framework for finding online allocations to maximize the generalized mean of the values received by the agents. In particular, working with the assumption that each agent's value for the grand bundle of goods is appropriately scaled, we address online maximization of $p$-mean welfare. Parameterized by an exponent term $p \in (-\infty, 1]$, these means encapsulate a range of welfare functions, including social welfare ($p=1$), egalitarian welfare ($p \to -\infty$), and Nash social welfare ($p \to 0$).

We present a simple algorithmic template that takes a threshold as input and, with judicious choices for this threshold, leads to both universal and tailored competitive guarantees. First, we show that one can compute online a {\em single allocation} that $O (\sqrt{n} \log n)$-approximates the optimal $p$-mean welfare for all $p\le 1$. The existence of such a universal allocation is interesting in and of itself. Moreover, this universal guarantee achieves essentially tight competitive ratios for specific values of $p$. 

Next, we obtain improved competitive ratios for different ranges of $p$ by executing our algorithm with $p$-specific thresholds, e.g., we provide $O(\log ^3 n)$-competitive ratio for all $p\in (\frac{-1}{\log 2n},1)$. 

We complement our positive results by establishing lower bounds to show that our guarantees are essentially tight for a wide range of the exponent parameter.
\end{abstract}

\clearpage

\thispagestyle{empty}
\tableofcontents
\clearpage

\setcounter{page}{1}

\section{Introduction}
\label{section:introduction}

Resource-allocation settings are ubiquitous and often require the assignment of resources (goods) that arrive over time. In particular, these settings require that each good gets allocated (among the participating agents) as it arrives, and one needs to make these allocative decisions without knowing the values of the pending goods. Consider, as an example, a food bank that distributes food donations (essentially a divisible good) every day among soup kitchens (agents)~\cite{prendergast2017food, aleksandrov2015online}. Here, the perishable nature of the good (food) mandates online allocations, and supply (and demand) variability leads to limited information about the future. The online model is also applicable in scheduling contexts wherein computational resources, which become available over time, have to be shared among users \cite{blazewicz2019handbook, pinedo2012scheduling, leung2004handbook}. 

In such settings---and resource-allocation contexts, in general---economic efficiency and fairness are fundamental objectives. Motivated by these considerations and application domains, such as the ones mentioned above, a growing body of work in recent years has been directed towards the study of online fair division \cite{aleksandrov2020online}. The current paper contributes to this thread of research with a welfarist perspective. 

Specifically, we provide a unified treatment of fairness and efficiency objectives by developing an online algorithm for finding allocations that maximize the \emph{generalized mean} of the values achieved by the agents. Formally, for exponent parameter $p \in \mathbb{R}$, the $p$th generalized mean of $n$ nonnegative values $\nu_1, \nu_2, \ldots, \nu_n$ is defined as $\left(\frac{1}{n} \sum_{i=1}^n \nu_i^p \right)^{1/p}$. As a family of objective functions (parameterized by $p$), generalized means encapsulate well-studied measures of economic efficiency as well as fairness: the $p=1$ case corresponds to average social welfare (arithmetic mean), which is a standard measure of economic efficiency. At the other end of the spectrum with $p \to -\infty$, the $p$th generalized mean corresponds to egalitarian welfare (the minimum value across the agents), a fundamental measure of fairness. Furthermore, when $p$ tends to zero, the generalized mean reduces to Nash social welfare (specifically, the geometric mean)--a prominent objective which achieves a balance between the extremes of social and egalitarian welfare. Notably, $p$-means, with $p \in (-\infty, 1]$, exactly constitute a family of functions characterized by a set of natural, fairness axioms, including the Pigou-Dalton principle \cite{moulin2004fair}. Therefore, $p$-mean welfare, with $p \leq 1$, provides us with an important and axiomatically justified family of objectives.

The current work focuses on the online allocation of divisible goods, i.e., goods that can be assigned fractionally among the $n$ participating agents. The divisible goods arrive online, one by one, and upon the arrival of each good $t$, every agent $i$ reports her (nonnegative) value, $v^t_i$, for the good. At this point the online algorithm fractionally distributes the good $t$ among the agents--if agent $i$ receives $x^t_i \in [0,1]$ fraction of the good, then her value increases by $x^t_i \ v^t_i$. We have $T$ goods overall and after the goods have been allocated, each agent $i$ achieves a value $\sum_{t=1}^T x^t_i v^t_i$. Considering exponent parameters $p \leq 1$, the online algorithm's objective is to compute allocations that (approximately) maximize the $p$-mean of the agents' values. Recall that the performance of online algorithms is established in terms of their competitive ratio; in the current context,  it is the worst case ratio (over all instances) between the optimal (offline) $p$-mean welfare and the welfare of the allocation computed online. 

Along with this standard model and performance metric, we work with a scaling assumption utilized in prior work in online fair division (see, e.g., \cite{gkatzelis2020fair}, \cite{bogomolnaia2019simple}, \cite{banerjee2020online}); in particular, we assume throughout that, for every agent, the cumulative value of the entire set of goods is equal to one. Note that if the agents' valuations are arbitrarily scaled, then a sub-linear competitive ratio is not possible, even for specific values of the parameter $p<1$; for unscaled valuations, Banerjee et al.~\cite{banerjee2020online} provide an example that rules out a sub-linear competitive ratio specifically for Nash social welfare. To circumvent such overly pessimistic negative results, prior work has worked with this scaling assumption ($\sum_{t=1}^T v^t_i = 1$ for all agents $i$) and we conform to it as well. It is relevant to note that Banerjee et al.~\cite{banerjee2020online} view this assumption in the framework of \emph{algorithms with predictions}; see \cite{mitzenmacher2020algorithms} for a textbook treatment of this topic. In this framework, one assumes that the algorithm has a priori (side) information about each agent's value for the grand bundle.\footnote{The result of Banerjee et al.~\cite{banerjee2020online} is robust to prediction errors. Extending the current work along these lines is an interesting direction of future work.} Another way to realize this scaling is via scrips: upfront, each agent receives scrips (tokens) of overall value one and can distribute the tokens online to indicate her values over the goods. Note that the widely-used platform {\tt spliddit.org} \cite{goldman2015spliddit} asks for the valuations to be entered with a scaling in place, albeit in an offline manner. Overall, subject to the above-mentioned scaling, our results hold in the adversarial model wherein the value of each good $t$ can be set by an adaptive adversary based upon the fractional assignments till round $(t-1)$. \\

\noindent
{\bf Related Work.} 
Fair division has been extensively studied for over seven decades \cite{brams1996fair, brandt2016handbook, moulin2004fair}. Since a detailed discussion on (offline) algorithms for fair division is beyond the scope of the current work, we will primarily focus on prior results on online algorithms and divisible goods. A work closely related to ours is that of Banerjee et al.~\cite{banerjee2020online}, who develop an $O(\log n)$-competitive online algorithm for maximizing Nash social welfare  over divisible goods. Furthermore, Banerjee et al.~provide an intricate lower bound showing that this competitive ratio is the best possible (up to an absolute constant) for Nash social welfare. The current work obtains an $O(\log^3 n)$ competitive ratio for Nash social welfare. Our algorithmic template, however, spans all $p \leq 1$, and provides tight---up to poly-log factor---competitive guarantees for a wide range of the exponent parameter. In terms of algorithm design, Banerjee et al.~\cite{banerjee2020online} utilize the primal-dual method to obtain a competitive guarantee. Applications of this design schema, in related online settings, include the work of Devanur and Jain~\cite{devanur2012online} along with Azar et al.~\cite{azar2010allocate}. By contrast, we develop a distinctive charging argument to design a single algorithmic template for all $p \leq 1$.    

Bounding envy is another well-studied goal in the literature on online fair division; see, e.g., \cite{gkatzelis2020fair,benade2018make,bogomolnaia2019simple, zeng2020fairness}, and \cite{aleksandrov2020online} for a survey. Complementary to these results, we address $p$-mean welfare.  

The current divisible-goods model captures machine scheduling with splittable jobs~\cite{jansen2021empowering, correa2015strong, epstein2006online}. In this setup one needs to schedule $T$ jobs among $n$ machines, and each job can be split into multiple parts that get assigned to different machines. In contrast to $p$-mean welfare maximization, the focus of these scheduling results is on makespan minimization.

\subsection{Our Results.} 

In this paper we develop both upper bounds and lower bounds for online welfare maximization.  \\

\noindent
{\bf Upper Bounds.}
Our online algorithm (see Section \ref{section:algorithm}) works with a given threshold $\fac$. Setting this threshold judiciously enables us to obtain both expansive and tailored competitive guarantees. 

First, we prove that a particular choice of the threshold (specifically $\fac = 8 \sqrt{n} \log (2n)$) leads to an online algorithm that achieves a universal competitive ratio of $O\left(\sqrt{n}\ \log n\right)$ for $p$-mean welfare maximization, simultaneously for all $p \leq 1$; here $n$ is the total number of agents. 

\begin{restatable}{theorem}{UniversalUpperBound}
\label{theorem:universal-comp-ratio}
For the $p$-mean welfare maximization problem---with divisible goods and scaled valuations---one can compute online a single allocation that $O\left(\sqrt{n}\ \log n\right)$ approximates the optimal $p$-mean welfare, simultaneously for all $p\leq 1$.
\end{restatable}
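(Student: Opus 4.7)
My plan is to design a single threshold-based online algorithm and analyse it via a charging argument that is uniform in $p \le 1$. Writing $V^{t-1}_i := \sum_{s<t} x^s_i v^s_i$ for the value that agent $i$ has accrued just before round $t$, the algorithm $\welf(\fac)$ with threshold $\fac$ would, upon the arrival of good $t$, identify a set of \emph{eligible} agents $E_t$ by comparing $V^{t-1}_i$ and $v^t_i$ against the scale $1/\fac$ (intuitively, agents with little current value relative to what this good would bring), and then fractionally allocate good $t$ among $E_t$ in a water-filling fashion capped at $1/\fac$ per agent. For the universal guarantee the threshold is set to $\fac = 8\sqrt{n}\log(2n)$.

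Second, I would establish two structural invariants on the produced allocation $\x$. A \emph{capacity} invariant upper-bounding each $V^T_i$ by $O(1/\fac)$, so that no agent monopolises; and a \emph{coverage} invariant stating that for every feasible comparison allocation $\mathbf{y}$ and every agent $i$, $V^T_i \;\ge\; \Omega\bigl( y_i / (\fac\sqrt{n}) \bigr)$, where $y_i := \sum_t y^t_i v^t_i$. The coverage invariant I would prove by a round-by-round deflection argument: whenever $\mathbf{y}$ credits $y^t_i v^t_i$ to agent $i$, either $i \in E_t$ and $\welf$ credits a comparable amount, or $i$ has been excluded because enough other agents have already been served up to the cap, in which case the charge is diverted to those saturated agents. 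The $\sqrt{n}$ overhead arises from a Cauchy--Schwarz step controlling how thinly a single good may be spread across eligible agents, with an additional $\log n$ factor from a doubling bucket over the scales of $V^{t-1}_i / v^t_i$. Applied with $\mathbf{y} = \opt$, the coverage bound feeds into $M_p$ directly for $p \in (0,1]$ via a term-by-term inequality that survives the $(\,\cdot\,)^{1/p}$ transform, and for $p \le 0$ it feeds in through the monotonicity $M_p \ge M_{-\infty}$ combined with the per-agent lower bound on $\min_i V^T_i$.

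The principal obstacle, I expect, is establishing the coverage lemma online and oblivious to $p$: at the moment good $t$ is allocated, $\welf$ does not know the benchmark allocation it will eventually be compared against, and the scaling $\sum_t v^t_i = 1$ is the only global handle. Leveraging this handle to deflect all future charges onto currently-saturated agents without losing more than a $\sqrt{n}\log n$ factor, and without specialising to any one $p$, is the crux; the rest is bookkeeping with power-mean inequalities. The choice $\fac = \Theta(\sqrt{n}\log n)$ is precisely the value that equates the loss from the capacity cap (scaling like $n/\fac$) with the loss from the deflection step (scaling like $\fac$), giving the stated $O(\sqrt{n}\log n)$ ratio uniformly in $p \le 1$.
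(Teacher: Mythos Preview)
Your plan has a genuine gap at the coverage step, and the way you close the argument for $p\le 0$ does not work.

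First, the quantitative bookkeeping does not add up. Your coverage invariant reads $V^T_i \ge \Omega\bigl(y_i/(\fac\sqrt{n})\bigr)$ with $\fac = \Theta(\sqrt{n}\log n)$, i.e.\ $V^T_i \ge \Omega\bigl(y_i/(n\log n)\bigr)$. That per-agent bound is \emph{trivially} delivered by the uniform split $x^t_i = 1/(2n)$ (since $y_i \le 1$), and feeding it term-by-term into ${\rm M}_p$ yields only an $O(n\log n)$ ratio, not $O(\sqrt{n}\log n)$. If you intended the stronger per-agent bound $V^T_i \ge \Omega(y_i/\fac)$---which would indeed give the theorem instantly for every $p$---that cannot hold for \emph{all} agents simultaneously: the egalitarian lower bound already shows that against an adaptive adversary some agent must fall below $y_i/n^{1/2-\varepsilon}$, but more to the point, the paper's analysis only obtains this inequality for all but $|S_\fac(\opt)| \le \Theta(\sqrt{n}\log n)$ agents (Lemma~\ref{commonlem:2}), and the entire difficulty lies in controlling those exceptional agents differently for different ranges of $p$. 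Your deflection/Cauchy--Schwarz sketch does not produce such a two-tier structure.

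Second, the reduction ``for $p\le 0$ use ${\rm M}_p \ge {\rm M}_{-\infty}$'' goes the wrong way. You would get ${\rm M}_p(\x) \ge {\rm M}_{-\infty}(\x)$, but the benchmark is ${\rm M}_p(\opt)$, which for $p$ near $0$ can exceed ${\rm M}_{-\infty}(\opt)$ by a factor of $\Theta(n)$ (take $\omega$ with one agent at $1/(2n)$ and the rest near $1$). So even a perfect egalitarian guarantee on $\x$ would not bound ${\rm M}_p(\opt)/{\rm M}_p(\x)$. The paper avoids this by never reducing to egalitarian: for $p\le -1$ it compares \emph{level-set cardinalities} $|\widehat L(\alpha,\x)|$ against $|L(\alpha,\opt)|$ (Lemma~\ref{uni:lem3}) and sums over dyadic scales; for $p\in(-1,1)\setminus\{0\}$ it bounds $|S_\fac(\opt)|$ and then applies a generalized-mean inequality tailored to the sign of $p$; for $p=0$ it uses the multiplicative structure of the geometric mean directly.

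Finally, your algorithmic sketch omits the ingredient that makes the charging argument close: the paper's algorithm maintains, for each dyadic target $\alpha$, a \emph{vulnerable} subset $B^\alpha_t$ of still-active agents whose \emph{remaining} total value (using the scaling $\sum_t v^t_i = 1$) is below $\alpha/4$, and routes a dedicated share of each good uniformly to $B^\alpha_t$. It is this forward-looking set, not a backward-looking water-filling cap, that allows one to bound $|B^\alpha_t \cap H(\alpha,\opt)|$ (Lemma~\ref{commonlem:1}) and hence the number of sub-optimal agents. A water-filling rule based only on accrued value $V^{t-1}_i$ has no handle on which agents are about to run out of value and cannot support the level-set bounds you would need.
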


Theorem \ref{theorem:universal-comp-ratio} is established in  Section \ref{section:universal}. This theorem in fact provides a novel guarantee specifically for egalitarian welfare ($p=-\infty$). Also, we note that Banerjee et al.~\cite{banerjee2020online} prove---via a direct example---that, for egalitarian-welfare maximization and any constant $\varepsilon>0$, there does not exist an online algorithm with competitive ratio $n^{1/2 - \varepsilon}$ (see also Corollary \ref{corollary:lower-bound-egalitarian} in Section~\ref{section:lower-bounds}). Hence, this lower bound ensures that, for egalitarian welfare, the competitive ratio of Theorem \ref{theorem:universal-comp-ratio} is tight, up to a log term. Moreover, the guaranteed existence of a single allocation that achieves a nontrivial $p$-mean welfare guarantee, simultaneously all $p \leq 1$, is interesting in its own right.  

We also obtain improved guarantees for different ranges of $p$. In particular, we show that executing our algorithm with $p$-specific thresholds $\fac$ leads to improved competitive ratios for a wide range of the exponent parameter $p \leq 1$. Section \ref{section:tight} details our upper bounds listed in Table \ref{table:UBLB}. Note that, for a wide range of the exponent parameter, the lower bounds in the table closely match the upper bounds. 

\begin{center}
\captionof{table}{Upper and lower bounds on the competitive ratio for $p$-mean welfare maximization. Here, the lower bounds hold for any constant $\varepsilon >0$.}
\label{table:UBLB}
\begin{tabular}{| m{3.2cm}| m{3cm} |m{3cm}  |m{3cm} | } 
\hline
Range of $p$ & Algorithm threshold $\fac$ & Upper Bound & Lower Bound \\
\hline
Egalitarian Welfare ($p=-\infty$)& $8\sqrt{n}\log(2n)$ &$O(\sqrt{n}\log n)$ \ \ \ \ (Section \ref{section:tight-egalitarian}) & $O\left(n^{1/2 - \varepsilon} \right)$ \cite{banerjee2020online} \\
\hline
Nash Social Welfare ($p=0$) & $8\log^3(2n)$ &$O(\log^3 n)$ \text{ }(Section \ref{sec:5.2}) & $O\left( \log^{1- \varepsilon} n \right)$ \cite{banerjee2020online} \\
\hline
$p\in (-\infty,-1]$& $8\sqrt{n}\log(2n)$ &$O(\sqrt{n}\log n)$ \ \ \ \ (Section \ref{sec:5.7})& $O\left(n^{\frac{|p|}{2|p|+1} - \varepsilon}\right)$ (Section \ref{sec:6.2})\\
\hline
$p\in (-1,\nicefrac{-1}{4}]$& $8 n^{\frac{|p|}{|p|+1}}\log^2(2n)$  &$O\left( n^{\frac{|p|}{|p|+1}}\log^2 n \right)$ (Section \ref{section:tight-minus-four-one}) & $O\left(n^{\frac{|p|}{2|p|+1} - \varepsilon}\right)$ (Section \ref{sec:6.2})\\
\hline
$p\in (\nicefrac{-1}{4},\frac{-1}{\log(2n)}]$& $8(2n)^{2|p|}\log^3(2n)$ &$O\left(n^{2|p|}\log^3 n\right)$ (Section \ref{sec:5.5}) & $2^{-\left( 2 + \nicefrac{2}{|p|} \right)} \cdot n^{\frac{|p|}{2|p|+1}}$ (Section \ref{sec:6.2})\\
\hline
$p\in (\frac{-1}{\log(2n)},0]$& $32\log^3(2n)$ &$O(\log^3 n)$ \text{ }(Section \ref{sec:5.4})& $ > 1$ (Section \ref{sec:6.1})\\

\hline
$p\in(0,1)$ & $16\log^3(2n)$ &$O(\log^3 n)$ \text{ }(Section \ref{sec:5.3}) & $>1$ (Section \ref{sec:6.1})\\
\hline
\end{tabular}
\end{center}

\noindent 
{\bf Lower Bounds.}
The following two theorems (proved in Section \ref{section:lower-bounds}) provide the lower bounds mentioned in Table \ref{table:UBLB}. Theorem \ref{theorem:lowerboundsubopt} shows that sub-optimality is unavoidable for $p <1$.\footnote{By contrast, for $p=1$, a greedy online algorithm (that assigns each good to the agent that values it the most) finds an allocation with maximum possible (average) social welfare.} Theorem \ref{theorem:lower-bound} provides a stronger negative result for all $p <0$. 

\begin{restatable}{theorem}{LowerBoundSubOpt}
\label{theorem:lowerboundsubopt}
For any $p <1$, the $p$-mean welfare maximization problem does not admit an online algorithm that computes optimal allocations, i.e., the competitive ratio of any online algorithm is strictly greater than one. 
\end{restatable}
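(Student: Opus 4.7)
The plan is to exhibit, for any deterministic online algorithm $\alg$, an explicit $2$-agent, $3$-good instance on which $\alg$'s $p$-mean is strictly less than the offline optimum; the adversary chooses later goods adaptively based on $\alg$'s first move. Specifically, the adversary first presents good $1$ with values $v^1_1 = v^1_2 = 1/2$, so $\alg$ must commit some fraction $x \in [0,1]$ of good $1$ to agent $1$. Using the symmetry of good $1$, I relabel agents so that $x \le 1/2$. The adversary then reveals good $2$ with values $(1/4, 1/2)$ and good $3$ with values $(1/4, 0)$. Each agent's cumulative value is $1$ (agent $1$: $1/2 + 1/4 + 1/4 = 1$; agent $2$: $1/2 + 1/2 + 0 = 1$), preserving the scaling assumption.

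For the offline benchmark, the allocation $(x^*, y^*, z^*) = (3/4, 0, 1)$ of goods $1, 2, 3$ to agent $1$ gives $V_1 = V_2 = 5/8$, so $\opt \ge 5/8$. For the online side, any $y, z \in [0,1]$ chosen by $\alg$ yields
\[
V_1 = \tfrac{x}{2} + \tfrac{y}{4} + \tfrac{z}{4}, \qquad V_2 = 1 - \tfrac{x}{2} - \tfrac{y}{2},
\]
whence $V_1 + V_2 = 1 + (z - y)/4 \le 5/4$, attained only at $(y, z) = (0, 1)$; at this unique maximizer $V_1 - V_2 = x - 3/4 < 0$ since $x \le 1/2$, so $V_1 \neq V_2$.

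The final step invokes the strict Schur-concavity of the $p$-mean for $p < 1$: among $(V_1, V_2)$ with fixed sum $s$, the $p$-mean is at most $s/2$, with equality iff $V_1 = V_2$. Hence at $(y, z) = (0, 1)$ the $p$-mean is strictly less than $5/8$, while at any other online allocation $V_1 + V_2 < 5/4$ forces $p$-mean $\le (V_1 + V_2)/2 < 5/8$. Therefore $\alg$'s $p$-mean is strictly below $5/8 \le \opt$, so the competitive ratio strictly exceeds $1$. The main obstacle is to make this inequality strict and uniform across all $p < 1$; this rests on the standard strict Schur-concavity of $p$-means for $p < 1$, together with a routine boundary check for the case that some $V_i$ vanishes (in which case the $p$-mean is only smaller and the argument goes through a fortiori).
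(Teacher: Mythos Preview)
Your proof is correct and takes a genuinely simpler route than the paper. The paper constructs a $4$-agent, $5$-good instance: two disjoint pairs of agents each share a good, the adversary then identifies in each pair the agent who received less, gives those two ``low'' agents a shared good, and finally gives each ``high'' agent a private good. The paper's analysis of the online allocation proceeds by repeatedly applying a Pigou--Dalton transfer argument (their Proposition~\ref{proposition:pigou-dalton}) to push the algorithm's value profile toward $(1/2,3/4,1/2,3/4)$ and then to $(5/8,5/8,5/8,5/8)$, showing strict improvement at each step.

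Your $2$-agent, $3$-good version is essentially a minimal instance carrying the same idea, and your analysis is cleaner: you observe that the sum $V_1+V_2$ is maximized uniquely at $(y,z)=(0,1)$, where the two values are provably unequal (since $x\le 1/2$ forces $x\neq 3/4$), and then invoke the power-mean inequality $M_p\le M_1$ (strict for $p<1$ when the entries differ) once, rather than chaining several transfers. This buys brevity and transparency; the paper's construction, by contrast, is already set up to tile into $n/4$ blocks for general $n$, though your $2$-agent block tiles just as well into $n/2$ copies. One small point the paper addresses that you do not: the model assumes $v_i^t\le 1/n^2$, which the paper handles by splitting each good into sub-rounds; the same fix applies verbatim to your instance.
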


\begin{restatable}{theorem}{LowerBoundGeneral} 
\label{theorem:lower-bound}
For any $p<0$, there does not exist an online algorithm with competitive ratio strictly less than $2^{-\left( 2 + \nicefrac{2}{|p|} \right)} \cdot n^{\frac{|p|}{2|p|+1}}$ for the $p$-mean welfare maximization problem. 
\end{restatable}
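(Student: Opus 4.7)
The plan is to establish, for each $p<0$, an adversarial family of instances that forces any online algorithm to achieve $p$-mean welfare at most $2^{-(2+2/|p|)} \cdot n^{|p|/(2|p|+1)}$ times the offline optimum. I would parameterize the construction by an integer $k$ (ultimately set to $k = \Theta(n^{1/(2|p|+1)})$) and design the instance in two phases. The first phase issues ``common'' goods valued uniformly by all agents, whose total mass per agent sums to $1 - 1/k$; by the symmetry of the values, a Yao-style argument against an adversary who selects the target subset uniformly at random lets me assume that any online algorithm allocates these goods in a $1/n$-uniform way. After phase~1, the adversary selects, based on the algorithm's allocations, a target subset $T \subseteq [n]$ of size $k$ consisting of the $k$ agents whose cumulative phase-1 share is smallest (by averaging, collectively no larger than a $k/n$ share of phase~1). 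The second phase then issues $(n-k)$ ``personal'' goods, each valued $1/k$ by a single non-target owner, plus a single ``shared'' good valued $1/k$ by each target, calibrated so that every agent's cumulative value $\sum_t v_i^t$ equals $1$ (satisfying the scaling assumption).

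The core of the analysis is to compare offline and online $p$-mean welfares. The offline optimum, knowing $T$ in advance, can allocate the phase-1 mass entirely onto the target agents, producing balanced values $V_i = 1/k$ for every agent and hence $W_{\mathrm{opt}} = 1/k = \Theta(n^{-1/(2|p|+1)})$. The online algorithm, by the symmetric-split assumption, instead assigns each target only $(1-1/k)/n$ in phase~1 and a $1/k$ share of the single shared good, yielding $V_T = \Theta(\max(1/n, 1/k^2))$ per target and $V_N = \Theta(1/k)$ per non-target. Substituting into $\bigl(\frac{1}{n}\sum_i V_i^p\bigr)^{1/p}$, the target contribution $k V_T^p$ and the non-target contribution $(n-k) V_N^p$ become comparable at the chosen $k = \Theta(n^{1/(2|p|+1)})$---this is precisely the calibration that pins down the exponent---and after extracting the $(1/p)$-root yields $W_{\mathrm{alg}} = O\bigl(2^{2 + 2/|p|} \cdot n^{-(|p|+1)/(2|p|+1)}\bigr)$. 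Taking the ratio $W_{\mathrm{opt}}/W_{\mathrm{alg}}$ then produces the claimed competitive-ratio bound.

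The main obstacle is the interaction between the scaling constraint and the adversary's freedom to choose $T$: because each agent's cumulative value must equal~$1$, the adversary cannot simply starve the target agents but must grant them (collectively) a shared phase-2 good of total mass $1/k$, leaving the online algorithm to split this mass among the $k$ targets and thereby producing the crucial $1/k^2$-sized target values $V_T$. The prefactor $2^{-(2+2/|p|)}$ will arise from two unavoidable factor-$1/2$ losses in the analysis---one from the pigeonhole on the symmetric phase-1 split, one from the residual $1/2$-fraction of phase-2 mass the adversary can actually isolate against the algorithm---and then raising the composite factor $1/4$ to the $1/|p|$-th power through the final $(1/p)$-root of the $p$-mean expression. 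A secondary subtlety is that the bound degenerates (becoming trivial) as $p \to 0^-$, which is consistent with Theorem~\ref{theorem:lowerboundsubopt} taking over in that regime.
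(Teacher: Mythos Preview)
Your construction does not deliver the claimed bound; it only forces a constant competitive ratio. The arithmetic error is in your balancing step: with $k=n^{1/(2|p|+1)}$, $V_T=\Theta(1/k^2)$, and $V_N=\Theta(1/k)$, the two contributions to $\sum_i V_i^{p}$ are
\[
k\,V_T^{p}=k^{\,2|p|+1}=n
\qquad\text{and}\qquad
(n-k)\,V_N^{p}\approx n\,k^{|p|}=n^{1+|p|/(2|p|+1)},
\]
so the non-target term dominates by exactly the factor $n^{|p|/(2|p|+1)}$ you were hoping to extract. Taking the $1/p$-th root then gives $W_{\mathrm{alg}}=\Theta(1/k)$, which matches your own $W_{\mathrm{opt}}=\Theta(1/k)$, and the ratio collapses to a constant.

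The structural reason the instance fails is that your phase-1 goods are valued by \emph{all} agents. Offline, the only way to help the $k$ targets is to hand them the entire phase-1 mass, giving each target $\Theta((1-1/k)/k)=\Theta(1/k)$; meanwhile the non-targets are stuck with personal goods worth only $1/k$. So the offline $p$-mean is pinned at $\Theta(1/k)$, and you have no separation. The paper's construction avoids this by a two-level design: the $n$ agents are partitioned into $n^{1-a}$ groups of size $n^{a}$ (with $a=\tfrac{|p|+1}{2|p|+1}$), and each phase-1 good is valued only by its own group. Offline can then concentrate a group's good on the small set of $n^{a/(|p|+1)}$ special agents inside that group, giving each special agent value $\Theta(n^{-1/(2|p|+1)})$, while every non-special agent receives a personal good worth $1/2$. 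This makes $W_{\mathrm{opt}}=\Theta(1)$, not $\Theta(1/k)$. The online algorithm, not knowing which agents in a group will be declared special, must spread each group good over $n^{a}$ agents, so the special agents it later faces have value at most $O(n^{-a})=O(n^{-(|p|+1)/(2|p|+1)})$; a second pigeonhole on the single shared phase-2 good then pins down the set $\mathcal{F}$ and yields the ratio $n^{|p|/(2|p|+1)}$. The grouping---absent from your proposal---is the essential idea.
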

\section{Notation and Preliminaries}
We study the problem of allocating $T$ divisible goods among $n$ agents in an online manner. Let $[n] \coloneqq \{1,\ldots, n\}$ denote the set of agents and $[T] \coloneqq \{1,\ldots, T\}$ denote the set of goods. The divisible goods arrive online, one by one, in $T$ rounds overall. The value of good $t \in T$, for every agent $i \in [n]$, is revealed only when the good arrives (i.e., in round $t$); specifically, let $v_i^t \in \mathbb{R}_{\geq 0}$ be the (nonnegative) value of good $t \in [T]$ for agent $i \in [n]$.

For each good $t$, the online algorithm must make an irrevocable allocation decision, i.e., assign the good fractionally among the agents. Let  $x_i^t \in [0,1]$ denote the fraction of the good $t \in [T]$ assigned to agent $i \in [n]$. Note that at most one unit of any good $t$ is assigned among the agents: $\sum_{i=1}^n x^t_i \leq 1$. Furthermore, for each agent $i \in [n]$, a \emph{bundle} $x_i  \coloneqq (x_i^t)_{t\in[T]} \in [0,1]^{T}$ refers to a tuple that denotes the fractional assignments of all the $T$ goods to agent $i$. 

An \emph{allocation} $\x = (x_i)_{i\in[n]} \in {[0,1]^{n\times T}}$ refers to a fractional assignment of the goods among all the agents such that no more than one unit of any good is assigned.\footnote{The online algorithm has an allocation (of all $T$ goods) in hand only after the completion of all the $T$ rounds.}  

Throughout, we will assume that all the agents have {\em additive valuations}; in particular, for bundle $x_i  = (x_i^t)_{t\in[T]}\in [0,1]^{T}$, agent $i$'s valuation $v_i(x_i) \coloneqq \sum_{t=1}^T x^t_i  v^t_i$. 

In this work, the extent of fairness and economic efficiency of allocations is measured by the \emph{generalized means} of the values that the allocations generate among the agents. Specifically, with exponent parameter parameter {$p \in (-\infty, 1]$}, the $p$th generalized mean of $n$ nonnegative numbers $\nu_1, \nu_2, \ldots, \nu_n \in \mathbb{R}_{\geq 0}$ is defined as 
\begin{align*}
{\rm M}_p(\nu_1, \ldots, \nu_n) & \coloneqq \left( \frac{1}{n} \sum \limits_{i=1}^n \nu_i^p \right)^\frac{1}{p}.
\end{align*} 
The generalized means ${\rm M}_p(\cdot)$ with {$p  \leq 1$}, constitute a family of functions that capture multiple fairness and efficiency measures: ${\rm M}_p(\cdot)$ corresponds to the arithmetic mean when $p =1$, and as $p$ tends to zero, ${\rm M}_p$---in the limit---is equal to the geometric mean. Also, $\lim_{p \to -\infty} {\rm M}_p(\nu_1, \ldots, \nu_n) = \min \left\{\nu_1, \ldots, \nu_n \right\}$. Therefore, following standard convention, we will write ${\rm M}_0 (\nu_1, \ldots, \nu_n) \coloneqq  \left( \prod_{i=1}^n \nu_i  \right)^{1/n}$ and ${\rm M}_{-\infty} (\nu_1, \ldots, \nu_n) \coloneqq \min \left\{\nu_1, \ldots, \nu_n \right\}$. 

Considering generalized means as a parameterized family of welfare objectives, we define the \emph{$p$-mean welfare} ${\rm M}_p(\x)$ of an allocation $\x = (x_i)_{i\in[n]} \in {[0,1]}^{n\times T}$ as 
\begin{align*}
{\rm M}_p (\x) \coloneqq {\rm M}_p \left( v_1(x_1), v_2(x_2), \ldots, v_n(x_n) \right) = \left( \frac{1}{n} \sum_{i=1}^n v_i(x_i)^p \right)^{1/p}
\end{align*}
Here, ${\rm M}_1(\x)$ denotes the \emph{average social welfare} of allocation $\x$ and ${\rm M}_0(\x)$ denotes the allocation's \emph{Nash social welfare}, ${\rm M}_0(\x) =  \left( \prod_{i=1}^n v_i(x_i) \right)^{1/n}$. In addition, ${\rm M}_{-\infty}(\x)$ denotes the \emph{egalitarian welfare}, ${\rm M}_{-\infty}(\x) = \min_{i \in [n]} \ v_i(x_i)$.   

We will assume throughout that for every agent the valuation of the grand bundle of goods $[T]$ is equal to one, i.e., for each $i \in [n]$, we have $\sum_{t=1}^T v_i^t=1$. See Section \ref{section:introduction} for a discussion on this scaling assumption. Also, without loss of generality, we will assume that, for all agents $i \in [n]$ and goods $t \in[T]$, the value $v_i^t \leq \frac{1}{n^2}$. This condition can be achieved online by considering---for each good---$n^2$ identical copies of value $1/n^2$ times the value of the underlying good. Note that our results hold in the adversarial model wherein the value of a good $t$ can be set by an adaptive adversary (subject to the mentioned scaling) based upon the fractional assignments till round $(t-1)$.

\section{Online Algorithm}
\label{section:algorithm}

This section details our online algorithm, $\alg(\fac)$, for maximizing $p$-mean welfare, for $p \le 1$. The  algorithm operates with a given threshold $\fac$ and upon the arrival of each good $t$, it distributes half of the good uniformly among the agents, i.e., the algorithm initializes fractional assignment $x^t_i = \frac{1}{2} \cdot \frac{1}{n}$, for each agent $i \in [n]$; see Line 3 in the algorithm and note that this initialization satisfies $\sum_{i=1}^n x^t_i = 1/2$. The remaining half of the good is further divided into $\log(2n)$ equal parts; in particular, for each $\alpha \in \left\{ \frac{1}{2^k} : 1 \leq k \leq \log (2n) \right\}$, the algorithm sets fractional assignments $x^{t, \alpha}_i \in [0,1]$ across the agents such that $\sum_{i=1}^n  x^{t, \alpha}_i  = \frac{1}{2} \cdot \frac{1}{\log (2n)}$. Note that such fractional assignments ensure that overall one unit of the good $t$ is assigned across the agents. 

For each $\alpha \in \left\{ \frac{1}{2^k} : 1 \leq k \leq \log (2n) \right\}$, to distribute $\frac{1}{2 \log (2n)}$ fraction of the good $t$, the algorithm maintains two subsets of agents:  $A_t^\alpha$, referred to as the active set of agents, and $B_t^\alpha$, a ``vulnerable'' subset of agents. For each $\alpha$ and good $t$, an agent $i$ is said to be active and, hence, included in the $A_t^\alpha$, iff so far from the $\alpha$ part agent $i$ has received value less than $\alpha/\fac$, i.e., iff {$\sum_{s=1}^{t-1} v_i^s x^{s, \alpha}_i <\alpha/\fac$}. For any $\alpha$, only active agents continue to receive nonzero fractional assignments. Specifically, for the active agent $a$ that values the current good $t$ the most (see Line \ref{line:max-active-agent}), we set $x^{t, \alpha}_a = \frac{1}{2} \cdot \frac{1}{2 \log (2n)}$. Furthermore, the algorithm considers a subset of the active agents $B_t^\alpha \subseteq A^\alpha_t$ for whom  the pending goods cumulatively have value less than $\alpha/4$; see Line \ref{line:prop-alpha} and recall the scaling that $\sum_{s=1}^T v^s_\ell = 1$ for all agents $\ell$. The remaining ($\frac{1}{4 \log (2n)}$) part of the good is uniformly distributed among the agents in $B^\alpha_t$. At a high level, the algorithm maintains a set of active agents (who have yet to receive a sufficiently high value) and a subset of vulnerable agents (for whom limited value is left among the pending goods). The algorithm then strikes a balance between greedily assigning the good (Line \ref{line:max-active-agent}) and uniformly distributing it among the vulnerable agents (Line \ref{line:prop-alpha}).  Indeed, the algorithm is computationally efficient and conceptually simple--we consider this as a relevant contribution, since such features lend the algorithm to large-scale implementations and explainable adaptations.

\begin{algorithm}[!htbp]
	\caption{\alg($\fac$)} 
	\begin{algorithmic}[1]
\STATE Initialize index $t=1$ and, for each $\alpha \in \left\{ \frac{1}{2^k} : 1 \leq k \leq \log (2n) \right\}$, initialize sets $A^\alpha_t = [n]$ and $B^\alpha_t= \emptyset$.
\FOR{each good $t=1$ to $T$}
\STATE Initialize fractional assignment $x^t_i = \frac{1}{2n}$, for each agent $i \in [n]$ \label{line:prop}.
\FOR{all $k=1$ to $\log(2n)$}
        \STATE Set $\alpha = \frac{1}{2^k}$ and {initialize fractional assignment $x^{t,\alpha}_i=0$, for each agent $i \in [n]$}.
		\STATE Select agent $a = \argmax_{j\in A_t^\alpha} v_j^t$ and assign $x_a^{t,\alpha} = \frac{1}{4\log(2n)}$ \label{line:max-active-agent}.
		\STATE {For each agent $i \in B_t^\alpha$, update $x_i^{t,\alpha} \gets x_i^{t,\alpha} + \frac{1}{4\log(2n)} \frac{1} {|B_t^\alpha|}$} \label{line:prop-alpha}.
		\STATE Set $A_{t+1}^\alpha =  A_t^\alpha\setminus \left\{j \in [n] : \sum_{s=1}^t v_j^{s}x_j^{s ,\alpha}\geq \frac{\alpha}{\fac} \right\}$ \label{line:atalpha}.
		\STATE Set $B_{t+1}^\alpha= \left\{\ell\in A_{t+1}^\alpha: \sum_{s=1}^t v_\ell^{s} > 1-\frac{\alpha}{4} \right\}$ \label{line:btalpha}.
		\ENDFOR
		\STATE Update $x_i^t \gets x_i^t  + \sum_{\alpha} x_i^{t,\alpha}$ for all agents $i \in [n]$.
		\ENDFOR
		\RETURN allocation $\x = \left( x_i^t \right)_{i,t}$.
	\end{algorithmic}
\end{algorithm}

Write $\x = (x_i)_{i \in [n]} \in [0,1]^{n \times T}$ to denote the allocation returned by $\alg(\fac)$; here, $x_i = (x^t_i)_{t \in [T]} \in [0,1]^T$ is the bundle assigned to agent $i \in [n]$.

Also, let $\opt = (\omega_i)_{i\in[n]}$ be an arbitrary allocation wherein each agent $i \in [n]$ receives a bundle of value at least $1/2n$ but less than $1$, i.e., $\frac{1}{2n}\leq v_i(\omega_i) < 1$. Next, we establish lemmas that hold for any such allocation $\opt$ and any threshold $\fac \leq n/4$. We will instantiate these lemmas with judicious choices of threshold $\fac$ and for different values of the exponent parameter $p$; in each ($p$-specific) invocation, we will consider $\opt$ as an allocation that approximately maximizes the $p$-mean welfare.
\begin{remark}
\label{remark:near-opt}
Note that for each $p$, there exists an allocation $\opt=(\omega_i)_i$ such that $\frac{1}{2n}\leq v_i(\omega_i) <1$ and the $p$-mean welfare of $\opt$ is at least half of the optimal $p$-mean welfare:\footnote{Here, the upper bound $v_i(\omega_i) <1$ follows from the scaling assumption.} fix any $p \leq 1$, write $\widehat{\opt} = \left(\widehat{\omega}^t_i \right)_{i, t}$ to denote an allocation that maximizes the $p$-mean welfare, and set $\omega_i^t = \frac{1}{2n} + \frac{1}{2} \widehat{\omega}_i^t$, for all $i$ and $t$.
\end{remark}

At a high level, we aim to bound the number agents that---in allocation $\x$---achieve value  smaller than what they achieve in $\opt$. Upper bounding the number of such sub-optimal agents will enable us to establish $p$-mean welfare guarantees in subsequent sections. 

For each $\alpha \in \left\{ \frac{1}{2^k} : k \in [\log (2n)] \right\}$, let $H(\alpha ,\opt)$ denote the subset of agents that achieve value at least $\alpha$ in allocation $\opt$, i.e., $H(\alpha ,\opt) \coloneqq \left\{ i \in [n] : v_i(\omega_i) \geq \alpha \right\}$. Lemma \ref{commonlem:1} (below) shows that, at any point of time, at most $\frac{8n\log(2n)}{\fac}$ of such high-valued agents are included in the set $B^\alpha_t$ (populated in Line \ref{line:btalpha} {of \alg($\fac$)}).  

Complementary to the set $H(\alpha, \opt)$, we define the set of agents $L(\alpha, \opt) \coloneqq \{i\in[n] : v_i(\omega_i)<\alpha\}$.  Also, let  $\widehat{L} (\alpha, \x)$   denote the set of agents that achieve value less than $\frac{\alpha}{8 \fac}$ in $\x$, i.e., $\widehat{L} (\alpha, \x) \coloneqq \left\{i\in[n] : v_i(x_i)<\frac{\alpha}{8 \fac} \right\}$. Lemma \ref{common:lem3} (below) relates the number of such low-valued agents in the respective allocations. 

In addition, comparing the values that agents receive in these two allocations, we will consider the subset of agents, $S_\fac(\opt)$, that are $(2\fac)$-sub-optimal in $\x$; write $S_\fac (\opt) \coloneqq \left\{ i \in [n] : v_i(x_i) < \frac{1}{2\fac} v_i(\omega_i) \right\}$. 

\begin{lemma}\label{commonlem:1}
For any iteration $t \leq T$ of $\alg(\fac)$ and any $\alpha  \in \left\{ \frac{1}{2^k} : 1 \leq k \leq \log(2n) \right\}$, we have 
\begin{align*}
\left|B_t^\alpha \cap H(\alpha ,\opt) \right| \leq \frac{8n \log(2n)}{\fac}.    
\end{align*}
\end{lemma}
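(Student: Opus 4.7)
The plan is to bound $|D|$, where $D := B_t^\alpha \cap H(\alpha, \opt)$, by sandwiching the quantity $\Sigma := \sum_{j \in D} \sum_{s < t} v_j^s \omega_j^s$ between a combinatorial lower bound and a charging-based upper bound. First, $B_t^\alpha \subseteq A_t^\alpha$ by Line 9, and the active sets are non-increasing in the round index by Line 8, so $D \subseteq A_s^\alpha$ for every $s \leq t$; I will use this containment throughout.

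For the lower bound on $\Sigma$, fix $j \in D$. Membership in $H(\alpha, \opt)$ yields $v_j(\omega_j) = \sum_s v_j^s \omega_j^s \geq \alpha$, while membership in $B_t^\alpha$ combined with the scaling $\sum_{s=1}^T v_j^s = 1$ forces $\sum_{s \geq t} v_j^s < \alpha/4$. Using $\omega_j^s \in [0,1]$ then gives $\sum_{s < t} v_j^s \omega_j^s \geq \alpha - \alpha/4 = 3\alpha/4$, and summing over $j \in D$ produces $\Sigma \geq 3\alpha |D|/4$.

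For the upper bound, the greedy step (Line 6) together with $D \subseteq A_s^\alpha$ ensures $v_{a_s}^s \geq v_j^s$ for every $j \in D$ and every $s < t$. Combined with the feasibility inequality $\sum_{j \in D} \omega_j^s \leq \sum_{j \in [n]} \omega_j^s \leq 1$, this gives $\sum_{j \in D} v_j^s \omega_j^s \leq v_{a_s}^s \cdot \sum_{j \in D} \omega_j^s \leq v_{a_s}^s$, and summing over $s < t$ yields $\Sigma \leq \sum_{s < t} v_{a_s}^s$.

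The main obstacle, and the final step, is to upper bound $\sum_{s < t} v_{a_s}^s$ by $O(n\alpha \log(2n)/\fac)$. I would do this by a capacity argument on the $\alpha$-part ledgers: in round $s$ the greedy agent $a_s$ is credited $v_{a_s}^s/(4\log(2n))$ in her $\alpha$-ledger, and Line 8 removes an agent from $A^\alpha$ as soon as this ledger reaches $\alpha/\fac$. The standing assumption $v_i^s \leq 1/n^2$ together with $\alpha \geq 1/(2n)$ and the hypothesis $\fac \leq n/4$ makes the overshoot at the moment of eviction a lower-order term compared with the threshold $\alpha/\fac$, so every agent accumulates essentially at most $\alpha/\fac$ worth of $\alpha$-part value across the entire run. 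Summing over the $n$ agents caps the total $\alpha$-part value credited to argmax agents by roughly $n\alpha/\fac$; multiplying by $4\log(2n)$ gives the desired bound on $\sum_{s<t} v_{a_s}^s$. Combining this with $\Sigma \geq 3\alpha|D|/4$ yields $|D| \leq 8 n \log(2n)/\fac$ after collecting the constants, as claimed.
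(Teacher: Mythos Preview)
Your proposal is correct and follows essentially the same approach as the paper: both sandwich $\sum_{j \in D}\sum_{s} v_j^s \omega_j^s$ between the $3\alpha|D|/4$ lower bound (from $H(\alpha,\opt)$ and $B_t^\alpha$ membership) and a charging upper bound that routes through $\sum_s v_{a_s}^s$ and then caps this via the per-agent $\alpha$-ledger capacity $\alpha/\fac$ plus a $1/n^2$ overshoot. The paper writes the capacity step as the explicit chain $\sum_s v_{a_s}^s \le 4\log(2n)\sum_s x_{a_s}^{s,\alpha} v_{a_s}^s \le 4\log(2n)\sum_{j}\sum_s x_j^{s,\alpha} v_j^s \le 4\log(2n)\cdot n\cdot \tfrac{3\alpha}{2\fac}$, which is exactly the ``ledger'' argument you sketch, and the constants collapse to the same $8n\log(2n)/\fac$.
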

\begin{proof}
Fix any iteration (good) $t$ and $\alpha \in \left\{ \frac{1}{2^k} : 1 \leq k \leq \log(2n) \right\}$. Note that, if agent $i \in B^\alpha_t$, then $\sum_{s=t+1}^T v_i^{s} \leq \alpha/4$; see Line \ref{line:btalpha} {in \alg($\fac$)} and recall that the valuations are scaled to satisfy $\sum_{t'=1}^T v^{t'}_i = 1$. Furthermore, for each agent $i \in H(\alpha ,\opt)$, by definition, the value $v_i(\omega_i) = \sum_{s=1}^T \omega_i^s v_i^s \geq \alpha$. These inequalities imply that, for each agent $i \in H(\alpha ,\opt) \cap B^\alpha_t$, we have  
\begin{align}
\sum_{s=1}^t \omega_i^s v_i^s \geq \frac{3 \alpha}{4} \label{ineq:highenough}
\end{align}

Given that agent $i$ is active during iteration $t$ (specifically, $i \in B^\alpha_t \subseteq A^\alpha_t$), agent $i$ must have been active ($i \in A^\alpha_s$) during all previous iterations $s \leq t$. Next, write $a^{\alpha}_s$ to denote the agent that was selected among active agents in iteration $s$ (see Line \ref{line:max-active-agent} {in \alg($\fac$)}), $a^\alpha_s \coloneqq \argmax_{j \in A_{s}^\alpha} v_j^{s}$. The fact that agent $i \in A^\alpha_s$ gives us $v^s_i \leq v^s_{a^\alpha_s}$ for each $s \leq t$. Using this bound and equation (\ref{ineq:highenough}) we get
\begingroup
\allowdisplaybreaks
\begin{align}
\frac{3 \alpha}{4} \left| B_t^\alpha\cap H(\alpha ,\opt) \right| & \leq \sum_{i \in B_t^\alpha \cap H(\alpha ,\opt)} \left( \sum_{s=1}^t\omega_i^{s}v_i^{s} \right) \tag{via (\ref{ineq:highenough})} \nonumber \\
& = \sum_{s=1}^t \ \sum_{i\in B_t^\alpha\cap H(\alpha ,\opt)} \omega_i^{s} \ v_i^{s} \nonumber \\
&\leq \sum_{s=1}^t \ \sum_{i \in B_t^\alpha\cap H(\alpha ,\opt)} \omega_i^{s} \  v_{a^\alpha_s}^{s} \tag{since $v_i^{s} \leq v^s_{a^\alpha_{s}}$} \nonumber \\
& \leq \sum_{s=1}^t v_{a^\alpha_s}^{s}  \tag{since $\sum_i \omega_i^{s}\leq 1$ for each $s$} \nonumber \\
& \leq \sum_{s=1}^t  4\log (2n) \ x_{a^\alpha_s}^{s,\alpha} \ v_{a^\alpha_s}^{s} \tag{since $x_{a^\alpha_s}^{s,\alpha}\geq \frac{1}{4\log (2n)}$; Line \ref{line:max-active-agent} in \alg($\fac$)} \nonumber \\
& = 4 \log (2n) \sum_{s=1}^t  x_{a^\alpha_s}^{s,\alpha} \ v_{a^\alpha_s}^{s} \nonumber \\
&  \leq 4 \log (2n) \sum_{s=1}^t \left( \sum_{j=1}^n x_j^{s,\alpha} \ v_j^{s} \right) \nonumber \\
&  = 4 \log (2n) \sum_{j=1}^n \ \sum_{s=1}^t x_j^{s,\alpha} \ v_j^{s} \label{ineq:social-welf}
\end{align}
\endgroup

We next bound the right-hand-side of the previous inequality by showing that $\sum_{s=1}^t x_j^{s,\alpha} \ v_j^{s} \leq \frac{3\alpha}{2\fac}$, for all agents $j \in [n]$. Note that if $j \in A^\alpha_t$, then $\sum_{s=1}^t x_j^{s,\alpha} \ v_j^{s} \leq \frac{\alpha}{\fac}$. Otherwise, if $j \in [n] \setminus A^\alpha_t$, then $j$ was removed from the active set $A^\alpha_{r}$ during some iteration $r \leq t$ and we have $\sum_{s=1}^{t} x_j^{s,\alpha} \ v_j^{s} \leq \sum_{s=1}^{r-1} x_j^{s,\alpha}\ v_j^{s} + v^{r}_j \leq \frac{\alpha}{\fac}+\frac{1}{n^2}$; the last inequality follows from the fact that the goods have value at most $1/n^2$. Since $\fac\leq n/4$ and $\alpha \geq  \frac{1}{2n}$, we get, for all agents  $j \in [n]$:
\begin{align}
    \sum_{s=1}^t x_j^{s,\alpha} \ v_j^{s} \leq \frac{\alpha}{\fac}+\frac{1}{n^2} \leq \frac{3 \alpha}{2 \fac} \label{ineq:allagents}
\end{align} 

Equations (\ref{ineq:social-welf}) and (\ref{ineq:allagents}) give us $\frac{3 \alpha}{4} \left| B_t^\alpha\cap H(\alpha ,\opt) \right| \leq 4 \log (2n) \ \sum_{j=1}^n \left( \frac{3 \alpha}{2 \fac} \right) = \frac{6 \alpha n \log (2n)}{\fac}$. Simplifying we obtain the desired bound $|B_t^\alpha\cap H(\alpha ,\opt)|\leq \frac{8n\log(2n)}{\fac}$. 
\end{proof}

Recall that $L(\alpha, \opt) \coloneqq \{i\in[n] : v_i(\omega_i)<\alpha\}$ and $\widehat{L} (\alpha, \x) \coloneqq \left\{i\in[n] : v_i(x_i)<\frac{\alpha}{8 \fac} \right\}$.

\begin{lemma}
\label{common:lem3}
For any $\alpha  \in \left\{ \frac{1}{2^k} : 1 \leq k \leq \log(2n) \right\}$ we have $|\widehat{L} \left( \alpha, \x \right)| \leq |L(\alpha, \opt)| + \frac{8n \log(2n)}{\fac}$.
\end{lemma}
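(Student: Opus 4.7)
The plan is to decompose $\widehat{L}(\alpha,\x)$ according to the partition $[n] = L(\alpha,\opt)\sqcup H(\alpha,\opt)$ and then invoke Lemma \ref{commonlem:1} to control the ``high in $\opt$ but low in $\x$'' part. Writing
\[
|\widehat{L}(\alpha,\x)| \;=\; |\widehat{L}(\alpha,\x) \cap L(\alpha,\opt)| \;+\; |\widehat{L}(\alpha,\x) \cap H(\alpha,\opt)|,
\]
the first summand is trivially at most $|L(\alpha,\opt)|$, so the entire task reduces to proving $|\widehat{L}(\alpha,\x) \cap H(\alpha,\opt)| \le 8n\log(2n)/\fac$.

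To get that bound, I would show the set inclusion $\widehat{L}(\alpha,\x) \cap H(\alpha,\opt) \subseteq B^\alpha_{T+1}$ and then apply Lemma \ref{commonlem:1}. Fix any such agent $i$. Because $x_i^{s,\alpha}$ is one of the pieces summed into $x_i^s$ (see Line~11 of $\alg(\fac)$), we have $x_i^{s,\alpha} \le x_i^s$ for every $s$, and hence
\[
\sum_{s=1}^{T} v_i^s\, x_i^{s,\alpha} \;\le\; \sum_{s=1}^{T} v_i^s\, x_i^s \;=\; v_i(x_i) \;<\; \frac{\alpha}{8\fac} \;<\; \frac{\alpha}{\fac}.
\]
Therefore the removal test in Line \ref{line:atalpha} is never triggered for $i$, and $i \in A^\alpha_{T+1}$. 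On the other hand, the scaling assumption gives $\sum_{s=1}^T v_i^s = 1 > 1 - \alpha/4$, so the threshold in Line \ref{line:btalpha} is met and $i \in B^\alpha_{T+1}$.

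The only subtle point is that Lemma \ref{commonlem:1} is stated for iterations $t \le T$, whereas I want to apply it at $t=T+1$. I would handle this either by noting that the proof of Lemma \ref{commonlem:1} goes through verbatim at $t=T+1$ (the hypothesis $\sum_{s=t+1}^T v_i^s \le \alpha/4$ becomes the empty sum $0 \le \alpha/4$), or, alternatively, by applying the lemma at $t=T$ after observing that $v_i^T \le 1/n^2 \le \alpha/4$ (using $\alpha \ge 1/(2n)$), which yields $\sum_{s=1}^{T-1} v_i^s \ge 1 - 1/n^2 > 1-\alpha/4$ and hence $i \in B^\alpha_T$. Either route produces $|\widehat{L}(\alpha,\x) \cap H(\alpha,\opt)| \le |B^\alpha_{\cdot} \cap H(\alpha,\opt)| \le 8n\log(2n)/\fac$, finishing the proof. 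The main conceptual step is the slack factor $8$ in the definition of $\widehat{L}$: it is exactly what allows the sum $\sum_s v_i^s x_i^{s,\alpha}$ to stay strictly below the deactivation threshold $\alpha/\fac$, preventing $i$ from ever leaving the active set.
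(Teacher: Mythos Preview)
Your proposal is correct and follows essentially the same approach as the paper: decompose $\widehat{L}(\alpha,\x)$ over the partition $L(\alpha,\opt)\sqcup H(\alpha,\opt)$, show that every agent in $\widehat{L}(\alpha,\x)$ lies in $B^\alpha_T$ (using $\sum_s v_i^s x_i^{s,\alpha}\le v_i(x_i)<\alpha/\fac$ for activeness and $v_i^T\le 1/n^2\le\alpha/4$ for the $B$-threshold), and then invoke Lemma~\ref{commonlem:1}. The paper works directly at $t=T$ via the last-good bound, which is exactly your second route; your discussion of the $T{+}1$ option is an extra but harmless aside.
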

\begin{proof}
We begin by showing that $\widehat{L}(\alpha, \x) \subseteq B^\alpha_T$: consider any agent $i \in \widehat{L}(\alpha, \x)$; by definition of this set, $ v_i(x_i) <  \frac{\alpha}{8 \fac}$ and, hence, agent $i \in A_T^\alpha$. In addition, given that the value of the good in the last round is at most $\frac{1}{n^2} \leq \frac{\alpha}{4}$, we get $i\in B_T^\alpha$. Since this containment holds for each agent $i \in \widehat{L}(\alpha, \x)$, we obtain $\widehat{L}(\alpha, \x) \subseteq B_T^\alpha$. 

This containment and Lemma \ref{commonlem:1} lead to the stated bound: 
\begin{align*}
|\widehat{L}(\alpha, \x)| & = |\widehat{L}(\alpha, \x) \cap L(\alpha ,\opt)| + |\widehat{L}(\alpha, \x) \cap H(\alpha ,\opt)| \tag{$H(\alpha, \opt)$ and $L(\alpha, \opt)$ partition $[n]$}\\ 
& \leq |L(\alpha ,\opt)| + |\widehat{L}(\alpha, \x) \cap H(\alpha ,\opt)| \\ 
& \leq |L(\alpha ,\opt)| + |B_T^\alpha \cap H(\alpha ,\opt)| \tag{since $\widehat{L}(\alpha, \x) \subseteq B_T^\alpha$} \\
& \leq |L(\alpha ,\opt)| + \frac{8 n \log (2n)}{\fac} \tag{Lemma \ref{commonlem:1}} 
\end{align*}
This completes the proof. 
\end{proof}

Recall that $S_\fac (\opt) \coloneqq \left\{ i \in [n] : v_i(x_i) < \frac{1}{2\fac} v_i(\omega_i) \right\}$. The following lemma establishes that the number of such sub-optimal agents decreases linearly with the chosen parameter $\fac$. 

\begin{lemma}\label{commonlem:2}
The number of sub-optimal agents $\left|S_\fac (\opt) \right|\leq\frac{8n \log^2(2n)}{\fac}$.
\end{lemma}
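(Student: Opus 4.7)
The plan is to reduce the claim to $\log(2n)$ applications of Lemma~\ref{commonlem:1}, one for each dyadic level $\alpha \in \{1/2^k : 1 \leq k \leq \log(2n)\}$. For each agent $i \in S_\fac(\opt)$, I would let $\alpha_i$ denote the largest element of this set satisfying $\alpha_i \leq v_i(\omega_i)$. The assumption $\frac{1}{2n} \leq v_i(\omega_i) < 1$ (guaranteed by the choice of $\opt$, cf.\ Remark~\ref{remark:near-opt}) ensures that $\alpha_i$ is well-defined and moreover obeys the two-sided bound $\alpha_i \leq v_i(\omega_i) < 2\alpha_i$. Setting $C_\alpha \coloneqq \{i \in S_\fac(\opt) : \alpha_i = \alpha\}$, the sets $\{C_\alpha\}_\alpha$ partition $S_\fac(\opt)$ into at most $\log(2n)$ classes, so it suffices to establish $|C_\alpha| \leq 8n\log(2n)/\fac$ for each $\alpha$.

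The key step is to show that $C_\alpha \subseteq B_T^\alpha \cap H(\alpha, \opt)$, which will let me invoke Lemma~\ref{commonlem:1} directly. Membership in $H(\alpha, \opt)$ is immediate from $v_i(\omega_i) \geq \alpha_i = \alpha$. For membership in $B_T^\alpha$, I would first note that the algorithm satisfies $x_i^t \geq x_i^{t,\alpha}$ for every $t$ (from Lines~\ref{line:prop} and~\ref{line:prop-alpha}), and therefore $v_i(x_i) \geq \sum_{t=1}^T v_i^t x_i^{t,\alpha}$. Combining this with the defining inequality $v_i(x_i) < \frac{1}{2\fac} v_i(\omega_i)$ of $S_\fac(\opt)$ and the upper bound $v_i(\omega_i) < 2\alpha$ yields $\sum_{t=1}^{T-1} v_i^t x_i^{t,\alpha} \leq v_i(x_i) < \alpha/\fac$, so $i \in A_T^\alpha$. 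The remaining residual-value condition $\sum_{s=1}^{T-1} v_i^s > 1 - \alpha/4$ needed to place $i$ in $B_T^\alpha$ then follows exactly as in the proof of Lemma~\ref{common:lem3}, from the scaling $\sum_{s=1}^T v_i^s = 1$ and the per-good cap $v_i^T \leq 1/n^2 \leq \alpha/4$.

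Once the containment $C_\alpha \subseteq B_T^\alpha \cap H(\alpha, \opt)$ is established, Lemma~\ref{commonlem:1} gives $|C_\alpha| \leq 8n\log(2n)/\fac$, and summing over the (at most) $\log(2n)$ dyadic levels produces the desired bound $|S_\fac(\opt)| \leq \log(2n) \cdot 8n\log(2n)/\fac = 8n\log^2(2n)/\fac$. The only delicate point in the argument is aligning the ``$\frac{1}{2\fac}$'' factor in the definition of $S_\fac(\opt)$ with the threshold ``$\alpha/\fac$'' in the definition of $A_T^\alpha$; the factor-of-$2$ slack provided by dyadic rounding ($v_i(\omega_i) < 2\alpha_i$) is exactly what makes this conversion go through, after which everything reduces to a clean union bound.
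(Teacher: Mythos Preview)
Your proposal is correct and follows essentially the same argument as the paper: partition $S_\fac(\opt)$ into dyadic buckets $S(\alpha)=\{i\in S_\fac(\opt):\alpha\le v_i(\omega_i)<2\alpha\}$ (your $C_\alpha$), show each bucket is contained in $B_T^\alpha\cap H(\alpha,\opt)$ via the chain $\sum_s v_i^s x_i^{s,\alpha}\le v_i(x_i)<\frac{v_i(\omega_i)}{2\fac}<\frac{\alpha}{\fac}$ together with $v_i^T\le 1/n^2\le \alpha/4$, and then apply Lemma~\ref{commonlem:1} and sum over the $\log(2n)$ levels. The only cosmetic difference is that the paper names the buckets $S(\alpha)$ rather than defining them via a level map $\alpha_i$.
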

\begin{proof}
We partition the set of agents $S_\fac (\opt)$ into $\log (2n)$ subsets, based on the values $v_i(\omega_i)$s. Specifically, for each $\alpha \in \left\{ \frac{1}{2^k} : 1 \leq k \leq \log (2n) \right\}$, write subset $S\left( \alpha \right) \coloneqq \left\{i \in S_\fac (\opt) : \alpha \leq v_i(\omega_i) < 2\alpha \right\}$. Since $\frac{1}{2n} \leq v_j(\omega_j) < 1$, for all agents $j \in [n]$, the subsets $S(\alpha)$s form a partition of $S_\fac(\opt)$. Therefore, $\sum_{k=1}^{\log (2n)} \left|S\left(\frac{1}{2^k}\right)\right|=|S_\fac (\opt)|$.  

Next, we will show that $S(\alpha) \subseteq B^\alpha_T \cap H(\alpha ,\opt)$ and apply Lemma \ref{commonlem:1}. Towards this, note that for each agent $i\in S(\alpha) \subseteq S_\fac (\opt)$, we have $v_i(x_i) < \frac{v_i(\omega_i)}{2 \fac} < \frac{2\alpha}{2\fac}=\frac{\alpha}{\fac}$. Since $\sum_{s=1}^T x^{s, \alpha}_i v^s_i \leq v_i(x_i)$, we get that agent $i$ continues to be in the active set (for $\alpha$) throughout the execution of the algorithm, i.e., $i \in A^\alpha_T$. In fact, agent $i \in B^\alpha_T$, since the value of the last good $T$ is at most $1/n^2$, which in turn in upper bounded by $\alpha/4$ (see Line \ref{line:btalpha} in \alg($\fac$)).  

Furthermore, for each agent $i \in S(\alpha)$, we have $v_i(\omega_i) \geq \alpha$, i.e., $ i \in H(\alpha ,\opt)$. These observations imply that every agent $ i \in S(\alpha)$ is contained in $ B^\alpha_T \cap H(\alpha ,\opt)$ as well; equivalently, $S(\alpha) \subseteq B^\alpha_T \cap H(\alpha ,\opt)$. Therefore, Lemma \ref{commonlem:1} gives us $\left|S\left( \alpha \right)\right|\leq \frac{8n\log(2n)}{\fac}$, for all $\alpha \in \left\{ \frac{1}{2^k} : 1 \leq k \leq \log (2n) \right\}$.  

This establishes the Lemma: $|S_\fac (\opt) | = \sum_{k=1}^{\log (2n)} \left| S \left( \frac{1}{2^k} \right) \right| \leq \log (2n) \ \frac{8n \log(2n)}{\fac}=\frac{8n \log^2(2n)}{\fac}$.  
\end{proof}
\section{Universal Online Algorithm for Maximizing $p$-Mean Welfare}
\label{section:universal}

Considering the $p$-mean welfare maximization problem simultaneously for all $p \leq 1$, this section establishes an online algorithm with a universal competitive ratio of $O(\sqrt{n}\log n)$. Specifically, we establish Theorem \ref{theorem:universal-comp-ratio} by executing our algorithm, $\alg(\fac)$, with threshold $\fac = 8\sqrt{n} \ \log (2n)$.

Throughout this section we will consider the allocation $\x=(x_i)_i$ returned by $\alg\left( 8\sqrt{n}\log (2n) \right)$ and establish an $O(\sqrt{n}\log n)$-competitive ratio for different values of $p$: Subsection \ref{section:universal-egalitarian} details the guarantee for egalitarian welfare ($p=-\infty$) and Subsection \ref{section:universal-nash} for Nash social welfare ($p=0$). Subsections \ref{section:universal-p-lessthanone} and \ref{section:universal-p-morethanone} address $p\leq -1$ and $p >-1$, respectively. Together, Subsections \ref{section:universal-egalitarian} to \ref{section:universal-p-morethanone} prove Theorem \ref{theorem:universal-comp-ratio}.

The following lemma holds for $\fac=8 \sqrt{n} \log (2n)$ and any arbitrary allocation $\opt=(\omega_i)_i$ with the property that $v_i(\omega_i) \geq \nicefrac{1}{2n}$, for all agents $i \in [n]$. We will invoke the lemma in the following subsections for different values of $p \leq 1$ and with $\opt$ as allocations that (approximately) maximize the $p$-mean welfare (see Remark \ref{remark:near-opt}). 

Recall that, complementary to $H(\alpha, \opt)$, the set $ L(\alpha, \opt) \coloneqq \{i\in[n] : v_i(\omega_i)<\alpha\}$. Also, with $\fac=8 \sqrt{n} \ \log (2n)$, the set  $\widehat{L} (\alpha, \x)$ contains the agents that have value less than $\frac{\alpha}{8 \fac} = \frac{\alpha}{64 \sqrt{n} \log (2n)}$ in allocation $\x$, i.e., 
$\widehat{L} (\alpha, \x) = \left\{i\in[n] : v_i(x_i)<\frac{\alpha}{64 \sqrt{n} \log (2n)} \right\}$. The following two lemmas build upon Lemma \ref{common:lem3} specifically for {$\fac= {8 \sqrt{n} \log (2n)}$}. 

\begin{lemma}\label{uni:lem2}
For parameter $\fac = 8 \sqrt{n} \log (2n)$ and any $\alpha  \in \left\{ \frac{1}{2^k} : 1 \leq k \leq \log(2n) \right\}$, if $|L(\alpha ,\opt)|\leq \sqrt{n}$, then $\widehat{L} (\alpha, \x)=\emptyset$.
\end{lemma}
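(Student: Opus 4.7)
The plan is to argue by contradiction and exploit that, under the hypothesis, the vulnerable sets $B_t^\alpha$ stay small throughout the algorithm, so that Line \ref{line:prop-alpha} alone already pays enough value to every agent it touches. For $\fac=8\sqrt{n}\log(2n)$, Lemma \ref{commonlem:1} gives $|B_t^\alpha\cap H(\alpha,\opt)|\le \sqrt{n}$ at every iteration $t$, and combining this with the hypothesis $|L(\alpha,\opt)|\le \sqrt{n}$ yields
\[
|B_t^\alpha| \;\le\; |B_t^\alpha\cap H(\alpha,\opt)| + |L(\alpha,\opt)| \;\le\; 2\sqrt{n}.
\]
Consequently, Line \ref{line:prop-alpha} of $\alg(\fac)$ assigns every $i\in B_t^\alpha$ at least $\frac{1}{4\log(2n)\cdot 2\sqrt{n}}=\frac{1}{\fac}$ fraction of good $t$.

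I would then assume, towards a contradiction, that some agent $i$ lies in $\widehat{L}(\alpha,\x)$, so $v_i(x_i)<\alpha/(8\fac)$. Exactly as in the proof of Lemma \ref{common:lem3}, such an agent lies in $B_T^\alpha$, and because the active sets only shrink, $i\in A_t^\alpha$ for every $t\le T$. Let $\tau$ be the first iteration with $i\in B_\tau^\alpha$; by definition $\sum_{s=1}^{\tau-1} v_i^s > 1-\alpha/4$, while the minimality of $\tau$ (together with $i\in A_{\tau-1}^\alpha$) forces $\sum_{s=1}^{\tau-2} v_i^s \le 1-\alpha/4$. Using the standing per-good cap $v_i^{\tau-1}\le 1/n^2$ along with $\alpha\ge 1/(2n)$, I obtain
\[
\sum_{s=\tau}^T v_i^s \;\ge\; \frac{\alpha}{4}-\frac{1}{n^2}\;\ge\;\frac{\alpha}{8}.
\]

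To finish, once $i$ enters $B_\tau^\alpha$ the cumulative value only grows, so $i\in B_t^\alpha$ for every $t\in[\tau,T]$, and Line \ref{line:prop-alpha} alone contributes
\[
\sum_{t=\tau}^T x_i^{t,\alpha}v_i^t \;\ge\; \sum_{t=\tau}^T \frac{v_i^t}{\fac}\;\ge\; \frac{\alpha}{8\fac}
\]
to $v_i(x_i)$, contradicting $v_i(x_i)<\alpha/(8\fac)$. The most delicate step is the lower bound on the residual value $\sum_{s\ge \tau} v_i^s$: because per-round values are capped at $1/n^2$, agent $i$ cannot overshoot the threshold $1-\alpha/4$ at time $\tau$ by more than $1/n^2$, leaving a $\Theta(\alpha)$ amount of value for Line \ref{line:prop-alpha} to capture after time $\tau$.
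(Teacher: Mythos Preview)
Your proof is correct and follows essentially the same approach as the paper: bound $|B_t^\alpha|\le 2\sqrt{n}$ via Lemma~\ref{commonlem:1} and the hypothesis, locate the first time the agent enters $B^\alpha$, use the per-good cap $1/n^2$ to retain $\alpha/8$ residual value, and sum the Line~\ref{line:prop-alpha} contributions. The paper argues directly (separately handling agents removed from $A_t^\alpha$), while you phrase it as a contradiction, but the substance is identical.
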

\begin{proof}
Using the upper bound $|L(\alpha ,\opt)|\leq \sqrt{n}$, we will show that no agent $i \in [n]$ is contained in $\widehat{L} (\alpha, \x)$. Towards this, note that if, in $\alg(\fac)$, agent $i$ was removed directly from the active set $A_t^\alpha$ in some iteration $t$, then $v_i(x_i) \geq \frac{\alpha}{\fac} \geq \frac{\alpha}{64 \sqrt{n} \log (2n)}$ (Line \ref{line:atalpha}). In such a case, {$i \notin \widehat{L}(\alpha, \x)$}. Hence, for the rest of the proof we assume that $i \in A_t^\alpha$ for all iterations $1 \leq t \leq T$. Let $f$ be the iteration in which agent $i$ was included in $B_{f}^\alpha$ for the first time, i.e.,  $\sum_{s=f-1}^{T}v_i^{s}\geq \frac{\alpha}{4}$ and $\sum_{s=f}^{T}v_i^{s} < \frac{\alpha}{4}$ (Line \ref{line:btalpha}). Since all the goods have value at most $1/n^2$, we get 
\begin{align}
\sum_{s=f}^T v_i^{s} =\sum_{s=f-1}^T v_i^{s}-v_i^{f-1} \geq \frac{\alpha}{4}-\frac{1}{n^2} \geq \frac{\alpha}{8} \label{ineq:uni-suff-val}
\end{align}

Next, note that Lemma \ref{commonlem:1} gives us $|B_{s}^\alpha \cap H(\alpha ,\opt)| \leq \frac{8 n \log (2n)}{\fac} = \sqrt{n}$ for all $s \geq f$. Furthermore, we have $|L(\alpha ,\opt)|\leq \sqrt{n}$. Therefore, for all $s \geq f$, we can upper bound $B_{s}^\alpha$ as follows
\begin{align}
|B_{s}^\alpha| =|B_{s}^\alpha \cap H(\alpha ,\opt)| + |B_{s}^\alpha\cap L(\alpha ,\opt)| \leq \sqrt{n} + \sqrt{n} = 2 \sqrt{n} \label{ineq:card-bound}
\end{align}
Since $i \in B_{s}^\alpha$ for all $s \geq f$, agent $i$ receives at least $\frac{1}{4\log(2n)|B_{s}^\alpha|}$ fraction of the good in every iteration $s \geq f$ (Line \ref{line:btalpha}). Therefore, the value achieved by agent $i$ is at least
\begin{equation*}
\sum_{s =f}^T \frac{1}{4\log(2n) |B_{s}^\alpha|} \ v_i^{s} \underset{\text{(via (\ref{ineq:card-bound}))}}{\geq} \frac{1}{8\sqrt{n}\log(2n)} \sum_{s = f}^T v_i^{s} \underset{\text{(via (\ref{ineq:uni-suff-val}))}}{\geq} \frac{1}{8\sqrt{n}\log(2n)}  \ \frac{\alpha}{8} = \frac{\alpha}{64\sqrt{n}\log(2n)}.
\end{equation*}

Therefore, even in the current case $i\notin \widehat{L}(\alpha, \x)$. This, overall, establishes that no agent $i$ is contained in  $\widehat{L}(\alpha, \x)$ (i.e., $\widehat{L}(\alpha, \x) =\emptyset$). The lemma stands proved. 
\end{proof}

The following lemma multiplicatively bounds the number of low-valued agents in allocation $\x$ in terms of the number of low-valued agents in $\opt$.

\begin{lemma}\label{uni:lem3}
For parameter $\fac = 8 \sqrt{n} \log (2n)$ and any $\alpha  \in \left\{ \frac{1}{2^k} : 1 \leq k \leq \log(2n) \right\}$, we have $|\widehat{L}(\alpha, \x)| \leq 2 |L(\alpha ,\opt)|$.
\end{lemma}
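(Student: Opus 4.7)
The plan is to derive this lemma as an immediate corollary of Lemma \ref{common:lem3} and Lemma \ref{uni:lem2} by a simple case split on whether $|L(\alpha,\opt)|$ is below or above $\sqrt{n}$. Observe first that plugging $\fac = 8\sqrt{n}\log(2n)$ into Lemma \ref{common:lem3} gives the additive bound
\begin{align*}
|\widehat{L}(\alpha,\x)| \;\leq\; |L(\alpha,\opt)| + \frac{8n\log(2n)}{8\sqrt{n}\log(2n)} \;=\; |L(\alpha,\opt)| + \sqrt{n}.
\end{align*}
This is the single inequality we will leverage in both branches.

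First I would handle the regime $|L(\alpha,\opt)| \leq \sqrt{n}$. Here Lemma \ref{uni:lem2} directly gives $\widehat{L}(\alpha,\x) = \emptyset$, so trivially $|\widehat{L}(\alpha,\x)| = 0 \leq 2\,|L(\alpha,\opt)|$ (interpreting the bound as $0 \leq 0$ in the degenerate $|L(\alpha,\opt)|=0$ case, which is consistent).

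Next I would handle the regime $|L(\alpha,\opt)| > \sqrt{n}$, where the additive bound above becomes a multiplicative one: since $\sqrt{n} < |L(\alpha,\opt)|$, we get
\begin{align*}
|\widehat{L}(\alpha,\x)| \;\leq\; |L(\alpha,\opt)| + \sqrt{n} \;\leq\; |L(\alpha,\opt)| + |L(\alpha,\opt)| \;=\; 2\,|L(\alpha,\opt)|.
\end{align*}
Combining the two cases establishes the claim. There is no real obstacle here — the lemma is essentially a bookkeeping consequence of the two preceding lemmas once one notices that $\sqrt{n}$ is precisely the threshold above which the additive error $\frac{8n\log(2n)}{\fac}$ from Lemma \ref{common:lem3} can be absorbed multiplicatively, and exactly the threshold below which Lemma \ref{uni:lem2} forces $\widehat{L}(\alpha,\x)$ to be empty. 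This alignment is of course the reason the parameter $\fac = 8\sqrt{n}\log(2n)$ was chosen in the first place.
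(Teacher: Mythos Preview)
Your proposal is correct and follows essentially the same approach as the paper: a case split at the threshold $|L(\alpha,\opt)| \lessgtr \sqrt{n}$, invoking Lemma~\ref{uni:lem2} in the small case and Lemma~\ref{common:lem3} (with $\fac = 8\sqrt{n}\log(2n)$ so that the additive term becomes $\sqrt{n}$) in the large case. The paper's proof is identical in structure and content.
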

\begin{proof}
Lemma \ref{uni:lem2} ensures that, if $L(\alpha ,\opt)\leq \sqrt{n}$, then $|\widehat{L}(\alpha, \x)|=0$. Hence, in such a case, the stated bound holds: $|\widehat{L}(\alpha, \x)| \leq 2|L(\alpha ,\opt)|$. The complementary case (i.e., $|L(\alpha ,\opt)| > \sqrt{n}$) is addressed by Lemma \ref{common:lem3}:
\begin{align*}
|\widehat{L}(\alpha, \x)| & \leq |L(\alpha ,\opt)| + \frac{8n \log (2n)}{\fac} \\
& = |L(\alpha ,\opt)| + \sqrt{n} \tag{here $\fac= 8 \sqrt{n} \cdot \log (2n)$} \\
& \leq 2|L(\alpha ,\opt)|. 
\end{align*}
This completes the proof. 
\end{proof}

\subsection{Universal Guarantee for Egalitarian Welfare ($\pmb{p=-\infty}$)}
\label{section:universal-egalitarian}
In this subsection we show that $\alg(\fac)$, with $\fac = 8 \sqrt{n} \cdot \log (2n)$, achieves a competitive ratio of $O\left(\sqrt{n} \cdot \log n\right)$ for maximizing egalitarian welfare ${\rm M}_{-\infty}(\cdot)$.\footnote{As mentioned previously, this competitive ratio is tight, up to a log factor.} Here, let $\opt=(\omega_i)_i$ denote an allocation with egalitarian welfare, ${\rm M}_{-\infty}(\opt)$, at least half of the optimal egalitarian welfare and $v_i(\omega_i) \geq \nicefrac{1}{2n}$, for all $i \in [n]$; such an allocation is guaranteed to exist (Remark \ref{remark:near-opt}). 

We will show that the allocation $\x=(x_i)_i$---computed by $\alg \left( 8 \sqrt{n} \log (2n) \right)$---satisfies ${\rm M}_{-\infty} (\x) \geq \frac{1}{128 \sqrt{n} \log (2n)} {\rm M}_{-\infty} (\opt)$ and, hence, obtain the stated competitive ratio for egalitarian welfare.  

Write $\kappa$ to denote the integer that satisfies $\frac{1}{2^\kappa} \leq {\rm M}_{-\infty} (\opt) < \frac{2}{2^{\kappa}}$. Since $v_i(\omega_i) \geq \nicefrac{1}{2n}$, for all agents $i \in [n]$, we have ${\rm M}_{-\infty}(\opt) = \min_i v_i(\omega_i) \geq \nicefrac{1}{2n}$. Hence, $\kappa \geq \log (2n)$. Setting $\tilde{\alpha} \coloneqq \nicefrac{1}{2^\kappa}$, we invoke Lemma \ref{uni:lem2} and note that $L(\tilde{\alpha} ,\opt)=\emptyset$ and, hence, $\widehat{L}(\tilde{\alpha}, \x)=\emptyset$. Therefore, all agents $i \in [n]$ satisfy $v_i(x_i) \geq \frac{\tilde{\alpha}}{64\sqrt{n}\log(2n)} > \frac{{\rm M}_{-\infty}(\opt)}{128 \sqrt{n}\log(2n)}$. Equivalently, ${\rm M}_{-\infty} (\x) \geq \frac{1}{128 \sqrt{n} \log (2n)} {\rm M}_{-\infty} (\opt)$ and the stated competitive ratio holds. 

The next subsection shows that the allocation $\x$ obtains an analogous competitive guarantee for Nash social welfare as well. 

\subsection{Universal Guarantee for Nash Social Welfare ($\pmb{p=0}$)}
\label{section:universal-nash}
In this subsection, let $\opt=(\omega_i)_i$ denote an allocation with Nash social welfare, ${\rm M}_{0}(\opt)$, at least half of the optimal Nash social welfare and $v_i(\omega_i) \geq \nicefrac{1}{2n}$, for all $i \in [n]$. Recall that $S_\fac(\opt)$ is a subset of agents that are $(2\fac)$-sub-optimal in $\x$; specifically, $S_\fac (\opt) \coloneqq \left\{ i \in [n] : v_i(x_i) < \frac{1}{2\fac} v_i(\omega_i) \right\}$. In the current context we have $\fac = 8 \sqrt{n} \cdot \log (2n)$ and, hence, Lemma \ref{commonlem:2}, gives us $|S_\fac(\opt)| \leq \frac{8n \log^2(2n)}{\fac}=\sqrt{n} \cdot \log(2n)$. Using this upper bound on $|S_\fac(\opt)|$ we establish the competitive ratio for Nash social welfare:
\begin{align*}
\left( \frac{\prod_{i=1}^n v_i(\omega_i)}{\prod_{i=1}^n v_i(x_i)}\right)^{\frac{1}{n}} &=  \left( \prod_{i=1}^n\frac{v_i(\omega_i)}{v_i(x_i)}\right)^{\frac{1}{n}} = \left(\prod_{i\in S_\fac(\opt)}\frac{v_i(\omega_i)}{v_i(x_i)}\right)^{\frac{1}{n}} \left(\prod_{i\in [n]\setminus S_\fac(\opt)}\frac{v_i(\omega_i)}{v_i(x_i)}\right)^{\frac{1}{n}}\\ 
& \leq (2n)^{\frac{|S_\fac(\opt)|}{n}} \left(\prod_{i\in [n]\setminus S_\fac(\opt)}\frac{v_i(\omega_i)}{v_i(x_i)}\right)^{\frac{1}{n}} \tag{$v_i(\omega_i) \leq 1$ \& $v_i(x_i) \geq \frac{1}{2n}$, for all $i$; Line \ref{line:prop}}\\
& \leq (2n)^{\frac{|S_\fac(\opt)|}{n}} \left(16\sqrt{n}\log(2n)\right)^{\frac{n-|S_\fac(\opt)|}{n}} \tag{$v_i(x_i)\geq\frac{v_i(\omega_i)}{16\sqrt{n}\log(2n)}$ for all $i \notin S_\fac(\opt)$}\\
& \leq (2n)^{\frac{\sqrt{n}\cdot \log(2n)}{n}}\left(16\sqrt{n}\log(2n)\right) \tag{since $|S_\fac(\opt)|\leq \sqrt{n}\log(2n)$}\\
& = 2^{\log (2n) \cdot \frac{\log(2n)}{\sqrt{n}}}\left(16\sqrt{n}\log(2n)\right) \\ & \leq 32\sqrt{n}\log(2n).
\end{align*}
Hence, the allocation $\x$ is $O\left(\sqrt{n}\log n \right)$-competitive for Nash Social Welfare.

\subsection{Universal Guarantee for {\pmb{$p \leq -1$}}}
\label{section:universal-p-lessthanone}
This subsection shows that the allocation $\x = (x_i)_{i}$---computed by $\alg \left(8 \sqrt{n} \cdot \log (2n) \right)$---achieves an $O(\sqrt{n}\log n)$-approximation for $p$-mean welfare maximization, for all $p\leq -1$. 

For this subsection, fix any $p \leq -1$ and write $\opt=(\omega_i)_i$ to denote an allocation with $p$-mean welfare, ${\rm M}_{p}(\opt)$, at least half of the optimal $p$-mean welfare and $v_i(\omega_i) \geq \nicefrac{1}{2n}$, for all $i \in [n]$. 

Let $D(\alpha,\opt):=\{i\in[n]:\alpha/2\leq v_i(\omega_i) < \alpha\}$ denote the set of agents which achieve value at least $\alpha/2$ and less than $\alpha$ in the allocation $\opt$. Note that $D(\alpha, \opt) = L(\alpha, \opt) \setminus L\left({\alpha}/{2}, \opt \right)$; in fact, by the definitions of these sets we have
\begin{align}
    |D(\alpha, \opt)| = |L(\alpha, \opt)| - |L\left({\alpha}/{2}, \opt \right)| \label{eq:dll}
\end{align}

Also, the bounds $\nicefrac{1}{2n} \leq v_i(\omega_i) < 1$, for all $i$, ensure that the sets $\left\{ D\left( \frac{1}{2^k}, \opt \right) \right\}_{k=0}^{\log (2n)}$ partition all of $[n]$.

To establish the stated ratio between ${\rm M}_p(\x) = \left( \frac{1}{n} \sum_i v_i(x_i)^p \right)^{1/p}$ and ${\rm M}_p(\opt) = \left( \frac{1}{n} \sum_i v_i(\omega_i)^p \right)^{1/p}$, we first lower bound $\frac{1}{n} \sum_{i=1}^n v_i(\omega_i)^p$:
\begin{align}
\frac{1}{n} \sum_{i=1}^n v_i(\omega_i)^p &=\frac{1}{n} \ \sum_{k=0}^{\log (2n)}\sum_{i\in D\left(\frac{1}{2^k},\opt\right)}(v_i(\omega_i))^p =\frac{1}{n} \sum_{k=0}^{\log (2n)}\sum_{i\in D\left(\frac{1}{2^k},\opt\right)}\left(\frac{1}{v_i(\omega_i)}\right)^{|p|} \tag{since $p<0$} \nonumber\\
&> \frac{1}{n}  \sum_{k=0}^{\log (2n)} \sum_{i\in D\left(\frac{1}{2^k},\opt\right)}2^{k|p|} \tag{$v_i(\omega_i) < 1/2^k$ for all $i \in D\left(\frac{1}{2^k},\opt\right)$} \nonumber \\
&=\frac{1}{n} \sum_{k=0}^{\log (2n)}\left|D\left(\frac{1}{2^k},\opt\right)\right|2^{k|p|} \nonumber \\
&=\frac{1}{n}  \sum_{k=0}^{\log (2n)}\left(\left|L\left(\frac{1}{2^k},\opt\right)\right|-\left|L\left(\frac{1}{2^{k+1}},\opt\right)\right|\right)2^{k|p|} \tag{via inequality (\ref{eq:dll})} \nonumber \\
& = \frac{1}{n} \left( \sum_{k=1}^{\log (2n)}\left|L\left(\frac{1}{2^k},\opt\right)\right|\left(2^{k|p|}-2^{(k-1)|p|}\right)+\left|L\left(1,\opt\right)\right|\right)\label{daw:st5}
\end{align}
For the last equality we use the fact that $L\left(\frac{1}{4n},\opt\right)=\emptyset$.

We next obtain a complementary bound considering allocation $\x=(x_i)_i$. Towards this, write $\widehat{D}(\alpha, \x) \coloneqq \left\{i\in[n] : \frac{\alpha}{128\sqrt{n}\log(2n)}\leq v_i(x_i) < \frac{\alpha}{64\sqrt{n}\log(2n)} \right\}$ and let $\widehat{U} \coloneqq \left\{i\in[n] : v_i(x_i)< \frac{1}{64\sqrt{n}\log(2n)}\right\}$. Recall that $v_i(x_i) \geq \frac{1}{2n}$ for all agents $i$ (Line \ref{line:prop}) and, hence, the sets $\left\{\widehat{D}\left(\frac{1}{2^k}, \x \right) \right\}_{k=0}^{\log (2n)}$ form a partition of $\widehat{U}$. Towards bounding $\frac{1}{n} \sum_{i=1}^n v_i(x_i)^p =  \frac{1}{n} \sum_{i\in \widehat{U}} \ v_i(x_i)^p + \frac{1}{n} \sum_{i\in [n] \setminus \widehat{U}} \ v_i(x_i)^p$, we address the two summands separately. 

\begingroup
\allowdisplaybreaks
\begin{align}
\frac{1}{n} \sum_{i \in \widehat{U}} v_i(x_i)^p &=\frac{1}{n}  \sum_{k=0}^{\log (2n)}\sum_{i\in \widehat{D}\left(\frac{1}{2^k}, \x \right)} v_i(x_i)^p \nonumber \\
&=\frac{1}{n} \sum_{k=0}^{\log (2n)}\sum_{i\in \widehat{D}\left(\frac{1}{2^k}, \x \right)}\left(\frac{1}{v_i(x_i)}\right)^{|p|} \tag{since $p<0$} \\
&\leq \frac{1}{n} \sum_{k=0}^{\log (2n)}\sum_{i\in \widehat{D}\left(\frac{1}{2^{k}}, \x \right)} 2^{k|p|}(128\sqrt{n}\log (2n))^{|p|} \tag{considering $i \in \widehat{D}\left(\frac{1}{2^{k}}\right)$ and $|p| \ge 1$} \nonumber \\
& = (128\sqrt{n}\log (2n))^{|p|} \cdot \frac{1}{n} \sum_{k=0}^{\log (2n)}\left|\widehat{D}\left(\frac{1}{2^{k}}, \x \right)\right| \ 2^{k|p|} \label{ineq:dhat} 
\end{align}
\endgroup

Since $\left|\widehat{D}\left(\frac{1}{2^{k}}, \x \right)\right| = \left|\widehat{L}\left(\frac{1}{2^k}, \x \right)\right|-\left|\widehat{L}\left(\frac{1}{2^{k+1}}, \x \right)\right|$, for all $0 \leq k \leq \log (2n)$, inequality (\ref{ineq:dhat}) reduces to 
\begin{align}
\frac{1}{n} \sum_{i\in \widehat{U}} v_i(x_i)^p & =(128\sqrt{n}\log (2n))^{|p|} \ \frac{1}{n} \sum_{k=0}^{\log (2n)}\left(\left|\widehat{L}\left(\frac{1}{2^k}, \x \right)\right|-\left|\widehat{L}\left(\frac{1}{2^{k+1}}, \x \right)\right|\right)2^{k|p|} \nonumber \\
&=(128\sqrt{n}\log (2n))^{|p|} \ \frac{1}{n} \left( \sum_{k=1}^{\log (2n)}\left|\widehat{L}\left(\frac{1}{2^k}, \x \right)\right|\left(2^{k|p|}-2^{(k-1)|p|}\right)+\left|\widehat{L}\left(1, \x \right)\right|\right) \tag{since $|\widehat{L}\left(\frac{1}{4n}, \x\right)|=0$} \\
& \leq (128\sqrt{n}\log (2n))^{|p|} \  \frac{1}{n} \cdot 2 \left( \sum_{k=1}^{\log (2n)}\left|L\left(\frac{1}{2^k},\opt \right)\right|\left(2^{k|p|}-2^{(k-1)|p|}\right)+\left|L\left(1,\opt\right)\right|\right) \tag{via Lemma \ref{uni:lem3}} \\
& \leq 2(128\sqrt{n}\log (2n))^{|p|} \ \  \frac{1}{n} \sum_{i=1}^n v_i(w_i)^p \tag{via inequality (\ref{daw:st5})} \\
& \leq (256\sqrt{n}\log (2n))^{|p|} \ \ \frac{1}{n} \sum_{i=1}^n v_i(w_i)^p \tag{since $|p|\geq 1$} 
\end{align}

Now, we upper bound $\frac{1}{n} \sum_{i \notin \widehat{U}} v_i(x_i)^p$:
\begin{align*}
\frac{1}{n} \sum_{i \notin \widehat{U}} v_i(x_i)^p & \leq \frac{1}{n} \sum_{i\notin \widehat{U}} \left(\frac{1}{64\sqrt{n}\log(2n)} \right)^p  \tag{$v_i(x_i)\geq \frac{1}{64\sqrt{n}\log(2n)}$, for all $i\notin \widehat{U}$ \& $p\leq -1$}\\
& = \left(64\sqrt{n}\log(2n) \right)^{|p|} \ \frac{1}{n} \sum_{i\notin \widehat{U}} 1  \\
& \leq \left( 64\sqrt{n}\log(2n) \right)^{|p|}\ \frac{1}{n} \sum_{i\notin \widehat{U}} v_i(\omega_i)^p  \tag{since $v_i(\omega_i) \leq 1$, for all $i$, and $p \leq -1$}\\
&\leq \left( 64\sqrt{n}\log(2n)\right)^{|p|} \frac{1}{n} \sum_{i =1}^{n} \ v_i(\omega_i)^p
\end{align*}
Using these upper bounds we establish the competitive ratio:
\begin{align}
\frac{1}{n} \sum_{i=1}^{n} v_i(x_i)^p &=\frac{1}{n} \sum_{i\in \widehat{U}} v_i(x_i)^p +\frac{1}{n} \sum_{i\notin \widehat{U}} v_i(x_i)^p \nonumber \\
&\leq (256\sqrt{n}\log (2n))^{|p|} \ \frac{1}{n} \sum_{i=1}^n {v_i(\omega_i)^p} + (64\sqrt{n}\log(2n))^{|p|} \ \frac{1}{n} \sum_{i=1}^n  v_i(\omega_i)^p \nonumber \\
&\leq (320\sqrt{n}\log(2n))^{|p|}\ \frac{1}{n} \sum_{i =1}^n v_i(\omega_i)^p \label{ineq:UUhat}  
\end{align}
The last inequality follows from the fact that $|p| \geq 1$. Also, here we have $p \leq -1$ and, hence, exponentiating both sides of equation (\ref{ineq:UUhat}) by $1/p$ we get $\left( \frac{1}{n} \sum_i v_i(x_i)^p \right)^{1/p} \geq \frac{1}{320 \sqrt{n} \log (2n)} \left(\frac{1}{n} \sum_i v_i(\omega_i)^p \right)^{1/p}$, i.e., ${\rm M}_p(\x)\geq \frac{1}{320\sqrt{n}\log(2n)} {\rm M}_p(\opt)$. This, overall, establishes the stated competitive ratio of allocation $\x$ for all $p\leq -1$.

\subsection{Universal Guarantee for {\pmb{$p > -1$}}}
\label{section:universal-p-morethanone}
This subsection shows that the allocation $\x = (x_i)_{i}$---computed by $\alg \left(8 \sqrt{n} \cdot \log (2n) \right)$---achieves an $O(\sqrt{n}\log n)$-approximation for $p$-mean welfare maximization, for all nonzero $p > -1$.\footnote{The case of $p = 0$ is addressed in Subsection \ref{section:universal-nash}.} 

For this subsection, fix any nonzero $p > -1$ and write $\opt=(\omega_i)_i$ to denote an allocation with $p$-mean welfare, ${\rm M}_{p}(\opt)$, at least half of the optimal $p$-mean welfare and $v_i(\omega_i) \geq \nicefrac{1}{2n}$, for all $i \in [n]$.

Recall that $S_\fac(\opt)$ denotes the set of $(2\fac)$-sub-optimal agents in $\x$. In the current setting, $\fac=8 \sqrt{n} \cdot \log (2n)$ and, hence, via 
Lemma \ref{commonlem:2} we get $|S_\fac(\opt)| \leq \frac{8n \log^2(2n)}{\fac} = \sqrt{n}\log(2n)$. For notational convenience, throughout this subsection we will use $S$ for the set $S_\fac(\opt)$ and write $S^c \coloneqq [n]\setminus S$. Note that 
\begin{align}
|S| \leq \sqrt{n}\log(2n) \label{ineq:sizeS}    
\end{align}
Furthermore, let  $\W_{S}$ and $\W_{S^c}$ denote the $p$-mean welfare of the (sub)allocations $(\omega_i)_{i\in S}$ and $(\omega_i)_{i\in S^c}$, respectively, $ \W_{S} \coloneqq \left(\frac{1}{|S|} \sum_{i\in S} v_i(\omega_i)^p \right)^{1/p}$ and $ \W_{S^c} \coloneqq \left(\frac{1}{|S^c|} \sum_{i\in S^c} v_i(\omega_i)^p \right)^{1/p}$. Towards establishing the competitive ratio we first lower bound $\W_{S^c}$.
\begin{lemma}\label{ulem:4}
For any nonzero $p \in (-1, 1]$, we have $\W_{S^c} \geq {\rm M}_p (\opt) -\frac{\log(2n)}{\sqrt{n}}$.
\end{lemma}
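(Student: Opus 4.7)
The plan is to identify $M := {\rm M}_p(\opt)$ with a weighted power mean of the two ``sub-welfares'' $\W_{S}$ and $\W_{S^c}$, and then to exploit the fact that $\W_{S} \le 1$ (which follows from $v_i(\omega_i)<1$ for every $i$) in order to bound the gap $M-\W_{S^c}$ by $|S|/n$. By the bookkeeping identity
\[
n M^p \;=\; \sum_{i\in S} v_i(\omega_i)^p + \sum_{i\in S^c} v_i(\omega_i)^p \;=\; |S|\,\W_{S}^{p} + |S^c|\,\W_{S^c}^{p},
\]
the quantity $M^p$ is exactly the convex combination of $\W_{S}^p$ and $\W_{S^c}^p$ with weights $|S|/n$ and $|S^c|/n$, so $M$ is the weighted $p$-power mean of $\W_{S}$ and $\W_{S^c}$ with these weights.

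The second step is to invoke the weighted power-mean inequality, which is valid for every $p\le 1$ with $p\neq 0$ and dominates any such mean by the corresponding weighted arithmetic mean:
\[
M \;=\; \bigl(\tfrac{|S|}{n}\,\W_{S}^{p}+\tfrac{|S^c|}{n}\,\W_{S^c}^{p}\bigr)^{1/p} \;\le\; \tfrac{|S|}{n}\,\W_{S}+\tfrac{|S^c|}{n}\,\W_{S^c}.
\]
The third step is to note that $\W_{S}\le 1$: for $p>0$ each $v_i(\omega_i)^p<1$, so $\W_{S}^{p}<1$ and hence $\W_{S}<1$; for $p<0$ instead $v_i(\omega_i)^p>1$, so $\W_{S}^{p}>1$, and raising to the negative exponent $1/p$ (a decreasing map) flips the inequality back to $\W_{S}<1$. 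Substituting $\W_{S}\le 1$ into the displayed inequality yields $M\le \tfrac{|S|}{n}+\tfrac{|S^c|}{n}\,\W_{S^c}$, which rearranges to
\[
M-\W_{S^c} \;\le\; \tfrac{|S|}{n}(1-\W_{S^c}) \;\le\; \tfrac{|S|}{n}.
\]
Combined with the bound $|S|\le \sqrt{n}\log(2n)$ from inequality~\eqref{ineq:sizeS}, this gives $\W_{S^c} \ge M - \log(2n)/\sqrt{n}$, which is exactly the statement of the lemma.

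I do not anticipate a genuine obstacle: once $M$ is recognized as a weighted power mean of $\W_{S}$ and $\W_{S^c}$, the $p$-dependence is absorbed by the classical power-mean inequality and the remainder is elementary algebra, with no need for the finer structure of the algorithm (only the cardinality bound on $S$ from Lemma~\ref{commonlem:2} is used, via \eqref{ineq:sizeS}). The only step demanding a little care is the case $p\in(-1,0)$, where $x\mapsto x^{1/p}$ is decreasing and inverts inequalities; tracking signs and using that $\W_{S},\W_{S^c}$ are strictly positive (since all valuations are) is enough to handle this case uniformly with the case $p\in(0,1]$.
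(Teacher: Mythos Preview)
Your proof is correct and follows essentially the same approach as the paper: both recognize ${\rm M}_p(\opt)$ as the weighted $p$-mean of $\W_{S}$ and $\W_{S^c}$, apply the power-mean inequality to bound it by the weighted arithmetic mean, use $\W_{S}\le 1$, and then invoke the cardinality bound \eqref{ineq:sizeS}. Your rearrangement $M-\W_{S^c}\le \frac{|S|}{n}(1-\W_{S^c})\le \frac{|S|}{n}$ is marginally sharper than the paper's direct use of $\frac{|S^c|}{n}\W_{S^c}\le \W_{S^c}$, but the argument and conclusion are the same.
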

\begin{proof}
We can express the $p$-mean welfare of allocation $\opt$ as
\begin{align*}
{\rm M}_p (\opt) & =\left(\frac{|S^c|}{n}\left(\frac{1}{|S^c|} \ \sum_{i\in S^c} v_i(\omega_i)^p\right) + \frac{|S|}{n} \left(\frac{1}{|S|} \ \sum_{i\in S}  v_i(\omega_i)^p \right)\right)^{\frac{1}{p}} \\
&=\left(\frac{|S^c|}{n}\left( \W_{S^c} \right)^p + \frac{|S|}{n} \left( \W_{S} \right)^p \right)^{1/p}\\
& \leq \frac{|S^c|}{n} \left( \W_{S^c}\right) + \frac{|S|}{n} \left( \W_S \right) \tag{via the generalized mean inequality and $p\leq 1$}\\
& \leq \frac{|S^c|}{n} \W_{S^c} + \frac{\sqrt{n} \  \log(2n)}{n} \cdot 1 \tag{via inequality (\ref{ineq:sizeS}) and $\W_{S}\leq 1$} \\
& \leq \W_{S^c} + \frac{\log(2n)}{\sqrt{n}}
\end{align*}
Therefore, the lemma follows: $\W_{S^c} \geq {\rm M}_p (\opt) -\frac{\log(2n)}{\sqrt{n}}$.
\end{proof}

Now, we establish the competitive ratio for the entire allocation $\x=(x_i)_{i \in [n]}$ by considering the following two (complementary and exhaustive) cases.

\noindent
\emph{Case {\rm I}:} ${\rm M}_p(\opt) \leq \frac{2\log(2n)}{\sqrt{n}}$. For this case, note that $v_i(x_i)\geq \frac{1}{2n}$, for all agents $i \in [n]$ (Line \ref{line:prop}). Therefore, 
\begin{align*}
{\rm M}_p (\x) & = \left(\frac{1}{n} \sum_{i=1}^n v_i(x_i)^p\right)^{1/p} \geq \frac{1}{2n} \geq \frac{1}{4\sqrt{n}\log(2n)} {\rm M}_p(\opt) \tag{since ${\rm M}_p(\opt) \leq \frac{2\log(2n)}{\sqrt{n}}$}
\end{align*}
Hence, in this case the stated competitive ratio holds. 

\noindent
\emph{Case {\rm II}:} ${\rm M}_p(\opt) > \frac{2\log(2n)}{\sqrt{n}}$. For this case, recall that, by definition, agents in $S^c$  are \emph{not} $2\fac = 16 \sqrt{n} \cdot \log (2n)$ sub-optimal, i.e., for every agent $i \in S^c$ the value $v_i(x_i) \geq \frac{1}{16\sqrt{n}\log(2n)} v_i(\omega_i)$. Therefore,  

\begingroup
\allowdisplaybreaks
\begin{align*}
{\rm M}_p (\x) & = \left(\frac{1}{n} \sum_{i=1}^n v_i(x_i)^p\right)^{1/p} \\
& = \left(\frac{1}{n} \sum_{i \in S^c}  v_i(x_i)^p + \frac{1}{n} \sum_{i \in S}  v_i(x_i)^p \right)^{1/p} \\
& \geq \left(\frac{1}{n} \sum_{i \in S^c}  \left(\frac{v_i(\omega_i)}{16\sqrt{n}\log(2n)} \right)^p + \frac{1}{n} \sum_{i \in S}  v_i(x_i)^p \right)^{1/p} \tag{monotonicity of $p$-mean} \\
& = \left( \frac{1}{n} \cdot \frac{1}{\left(16 \sqrt{n} \log (2n) \right)^p} \sum_{i \in S^c} v_i(\omega_i)^p + \frac{1}{n} \sum_{i \in S}  v_i(x_i)^p \right)^{1/p} \\
& = \left(\frac{1}{\left(16 \sqrt{n} \log (2n) \right)^p} \cdot \frac{|S^c|}{n} \left( \frac{1}{|S^c|} \sum_{i \in S^c} v_i(\omega_i)^p \right) + \frac{1}{n} \sum_{i \in S}  v_i(x_i)^p \right)^{1/p} \\
& = \left(\frac{1}{\left(16 \sqrt{n} \log (2n) \right)^p} \cdot \frac{|S^c|}{n} \left( \W_{S^c} \right)^p + \frac{1}{n} \sum_{i \in S}  v_i(x_i)^p \right)^{1/p} \\
& = \left( \frac{|S^c|}{n} \left( \frac{\W_{S^c}}{16 \sqrt{n} \log (2n)} \right)^p + \frac{1}{n} \sum_{i \in S}  v_i(x_i)^p \right)^{1/p}
\end{align*}
\endgroup

Using the fact that $v_i(x_i) \geq \nicefrac{1}{2n}$, for all agents $i$, the previous inequality reduces to 
\begin{align}
{\rm M}_p (\x) & \geq \left( \frac{|S^c|}{n} \left( \frac{\W_{S^c}}{16 \sqrt{n} \log (2n)} \right)^p + \frac{1}{n} \sum_{i \in S}  \left(\frac{1}{2n} \right)^p \right)^{1/p} \tag{monotonicity of $p$-mean} \\
& = \left( \frac{|S^c|}{n} \left( \frac{\W_{S^c}}{16 \sqrt{n} \log (2n)} \right)^p + \frac{|S|}{n} \left(\frac{1}{2n} \right)^p \right)^{1/p} \nonumber \\
& \geq \left( \frac{|S^c|}{n} \left( \frac{\W_{S^c}}{16 \sqrt{n} \log (2n)} \right)^{-1} + \frac{|S|}{n} \left(\frac{1}{2n} \right)^{-1} \right)^{-1} \label{ineq:harmonic}
\end{align}
The last equation follows from the generalized mean inequality and $p\geq -1$ (i.e., the $p$-mean, for all $p\geq -1$, is at least the harmonic mean).    
Simplifying equation (\ref{ineq:harmonic}) we obtain 
\begin{align*}
{\rm M}_p (\x) & \geq \frac{1}{\frac{|S^c|}{n}\ \frac{16\sqrt{n}\log(2n)}{ \W_{S^c} }+ 2 |S|}\\
& = \frac{\W_{S^c}}{\frac{|S^c|}{n}\ 16\sqrt{n}\log(2n) + 2 |S| \ \W_{S^c}}\\
& \geq \frac{\W_{S^c}}{16\sqrt{n}\log(2n) + 2\W_{S^c} \ \sqrt{n}\log(2n)} \tag{since $|S|\leq \sqrt{n}\log(2n)$; inequality (\ref{ineq:sizeS})} \\
& \geq \frac{\W_{S^c}}{16\sqrt{n}\log(2n) + 2 \sqrt{n}\log(2n)} \tag{since $\W_{S^c} \leq1$} \\
& \geq \frac{{\rm M}_p(\opt) -\nicefrac{\log(2n)}{\sqrt{n}}}{18\sqrt{n}\log(2n)} \tag{via Lemma \ref{ulem:4}} \\
& \geq \frac{{\rm M}_p(\opt)}{36 \sqrt{n}\log(2n)}   \tag{since ${\rm M}_p(\opt) > \frac{2\log(2n)}{\sqrt{n}}$}  
\end{align*}

This, overall, establishes that allocation $\x$ is $O\left( \sqrt{n} \cdot \log n \right)$-competitive for all nonzero $p \in (-1, 1]$.
\section{Tight Online Algorithms for Maximizing $\pmb{p}$-Mean Welfare}
\label{section:tight}

In contrast to Section \ref{section:universal}, where we worked with a single threshold $\fac$ ($=8 \sqrt{n} \cdot \log (2n)$), in this section we execute $\alg(\cdot)$ with $p$-specific thresholds and obtain essentially-tight competitive ratios for a wide range of the exponent parameter $p \leq 1$. Subsections \ref{section:tight-egalitarian} to \ref{sec:5.3} establish the upper bounds stated in Table \ref{table:UBLB}. The lower bounds in the table (proved in Section \ref{section:lower-bounds}) show that the upper bounds obtained here are tight---up to poly-log factors---for a multiple ranges of $p$. 

\subsection{Tight Guarantee for Egalitarian Welfare ($\pmb{p = -\infty}$)}
\label{section:tight-egalitarian}
As mentioned previously, a lower bound of $\sqrt{n}$ holds on the competitive ratio of any online algorithm for maximizing egalitarian welfare (see also Corollary \ref{corollary:lower-bound-egalitarian}). Therefore, the universally competitive algorithm (i.e., $\alg(\fac)$ with $\fac = 8 \sqrt{n} \log (2n)$) provides a tight---up to a log factor---competitive ratio (of $O\left(\sqrt{n} \log n \right)$) for egalitarian welfare.

\subsection{Tight Guarantee for Nash Social Welfare}
\label{sec:5.2}
This subsection shows that $\alg(\fac)$, with $\fac = 8\log^3 (2n)$, achieves a competitive ratio of $O\left(\log^3 n\right)$ for maximizing Nash social welfare ${\rm M}_{0}(\cdot)$.\footnote{This competitive ratio is tight, up to a poly-log factor.} 

Here, let $\opt=(\omega_i)_i$ denote an allocation with Nash social welfare, ${\rm M}_{0}(\opt)$, at least half of the optimal Nash social welfare and $v_i(\omega_i) \geq \nicefrac{1}{2n}$, for all $i \in [n]$ (see Remark \ref{remark:near-opt}). We will show that the allocation $\x=(x_i)_i$---computed by $\alg \left( 8\log^3 (2n) \right)$---satisfies ${\rm M}_{0} (\x) \geq \frac{1}{32\log^3 (2n)} {\rm M}_{0} (\opt)$ and, hence, achieves the stated competitive ratio for Nash social welfare.  

Recall that $S_\fac(\opt)$ denotes the set of $(2\fac)$-sub-optimal agents. Since $\fac =  8\log^3 (2n)$ here, we have $S_\fac (\opt) = \left\{ i \in [n] :  v_i(x_i) < \frac{1}{16 \log^3 n} v_i(\omega_i) \right\}$, and Lemma \ref{commonlem:2} gives us $|S_\fac(\opt)| \leq \frac{8n \log^2(2n)}{\fac}=\frac{n}{\log(2n)}$. Using this upper bound on $|S_\fac(\opt)|$ we establish the competitive ratio for Nash social welfare
\begin{align*}
\left( \frac{\prod_{i=1}^n v_i(\omega_i)}{\prod_{i=1}^n v_i(x_i)}\right)^{\frac{1}{n}} &=  \left( \prod_{i=1}^n\frac{v_i(\omega_i)}{v_i(x_i)}\right)^{\frac{1}{n}} = \left(\prod_{i\in S_\fac(\opt)}\frac{v_i(\omega_i)}{v_i(x_i)}\right)^{\frac{1}{n}} \left(\prod_{i\in [n]\setminus S_\fac(\opt)}\frac{v_i(\omega_i)}{v_i(x_i)}\right)^{\frac{1}{n}}\\ 
& \leq (2n)^{\frac{|S_\fac(\opt)|}{n}} \left(\prod_{i\in [n]\setminus S_\fac(\opt)}\frac{v_i(\omega_i)}{v_i(x_i)}\right)^{\frac{1}{n}} \tag{$v_i(\omega_i) <1$ and  $v_i(x_i) \geq \frac{1}{2n}$, for all $i$; Line \ref{line:prop}}\\
& \leq (2n)^{\frac{|S_\fac(\opt)|}{n}} \left(16\log^3(2n)\right)^{\frac{n-|S_\fac(\opt)|}{n}} \tag{$v_i(x_i)\geq\frac{v_i(\omega_i)}{16\log^3(2n)}$ for all $i \notin S_\fac(\opt)$}\\
& \leq (2n)^{\frac{n}{n\log(2n)}}\left(16\log^3(2n)\right) \tag{since $|S_\fac(\opt)|\leq \frac{n}{\log(2n)}$}\\
& = 2^{\log (2n) \cdot \frac{1}{\log(2n)}}\left(16\log^3(2n)\right) \\ & = 32\log^3(2n)
\end{align*}
Hence the allocation $\x$ is $O\left(\log^3n \right)$-competitive for Nash social welfare.

\subsection{Improved  Guarantee for {\pmb{$p\leq -1$}}}
\label{sec:5.7}
We will show in Section \ref{section:lower-bounds} that, for maximizing $p$-mean welfare with parameter $p \leq -1$, any online algorithm incurs a competitive ratio of $O \left( n^{\frac{1}{2 + (1/|p|) } - \varepsilon} \right)$, for any constant $\varepsilon>0$. Therefore, the universal competitive guarantee of $O\left( \sqrt{n} \log n \right)$ (obtained in Section \ref{section:universal-p-lessthanone}) is commensurate with the lower bound  when $p \leq -1$.   

\subsection{Improved Guarantee for $\pmb{-1\leq p\leq -1/4}$}
\label{section:tight-minus-four-one}

In this subsection we show that $\alg(\fac)$, with $\fac = 8n^{\frac{|p|}{|p|+1}}\log^2(2n)$, achieves a competitive ratio of $O\left(n^{\frac{|p|}{|p|+1}}\log^2n\right)$ for maximizing $p$-mean welfare with $-1 \leq p \leq -\frac{1}{4}$. Here, let $\opt=(\omega_i)_i$ denote an allocation with $p$-mean welfare, ${\rm M}_{p}(\opt)$, at least half of the optimal $p$-mean welfare and $v_i(\omega_i) \geq \nicefrac{1}{2n}$, for all $i \in [n]$; such an allocation is guaranteed to exist (Remark \ref{remark:near-opt}).

We will show that the allocation $\x=(x_i)_i$---computed by $\alg \left(  8n^{\frac{|p|}{|p|+1}}\log^2(2n) \right)$---satisfies ${\rm M}_{p} (\x) \geq \frac{1}{2^{1+1/|p|}\fac} \  {\rm M}_{p} (\opt)$ and, hence, obtain the stated competitive ratio for $p$-mean welfare.  

Recall that $S_\fac(\opt)$ denotes the set of $(2\fac)$-sub-optimal agents in $\x$. For notational convenience, throughout this subsection we will use $S$ for the set $S_\fac(\opt)$ and write $S^c \coloneqq [n]\setminus S$. In the current setting, $\fac=8n^{\frac{|p|}{|p|+1}}\log^2(2n)$ and, hence, via 
Lemma \ref{commonlem:2} we get $|S| = |S_\fac(\opt)| \leq \frac{8n \log^2(2n)}{\fac} =n^{\frac{1}{|p|+1}}$.

Using the facts that $v_i(x_i)\geq \frac{1}{2n}$ and $v_i(\omega_i)<1$, for all $i\in[n]$,  along with $v_i(x_i)\geq \frac{v_i(\omega_i)}{2\fac}$ for all $i\in S^c$, we upper bound $\frac{{\rm M}_p(\opt)}{{\rm M}_p(\x)}$ as follows:

\begingroup
\allowdisplaybreaks
\begin{align*}
\left(\frac{\frac{1}{n} \sum_{i=1}^n v_i(\omega_i)^p}{\frac{1}{n}\sum_{i=1}^n v_i(x_i)^p}\right)^{1/p}&= \left(\frac{\sum_{i=1}^n \left(\frac{1}{v_i(x_i)}\right)^{|p|}}{\sum_{i=1}^n \left(\frac{1}{v_i(\omega_i)}\right)^{|p|}}\right)^{1/|p|} \tag{since $p<0$}\\
& = \left(\frac{\sum_{i\in S}\left(\frac{1}{v_i(x_i)}\right)^{|p|}+\sum_{i\in S^c}\left(\frac{1}{v_i(x_i)}\right)^{|p|}}{\sum_{i\in S}\left(\frac{1}{v_i(\omega_i)}\right)^{|p|}+\sum_{i\in S^c}\left(\frac{1}{v_i(\omega_i)}\right)^{|p|}}\right)^{1/|p|}\\
& \leq \left(\frac{\sum_{i\in S}\left(\frac{1}{v_i(x_i)}\right)^{|p|}+(2\fac)^{|p|}\sum_{i\in S^c}\left(\frac{1}{v_i(\omega_i)}\right)^{|p|}}{\sum_{i\in S}\left(\frac{1}{v_i(\omega_i)}\right)^{|p|}+\sum_{i\in S^c}\left(\frac{1}{v_i(\omega_i)}\right)^{|p|}}\right)^{1/|p|} \tag{$v_i(x_i) \geq \frac{1}{2 \fac} v_i(\omega_i)$, for all $i \in S^c$, and $p<0$} \\
&  \leq \left(\frac{\sum_{i\in S}(2n)^{|p|}+(2\fac)^{|p|}\sum_{i\in S^c}\left(\frac{1}{v_i(\omega_i)}\right)^{|p|}}{\sum_{i\in S}\left(\frac{1}{v_i(\omega_i)}\right)^{|p|}+\sum_{i\in S^c}\left(\frac{1}{v_i(\omega_i)}\right)^{|p|}}\right)^{1/|p|}\tag{$v_i(x_i) \geq \frac{1}{2n}$, for all $i$, and $p <0$}\\
&  \leq \left(\frac{(2n)^{|p|}|S|+(2\fac)^{|p|}\sum_{i\in S^c}\left(\frac{1}{v_i(\omega_i)}\right)^{|p|}}{|S|+\sum_{i\in S^c}\left(\frac{1}{v_i(\omega_i)}\right)^{|p|}}\right)^{1/|p|} \tag{since $v_i(\omega_i) \leq 1$ for all $i$} \\
&  = 2n \left(\frac{ |S|+\left(\frac{\fac}{n}\right)^{|p|} \sum_{i\in S^c}\left(\frac{1}{v_i(\omega_i)}\right)^{|p|}}{|S|+\sum_{i\in S^c}\left(\frac{1}{v_i(\omega_i)}\right)^{|p|}}\right)^{1/|p|}
\end{align*}
\endgroup
Now, the fact that the function $f(x) \coloneqq \frac{x+a}{x+b}$---with $0 \leq a<b$---is monotone increasing in the range $x \in [0,\infty)$ and $|S| \leq n^{\frac{1}{|p|+1}}$ gives us
\begingroup
\allowdisplaybreaks  
\begin{align}
\left(\frac{\frac{1}{n} \sum_{i=1}^n v_i(\omega_i)^p}{\frac{1}{n}\sum_{i=1}^n v_i(x_i)^p}\right)^{1/p} & \leq 2n \left(\frac{  n^{\frac{1}{|p|+1}} +\left(\frac{\fac}{n}\right)^{|p|} \sum_{i\in S^c}\left(\frac{1}{v_i(\omega_i)}\right)^{|p|}}{ n^{\frac{1}{|p|+1}} +\sum_{i\in S^c}\left(\frac{1}{v_i(\omega_i)}\right)^{|p|}}\right)^{1/|p|} \nonumber \\
& = \left(\frac{(2n)^{|p|}n^{\frac{1}{|p|+1}}+(2\fac)^{|p|}\sum_{i\in S^c}\left(\frac{1}{v_i(\omega_i)}\right)^{|p|}}{n^{\frac{1}{|p|+1}}+\sum_{i\in S^c}\left(\frac{1}{v_i(\omega_i)}\right)^{|p|}}\right)^{1/|p|} \nonumber \\
& \leq 2\fac \left(\frac{n+\sum_{i\in S^c}\left(\frac{1}{v_i(\omega_i)}\right)^{|p|}}{n^{\frac{1}{|p|+1}}+\sum_{i\in S^c}\left(\frac{1}{v_i(\omega_i)}\right)^{|p|}}\right)^{1/|p|} \label{ineq:something}
\end{align}
\endgroup
The last inequality follows from $\frac{(2n)^{|p|}n^{\frac{1}{|p|+1}}}{(2\fac)^{|p|}} = \left(\frac{n}{\fac} \right)^{|p|} n^{\frac{1}{|p|+1}} \leq \left( n^{1 - \frac{|p|}{|p|+1}} \right)^{|p|} \cdot n^{\frac{1}{|p|+1}} = n$; recall that $\fac=8n^{\frac{|p|}{|p|+1}}\log^2(2n)$.

Furthermore, using inequality (\ref{ineq:something}) and the fact that the function $g(y) \coloneqq \frac{a'+y}{b'+y}$---with $a'>b'\geq 0$---is monotone decreasing in the range $y \in [0,\infty)$ and $\sum_{i\in S^c}\left(\frac{1}{v_i(\omega_i)}\right)^{|p|}\geq |S^c| =  n-|S|$, we obtain 
\begin{align*}
\left(\frac{\frac{1}{n} \sum_{i=1}^n v_i(\omega_i)^p}{\frac{1}{n}\sum_{i=1}^n v_i(x_i)^p}\right)^{1/p}  &  \leq 2\fac \left(\frac{n+n-|S|}{n^{\frac{1}{|p|+1}}+n-|S|}\right)^{1/|p|}\\& \leq 2\fac \left(\frac{n+n}{n^{\frac{1}{|p|+1}}+n-|S|}\right)^{1/|p|} \\
& \leq 2\fac  \left(\frac{2n}{n}\right)^{1/|p|}\tag{since $|S|\leq n^{\frac{1}{|p|+1}}$}\\
& =2^{1+1/|p|}\ \fac
\end{align*}  

Hence, $\left( \frac{1}{n} \sum_i v_i(x_i)^p \right)^{1/p} \geq \frac{1}{2^{1+1/|p|}\fac} \left(\frac{1}{n}\sum_i v_i(\omega_i)^p \right)^{1/p}$, i.e., ${\rm M}_p(\x)\geq \frac{1}{2^{1+1/|p|}\fac} {\rm M}_p(\opt)$ with $\fac = 8n^{\frac{|p|}{|p|+1}}\log^2(2n)$. This, overall, establishes the stated competitive ratio for allocation $\x$.
\subsection{Improved Guarantee for ${\pmb{\frac{-1}{4} \leq p \leq \frac{-1}{\log (2n)}}}$}
\label{sec:5.5}
This subsection shows that $\alg(\fac)$, with $\fac = 8(2n)^{2|p|}\log^3(2n)$, achieves a competitive ratio of $O\left(n^{2|p|}\log^3n\right)$ for maximizing $p$-mean welfare with $-1/4 \leq p \leq -\frac{1}{\log (2n)}$.\footnote{This competitive ratio is tight, up to an $n^{|p|}$ factor.} Here, let $\opt=(\omega_i)_i$ denote an allocation with $p$-mean welfare, ${\rm M}_{p}(\opt)$, at least half of the optimal $p$-mean welfare and $v_i(\omega_i) \geq \nicefrac{1}{2n}$, for all $i \in [n]$; such an allocation is guaranteed to exist (Remark \ref{remark:near-opt}). 

We will show that the allocation $\x=(x_i)_i$---computed by $\alg \left( 8(2n)^{2|p|}\log^3(2n) \right)$---satisfies ${\rm M}_{p} (\x) \geq \frac{1}{48(2n)^{2|p|}\log^3(2n)} {\rm M}_{p} (\opt)$ and, hence, obtain the stated competitive ratio for $p$-mean welfare.  

Recall that $S_\fac (\opt) \coloneqq \left\{ i \in [n] : v_i(x_i) < \frac{1}{2\fac} v_i(\omega_i) \right\}$ is the subset of agents that are $(2\fac)$-sub-optimal in $\x$. For notational convenience, throughout this subsection, we will use $S$ for the set $S_\fac(\opt)$ and write $S^c \coloneqq [n]\setminus S$. In addition, considering the set $S$, we define the ratio 
\begin{align*}
    \theta \coloneqq \frac{\sum_{i\in S} v_i(\omega_i)^p}{\sum_{i=1}^n v_i(\omega_i)^p}
\end{align*}

Furthermore, we consider the multiplicative drop, $\gamma \in \mathbb{R}_+$, in the $p$-mean welfare from $M_p(\opt)$ when all the agents in $S$ experience a $\frac{1}{2n}$ factor decrement. Formally, 
\begin{align*}
\gamma \coloneqq \frac{\left(\sum_{i\in S^c} v_i(\omega_i)^p + \frac{1}{(2n)^p} \ \sum_{i\in S} v_i(\omega_i)^p \right)^{1/p}}{\left( \sum_{i=1}^n v_i(\omega_i)^p \right)^{1/p}}
\end{align*}
Note that, $v_i(x_i) \geq \frac{1}{2n} \geq \frac{1}{2n} v_i(\omega_i)$, in particular for all $i \in S$. Hence, lower bounding $\gamma$ will enable us to compare $\sum_{i=1}^n v_i(x_i)^p$ with $\sum_{i=1}^n v_i(\omega_i)^p$ by way of $\sum_{i\in S^c} v_i(\omega_i)^p + \frac{1}{(2n)^p} \ \sum_{i\in S} v_i(\omega_i)^p$. 
\begin{proposition}
\label{proposition:const-neg-p-theta}
For any $p \in \left( \nicefrac{-1}{4}, \frac{-1}{\log (2 n)} \right]$, the ratio $\theta \leq \frac{1}{(2n)^{|p|}\log(2n)}$, i.e., 
\begin{align*}
    \sum_{i\in S} v_i(\omega_i)^p \leq \frac{1}{(2n)^{|p|} \log(2n)} \sum_{i=1}^n v_i(\omega_i)^p.
\end{align*}
\end{proposition}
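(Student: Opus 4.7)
The plan is to prove the bound by estimating the numerator and denominator of $\theta$ separately, using (i) a cardinality bound on $S = S_\fac(\opt)$ coming from Lemma \ref{commonlem:2} with the prescribed threshold, and (ii) the per-agent envelope $\tfrac{1}{2n} \leq v_i(\omega_i) < 1$ guaranteed by Remark \ref{remark:near-opt}. The proof is essentially a book-keeping exercise once the exponents are matched.

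First, I would plug $\fac = 8(2n)^{2|p|}\log^3(2n)$ into Lemma \ref{commonlem:2} to obtain
\[
|S| \;\leq\; \frac{8n\log^2(2n)}{\fac} \;=\; \frac{n}{(2n)^{2|p|}\log(2n)}.
\]
For the range $|p| \leq 1/4$, one has $\fac \leq 8\sqrt{2n}\log^3(2n) \leq n/4$, so the hypothesis needed by the common lemmas holds.

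For the numerator, I would use $p<0$ together with $v_i(\omega_i) \geq \tfrac{1}{2n}$ to get the per-agent bound $v_i(\omega_i)^p = (1/v_i(\omega_i))^{|p|} \leq (2n)^{|p|}$. Summing this inequality across $S$ and applying the cardinality bound yields
\[
\sum_{i\in S} v_i(\omega_i)^p \;\leq\; |S|\cdot (2n)^{|p|} \;\leq\; \frac{n}{(2n)^{|p|}\log(2n)}.
\]
For the denominator, I would apply the complementary bound $v_i(\omega_i) < 1$; since $p<0$ this gives $v_i(\omega_i)^p > 1$ for every $i$, hence $\sum_{i=1}^n v_i(\omega_i)^p \geq n$. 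Dividing the two estimates immediately delivers $\theta \leq \frac{1}{(2n)^{|p|}\log(2n)}$, as required.

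There is no real obstacle here: the argument never uses the fine structure of $\opt$ beyond the elementary scaling from Remark \ref{remark:near-opt}, and the entire calculation is one substitution plus two monotone inequalities. The only thing to watch is that the range $|p| \leq 1/4$ is what enables $\fac \leq n/4$; the upper endpoint $p \leq -1/\log(2n)$ is not needed in this proposition but will presumably be used downstream when Proposition \ref{proposition:const-neg-p-theta} is converted into a bound on the multiplicative drop $\gamma$ and ultimately into the competitive ratio.
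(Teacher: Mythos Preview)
Your proof is correct and follows essentially the same route as the paper: bound the numerator via $v_i(\omega_i)^p \leq (2n)^{|p|}$ together with the cardinality bound $|S| \leq n/((2n)^{2|p|}\log(2n))$ from Lemma~\ref{commonlem:2}, bound the denominator via $v_i(\omega_i)^p \geq 1$, and divide. Your additional remark that only the constraint $|p|\leq 1/4$ (ensuring $\fac\leq n/4$ for large $n$) is used here, while the endpoint $p\leq -1/\log(2n)$ is reserved for the downstream bound on $\gamma$, is accurate.
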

\begin{proof}
Since $p<0$ and $\frac{1}{2n}\leq v_i(\omega_i)\leq 1$, the inequalities $ 1 \leq v_i(\omega_i)^p \leq (2n)^{|p|}$ hold for all $i \in [n]$.
Summing we obtain  
\begin{equation}
\sum_{i=1}^n v_i(\omega_i)^p \geq n \label{-1/4:st1}
\end{equation}
Since in the context at hand $\fac = 8(2n)^{2|p|}\log^3(2n)$, Lemma \ref{commonlem:2} implies $|S| = |S_\fac (\opt)| \leq \frac{8n \log^2 (2n)}{\fac} = \frac{n}{(2n)^{2|p|}\log(2n)}$. This bound and inequality (\ref{-1/4:st1}) imply 
\begin{align*}
\sum_{i\in S} v_i(\omega_i)^p & \leq \sum_{i\in S}(2n)^{|p|}\\
&= |S|(2n)^{|p|}\\
&\leq \frac{n}{(2n)^{2|p|}\log (2n)}(2n)^{|p|}\\
& = \frac{n}{(2n)^{|p|}\log(2n)}\\
&\leq \frac{1}{(2n)^{|p|}\log(2n) }\sum_{i=1}^n v_i(\omega_i)^p \tag{via inequality (\ref{-1/4:st1})}\\
\end{align*}
\end{proof}
The next proposition builds upon the bound on $\theta$. 
\begin{proposition}\label{proposition:theta-1/4}
For any $p \in \left( \nicefrac{-1}{4}, \frac{-1}{\log (2 n)} \right]$, we have $\left((2n)^{|p|}-1\right)\theta\leq |p|$.
\end{proposition}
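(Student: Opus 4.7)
The plan is to chain the bound from Proposition \ref{proposition:const-neg-p-theta} with an elementary analytic inequality, so very little extra combinatorial work is needed.

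First I would invoke Proposition \ref{proposition:const-neg-p-theta}, which already gives $\theta \leq \frac{1}{(2n)^{|p|} \log(2n)}$ throughout the range $p \in \left(\nicefrac{-1}{4}, \frac{-1}{\log(2n)}\right]$. Multiplying both sides by $(2n)^{|p|} - 1 \geq 0$ I obtain
\[
\bigl((2n)^{|p|} - 1\bigr)\,\theta \;\leq\; \frac{(2n)^{|p|} - 1}{(2n)^{|p|} \log(2n)} \;=\; \frac{1 - (2n)^{-|p|}}{\log(2n)}.
\]

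Next I would set $y \coloneqq |p| \log(2n) \geq 0$, so that $(2n)^{-|p|} = e^{-y}$, and apply the standard inequality $1 - e^{-y} \leq y$, valid for all $y \geq 0$. This yields $1 - (2n)^{-|p|} \leq |p| \log(2n)$, and plugging back gives
\[
\bigl((2n)^{|p|} - 1\bigr)\,\theta \;\leq\; \frac{|p| \log(2n)}{\log(2n)} \;=\; |p|,
\]
which is exactly the claim.

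There is essentially no obstacle: the heavy lifting has already been done in Proposition \ref{proposition:const-neg-p-theta} (where the size bound on $S$ from Lemma \ref{commonlem:2} was used), and the remaining step is a one-line calculus fact. The only thing worth double-checking is that the proposition's hypothesis $p \in \left( \nicefrac{-1}{4}, \frac{-1}{\log(2n)} \right]$ matches the current statement so that the $\theta$-bound is legitimately available — which it does.
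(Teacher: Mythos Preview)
Your proposal is correct and follows the same approach as the paper: invoke Proposition~\ref{proposition:const-neg-p-theta} and finish with a short elementary bound. The paper's last step is even more direct, using $\frac{(2n)^{|p|}-1}{(2n)^{|p|}\log(2n)} \leq \frac{1}{\log(2n)} \leq |p|$, where the final inequality is simply the hypothesis $p \leq -\tfrac{1}{\log(2n)}$, so your analytic inequality $1-e^{-y}\leq y$ is not actually needed here.
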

\begin{proof}
The bound $\theta\leq \frac{1}{(2n)^{|p|}\log (2n)}$ (Proposition \ref{proposition:const-neg-p-theta}) leads to the stated inequality: 
\begin{align*}
\left((2n)^{|p|}-1\right)\theta&\leq  \frac{\left((2n)^{|p|}-1\right)}{(2n)^{|p|}\log (2n)}\\
&\leq \frac{1}{\log(2n)}\\
&\leq |p| \tag{as $p\leq -\frac{1}{\log (2n)}$}\\
\end{align*}
\end{proof}
Using the last proposition, we now prove that $\gamma \geq \nicefrac{1}{e}$.
\begin{align*}
\gamma &=\left(\frac{\sum_{i\in S^c} v_i(\omega_i)^p + \frac{1}{(2n)^p} \sum_{i\in S} v_i(\omega_i)^p}{\sum_{i=1}^n v_i(\omega_i)^p} \right)^{1/p}\\
 &=\left(\frac{\sum_{i=1}^n v_i(\omega_i)^p -\sum_{i\in S} v_i(\omega_i)^p + \frac{1}{(2n)^p}  \sum_{i\in S} v_i(\omega_i)^p}{\sum_{i=1}^n v_i(\omega_i)^p}\right)^{1/p}\\
&=\left((1-\theta)+\frac{1}{(2n)^p}\theta\right)^{1/p}\\
&=\frac{1}{\left((1-\theta)+(2n)^{|p|}\theta\right)^{1/|p|}}\tag{since $p<0$}\\
&=\frac{1}{\left(1+\left((2n)^{|p|}-1\right)\theta\right)^{1/|p|}}\\
&\geq\frac{1}{\left(1+|p|\right)^{1/|p|}}\tag{since $((2n)^{|p|}-1)\theta\leq |p|$; Proposition \ref{proposition:theta-1/4}}\\
&\geq\frac{1}{e}
\end{align*}
We now proceed to establish the competitive ratio for $p \in \left( \nicefrac{-1}{4}, \frac{-1}{\log (2 n)} \right]$.
\begin{align*}
\sum_{i=1}^n v_i(x_i)^p &= \sum_{i\in S^c} v_i(x_i)^p + \sum_{i\in S} v_i(x_i)^p \\
& \leq \sum_{i\in S^c} \left(\frac{v_i(\omega_i)}{16(2n)^{2|p|}\log^3(2n)}\right)^p+ \sum_{i\in S} v_i(x_i)^p \tag{$v_i(x_i)\geq \frac{v_i(\omega_i)}{16(2n)^{2|p|}\log^3(2n)}$, for all $i\in S^c$, and $p<0$}\\ 
& \leq \sum_{i\in S^c}\left(\frac{v_i(\omega_i)}{16(2n)^{2|p|}\log^3(2n)}\right)^p+ \frac{1}{(2n)^p}\sum_{i\in S} v_i(\omega_i)^p \tag{$v_i(x_i)\geq \frac{1}{2n}>\frac{v_i(\omega_i)}{2n}$ for all $i\in[n]$}\\
& \leq \left(\frac{1}{16(2n)^{2|p|}\log^3(2n)}\right)^{p}\left(\sum_{i\in S^c}(v_i(\omega_i))^p+ \frac{1}{(2n)^p}\sum_{i\in S}(v_i(\omega_i))^p\right)\tag{$p<0$}  \label{-1/4:st2}\\
& = \left(\frac{1}{16(2n)^{2|p|}\log^3(2n)}\right)^{p} \ \gamma^p \sum_{i=1}^n v_i(\omega_i)^p \tag{by the definition of $\gamma$}\\
& \leq \left(\frac{1}{48(2n)^{2|p|}\log^3(2n)}\right)^{p} \sum_{i=1}^n v_i(\omega_i)^p \tag{since $\gamma\geq\frac{1}{{e}}>\frac{1}{3}$} \label{-1/4:st3}
\end{align*}
Here $p<0$ and, hence, exponentiating both sides of the last inequality by $1/p$ (and dividing by $n^{1/p}$), we get $\left( \frac{1}{n} \sum_i v_i(x_i)^p \right)^{1/p} \geq \frac{1}{48(2n)^{2|p|}\log^3(2n)} \left(\frac{1}{n}\sum_i v_i(\omega_i)^p \right)^{1/p}$, i.e., ${\rm M}_p(\x)\geq \frac{1}{48(2n)^{2|p|}\log^3(2n)} {\rm M}_p(\opt)$. This, overall, establishes the stated competitive ratio for allocation $\x$.
\subsection{Improved Guarantee for ${\pmb{\frac{-1}{\log(2n)}\leq p<0}}$}
\label{sec:5.4}
In this subsection we show that $\alg(\fac)$, with $\fac = 32\log^3 (2n)$, achieves a competitive ratio of $O\left(\log^3 n\right)$ for maximizing $p$-mean welfare when $-\frac{1}{\log(2n)}\leq p<0$.\footnote{This competitive ratio is tight, up to a poly-log factor.} Here, let $\opt=(\omega_i)_i$ denote an allocation with $p$-mean welfare, ${\rm M}_{p}(\opt)$, at least half of the optimal $p$-mean welfare and $v_i(\omega_i) \geq \nicefrac{1}{2n}$, for all $i \in [n]$; such an allocation is guaranteed to exist (Remark \ref{remark:near-opt}). 

We will show that the allocation $\x=(x_i)_i$---computed by $\alg \left( 32\log^3 (2n) \right)$---satisfies ${\rm M}_{p} (\x) \geq \frac{1}{192\log^3 (2n)} {\rm M}_{p} (\opt)$ and, hence, obtain the stated competitive ratio for $p$-mean welfare.  

Recall that $S_\fac (\opt) \coloneqq \left\{ i \in [n] : v_i(x_i) < \frac{1}{2\fac} v_i(\omega_i) \right\}$ is the subset of agents that are $(2\fac)$-sub-optimal in $\x$. For notational convenience, throughout this subsection, we will use $S$ for the set $S_\fac(\opt)$ and write $S^c \coloneqq [n]\setminus S$. In addition, considering the set $S$, we define the ratio 
\begin{align*}
    \theta \coloneqq \frac{\sum_{i\in S} v_i(\omega_i)^p}{\sum_{i=1}^n v_i(\omega_i)^p}
\end{align*}

Furthermore, we consider the multiplicative drop, $\gamma \in \mathbb{R}_+$, in the $p$-mean welfare from $M_p(\opt)$ when all the agents in $S$ experience a $\frac{1}{2n}$ factor decrement. Formally, 
\begin{align*}
\gamma \coloneqq \frac{\left(\sum_{i\in S^c} v_i(\omega_i)^p + \frac{1}{(2n)^p} \ \sum_{i\in S} v_i(\omega_i)^p \right)^{1/p}}{\left( \sum_{i=1}^n v_i(\omega_i)^p \right)^{1/p}}
\end{align*}
Note that, $v_i(x_i) \geq \frac{1}{2n} \geq \frac{1}{2n} v_i(\omega_i)$, in particular for all $i \in S$. Hence, lower bounding $\gamma$ will enable us to compare $\sum_{i=1}^n v_i(x_i)^p$ with $\sum_{i=1}^n v_i(\omega_i)^p$ by way of $\sum_{i\in S^c} v_i(\omega_i)^p + \frac{1}{(2n)^p} \ \sum_{i\in S} v_i(\omega_i)^p$. 

\begin{proposition}
\label{proposition:theta-small}
For any $p \in \left( \frac{-1}{\log (2 n)}, 0 \right)$, the ratio $\theta \leq \frac{1}{2\log(2n)}$, i.e., 
\begin{align*}
    \sum_{i\in S} v_i(\omega_i)^p\leq \frac{1}{2\log(2n)}\sum_{i=1}^n  v_i(\omega_i)^p.
\end{align*}
\end{proposition}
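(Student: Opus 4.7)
The plan is to follow essentially the same template as Proposition \ref{proposition:const-neg-p-theta}, but now exploiting the stronger assumption $|p|\leq 1/\log(2n)$ to kill the $(2n)^{|p|}$ factor that appeared there.

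First I would bound the cardinality of $S$ by applying Lemma \ref{commonlem:2} with the present choice $\fac = 32\log^3(2n)$. This gives
\[
|S| = |S_\fac(\opt)| \leq \frac{8n\log^2(2n)}{\fac} = \frac{n}{4\log(2n)}.
\]

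Next, I would lower-bound the denominator $\sum_{i=1}^n v_i(\omega_i)^p$. Since $p<0$ and $v_i(\omega_i) \leq 1$ for every agent $i$, each term satisfies $v_i(\omega_i)^p \geq 1$, so $\sum_{i=1}^n v_i(\omega_i)^p \geq n$.

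For the numerator, I would use the scaling assumption $v_i(\omega_i) \geq 1/(2n)$ together with $p<0$ to conclude $v_i(\omega_i)^p \leq (2n)^{|p|}$. This is the step where the new hypothesis enters: because $|p| \leq 1/\log(2n)$, we have
\[
(2n)^{|p|} = 2^{|p|\log(2n)} \leq 2^{1} = 2.
\]
Hence $\sum_{i \in S} v_i(\omega_i)^p \leq 2|S| \leq \frac{n}{2\log(2n)}$.

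Combining the two estimates,
\[
\theta = \frac{\sum_{i\in S} v_i(\omega_i)^p}{\sum_{i=1}^n v_i(\omega_i)^p} \leq \frac{n/(2\log(2n))}{n} = \frac{1}{2\log(2n)},
\]
which is the desired bound. There is no real obstacle here: the argument is a direct analog of Proposition \ref{proposition:const-neg-p-theta}, with the sharper cardinality bound $|S| \leq n/(4\log(2n))$ (from the larger threshold $\fac = 32\log^3(2n)$) and the control $(2n)^{|p|}\leq 2$ (from the constraint $|p|\leq 1/\log(2n)$) together absorbing what was previously a $(2n)^{|p|}$ factor in the statement.
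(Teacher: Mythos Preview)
Your proof is correct and essentially identical to the paper's: both bound $|S|\le n/(4\log(2n))$ via Lemma~\ref{commonlem:2} with $\fac=32\log^3(2n)$, use $v_i(\omega_i)^p\ge 1$ to get $\sum_i v_i(\omega_i)^p\ge n$, and use $(2n)^{|p|}\le 2$ to get $\sum_{i\in S} v_i(\omega_i)^p\le 2|S|$, then combine.
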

\begin{proof}
Since $p<0$ and $\frac{1}{2n}\leq v_i(\omega_i)\leq 1$, the inequalities $ 1 \leq v_i(\omega_i)^p \leq (2n)^{|p|}$ hold for all $i \in [n]$.
Summing we obtain  
\begin{equation}
\sum_{i=1}^n v_i(\omega_i)^p \geq n \label{-1/logn,0:st1}
\end{equation}
Moreover, 
\begin{align}
\sum_{i\in S} v_i(\omega_i)^p \leq \sum_{i\in S}\ (2n)^{|p|} = |S| \ (2n)^{|p|} \leq |S| \ (2n)^{\frac{1}{\log(2n)}}  = |S| \ 2^{\frac{\log (2n)}{\log(2n)}} =2|S| \label{ineq:twoS}
\end{align}
Here, the second inequality following from the fact that $ |p| \leq 1/\log (2n)$. Since in the context at hand $\fac = 32\log^3 (2n)$, Lemma \ref{commonlem:2} implies $|S| = |S_\fac (\opt)| \leq \frac{8n \log^2 (2n)}{\fac} = \frac{n}{4 \log (2n)}$. This bound and inequality (\ref{ineq:twoS}) imply 
\begin{align*}
\sum_{i\in S} v_i(\omega_i)^p \leq \frac{n}{2\log (2n)} \leq \frac{1}{2\log (2n)} \sum_{i=1}^n v_i(\omega_i)^p \tag{via inequality (\ref{-1/logn,0:st1})}
\end{align*}
Therefore, the proposition holds. 
\end{proof}

The next proposition builds upon the bound on $\theta$. 
\begin{proposition}
\label{proposition:thetaP}
For any $p \in \left( \frac{-1}{\log (2 n)}, 0 \right)$, we have $((2n)^{|p|}-1)\theta\leq |p|$.
\end{proposition}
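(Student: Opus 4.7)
The plan is to combine the previous proposition bounding $\theta$ with a simple convexity estimate for the exponential. Concretely, I would start from Proposition \ref{proposition:theta-small}, which gives $\theta \leq \frac{1}{2\log(2n)}$. Substituting this in yields
\[
\bigl((2n)^{|p|}-1\bigr)\theta \;\leq\; \frac{(2n)^{|p|}-1}{2\log(2n)},
\]
so it suffices to show $(2n)^{|p|}-1 \leq |p|\log(2n)$, after which the right-hand side becomes $\frac{|p|}{2} \leq |p|$ and the proposition follows.

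For that remaining inequality I would rewrite $(2n)^{|p|} = 2^{|p|\log(2n)}$ and use the elementary fact that $2^x \leq 1+x$ for all $x\in[0,1]$. This is immediate from convexity of $2^x$: the chord from $(0,1)$ to $(1,2)$ is the line $1+x$, and a convex function lies below its chords on the interval, so $2^x \leq 1+x$ on $[0,1]$. Since we are in the regime $|p| \leq \frac{1}{\log(2n)}$, the exponent $x \coloneqq |p|\log(2n)$ indeed lies in $[0,1]$, so $2^{|p|\log(2n)} - 1 \leq |p|\log(2n)$.

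Putting these two steps together gives
\[
\bigl((2n)^{|p|}-1\bigr)\theta \;\leq\; \frac{|p|\log(2n)}{2\log(2n)} \;=\; \frac{|p|}{2} \;\leq\; |p|,
\]
as required. There is no real obstacle here: the only subtle point is to notice that the hypothesis $p \in \left(\frac{-1}{\log(2n)},0\right)$ is exactly what is needed to keep $|p|\log(2n)$ in the range where the linear bound on $2^x$ applies, and that Proposition \ref{proposition:theta-small} supplies precisely the $\frac{1}{2\log(2n)}$ factor that cancels the $\log(2n)$ produced by the exponential estimate.
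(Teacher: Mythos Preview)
Your proof is correct and follows essentially the same approach as the paper: both invoke Proposition~\ref{proposition:theta-small} for $\theta\leq\frac{1}{2\log(2n)}$ and then bound $2^{|p|\log(2n)}-1$ by a linear function on $[0,1]$. Your convexity/chord estimate $2^x\leq 1+x$ is in fact slightly sharper than the paper's bound $2^x-1\leq 2x$ (obtained from the monotonicity of $2^x-2x$ on $[0,1]$), yielding $\frac{|p|}{2}$ rather than $|p|$, but the structure is identical.
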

\begin{proof}
The function $f(x) \coloneqq 2^{x}-2x$ is monotone decreasing in the range $x\in [0,1]$. Hence, $f(x) \leq f(0) = 1$ for all $x \in [0,1]$. In particular, setting $x=|p|\log (2n) \leq 1$ we get $2^{|p|\log (2n)}-2|p|\log(2n) \leq 1$. Simplifying this bounds leads to the desired inequality: 
\begin{align*}
|p| & \geq \left(2^{|p|\log (2n)}-1\right)\frac{1}{2\log (2n)} \\
& \geq \left(2^{|p|\log (2n)}-1\right)\theta  \tag{since $\theta \leq \frac{1}{2\log (2n)}$; Proposition \ref{proposition:theta-small}} \\
& = \left((2n)^{|p|}-1\right)\theta.
\end{align*}
\end{proof}

Using the last proposition, we now prove that $\gamma \geq \nicefrac{1}{e}$.
\begin{align*}
\gamma &=\left(\frac{\sum_{i\in S^c} v_i(\omega_i)^p + \frac{1}{(2n)^p} \sum_{i\in S} v_i(\omega_i)^p}{\sum_{i=1}^n v_i(\omega_i)^p} \right)^{1/p}\\
 &=\left(\frac{\sum_{i=1}^n v_i(\omega_i)^p -\sum_{i\in S} v_i(\omega_i)^p + \frac{1}{(2n)^p}  \sum_{i\in S} v_i(\omega_i)^p}{\sum_{i=1}^n v_i(\omega_i)^p}\right)^{1/p}\\
&=\left((1-\theta)+\frac{1}{(2n)^p}\theta\right)^{1/p}\\
&=\frac{1}{\left((1-\theta)+(2n)^{|p|}\theta\right)^{1/|p|}}\tag{since $p<0$}\\
&=\frac{1}{\left(1+((2n)^{|p|}-1)\theta\right)^{1/|p|}}\\
&\geq\frac{1}{\left(1+|p|\right)^{1/|p|}}\tag{since $((2n)^{|p|}-1)\theta\leq |p|$; Proposition \ref{proposition:thetaP}}\\
&\geq\frac{1}{e} 
\end{align*}

This lower bound on $\gamma$ leads us to the desired competitive ratio:
\begin{align*}
\sum_{i=1}^n v_i(x_i)^p &= \sum_{i\in S^c} v_i(x_i)^p + \sum_{i\in S} v_i(x_i)^p \\
& \leq \sum_{i\in S^c} \left(\frac{v_i(\omega_i)}{64\log^3(2n)}\right)^p + \sum_{i\in S} v_i(x_i)^p \tag{$v_i(x_i)\geq \frac{v_i(\omega_i)}{64\log^3(2n)}$, for all $i\in S^c$, and $p<0$}\\
& \leq \sum_{i\in S^c}\left(\frac{v_i(\omega_i)}{64\log^3(2n)}\right)^p+ \frac{1}{(2n)^p}\sum_{i\in S} v_i(\omega_i)^p\tag{$v_i(x_i)\geq \frac{1}{2n}>\frac{v_i(\omega_i)}{2n}$, for all $i$, and $p<0$} \\
& \leq (64\log^3(2n))^{-p}\left(\sum_{i\in S^c} v_i(\omega_i)^p + \frac{1}{(2n)^p}\sum_{i\in S} v_i(\omega_i)^p \right) \tag{$p<0$}\\
& = (64\log^3(2n))^{-p} \ \gamma^p \sum_{i=1}^n v_i(\omega_i)^p\tag{by definition of $\gamma$}\\
& \leq (192\log^3(2n))^{-p} \sum_{i=1}^n v_i(\omega_i)^p \tag{since $\gamma\geq\frac{1}{e}>\frac{1}{3}$}
\end{align*}
Here, $p<0$ and, hence, exponentiating both sides of the last inequality by $1/p$ (and dividing by $n^{1/p}$), we get $\left( \frac{1}{n} \sum_i v_i(x_i)^p \right)^{1/p} \geq \frac{1}{192\log^3(2n)} \left(\frac{1}{n}\sum_i v_i(\omega_i)^p \right)^{1/p}$, i.e., ${\rm M}_p(\x)\geq \frac{1}{192\log^3(2n)} {\rm M}_p(\opt)$. This, overall, establishes the stated competitive ratio for allocation $\x$.

\subsection{Tight Guarantee for ${\pmb{p \in (0, 1]}}$}
\label{sec:5.3}
This subsection shows that $\alg(\fac)$, with $\fac = 16\log^3 (2n)$, achieves a competitive ratio of $O\left(\log^3 n\right)$ for maximizing $p$-mean welfare with $0<p\leq 1$. 

Here, let $\opt=(\omega_i)_i$ denote an allocation with $p$-mean welfare, ${\rm M}_{p}(\opt)$, at least half of the optimal $p$-mean welfare and $v_i(\omega_i) \geq \nicefrac{1}{2n}$, for all $i \in [n]$; such an allocation is guaranteed to exist (Remark \ref{remark:near-opt}). We will show that the allocation $\x=(x_i)_i$---computed by $\alg \left( 16\log^3 (2n) \right)$---satisfies ${\rm M}_{p} (\x) \geq \frac{1}{128\log^3 (2n)} {\rm M}_{p} (\opt)$ and, hence, obtain the stated competitive ratio for $p$-mean welfare.

Recall that $S_\fac (\opt) \coloneqq \left\{ i \in [n] : v_i(x_i) < \frac{1}{2\fac} v_i(\omega_i) \right\}$ is the subset of agents that are $(2\fac)$-sub-optimal in $\x$. For notational convenience, throughout this subsection, we will use $S$ for the set $S_\fac(\opt)$ and write $S^c \coloneqq [n]\setminus S$. We partition the set $S=S_\fac (\opt)$ into $\log (2n)$ subsets, based on the values $v_i(\omega_i)$s; specifically, for each $\alpha \in \left\{ \frac{1}{2^k} : 1 \leq k \leq \log(2n) \right\}$, write subset $S\left( \alpha \right) = \left\{i \in S : \alpha \leq v_i(\omega_i) < 2\alpha \right\}$. Since $\frac{1}{2n} \leq v_j(\omega_j) < 1$, for all agents $j \in [n]$, the subsets $S\left(\alpha\right)$s form a partition of $S$. In addition, considering the set $S$, we define the ratio 
\begin{align*}
    \theta \coloneqq \frac{\sum_{i\in S} v_i(\omega_i)^p}{\sum_{i=1}^n v_i(\omega_i)^p}
\end{align*}
Our derivation of the competitive ratio rests on a case analysis with respect to $\theta$. Specifically, below we consider the following two cases - Case {\rm I}: $\theta \leq \frac{1}{\log(2n)}$ and Case {\rm II}: $\theta > \frac{1}{\log(2n)}$.

Also, for each $\alpha \in \left\{ \frac{1}{2^k} : 1 \leq k \leq \log(2n) \right\}$ and agent $i \in [n]$, write 
\begin{align*}
    v_i(x_i^\alpha) \coloneqq \sum_{t=1}^T x_i^{t,\alpha} v_i^t.
\end{align*}
Here, $x_i^{t,\alpha}$-s are the fractional assignments computed in \alg($\fac$).

\noindent
{\bf Case {\rm I}}:  $\theta \leq \frac{1}{\log(2n)}$ (i.e., $\sum_{i\in S} v_i(\omega_i)^p \leq \frac{1}{\log (2n)} \sum_{i=1}^n v_i(\omega_i)^p$). 

We consider the multiplicative drop, $\gamma \in \mathbb{R}_+$, in the $p$-mean welfare from $M_p(\opt)$ when all the agents in $S$ experience a $\frac{1}{2n}$ factor decrement. Formally, 
\begin{align*}
\gamma \coloneqq \frac{\left(\sum_{i\in S^c} v_i(\omega_i)^p + \frac{1}{(2n)^p} \ \sum_{i\in S} v_i(\omega_i)^p \right)^{1/p}}{\left( \sum_{i=1}^n v_i(\omega_i)^p \right)^{1/p}}
\end{align*}

Note that, $v_i(x_i) \geq \frac{1}{2n} \geq \frac{1}{2n} v_i(\omega_i)$, in particular for all $i \in S$. Hence, lower bounding $\gamma$ will enable us to compare $\sum_{i=1}^n v_i(x_i)^p$ with $\sum_{i=1}^n v_i(\omega_i)^p$ by way of $\sum_{i\in S^c} v_i(\omega_i)^p + \frac{1}{(2n)^p} \ \sum_{i\in S} v_i(\omega_i)^p$.  
Towards this, we prove the following proposition.
\begin{proposition}
In the current case (i.e., $\theta\leq \frac{1}{\log (2n)}$), we have $\gamma \geq \frac{1}{4}$.
\end{proposition}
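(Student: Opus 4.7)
The plan is to reduce the bound $\gamma \geq 1/4$ to a bound on the ratio $c/p$, where $c \coloneqq 1 - (2n)^{-p}$, using Bernoulli's inequality, and then control this ratio via a linearization of $1 - e^{-x}$.

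First, I would rewrite $\gamma^p$ in closed form in terms of $\theta$. Setting $A \coloneqq \sum_{i=1}^n v_i(\omega_i)^p$, the definition of $\theta$ gives $\sum_{i\in S} v_i(\omega_i)^p = \theta A$ and $\sum_{i\in S^c} v_i(\omega_i)^p = (1-\theta) A$, so a direct substitution into the definition of $\gamma$ collapses to
\begin{align*}
\gamma^p \;=\; (1-\theta) + \frac{\theta}{(2n)^p} \;=\; 1 - \theta c, \qquad \text{where } c \coloneqq 1 - (2n)^{-p} \in (0,1).
\end{align*}
Since $1/p \geq 1$, Bernoulli's inequality $(1-y)^{1/p} \geq 1 - y/p$ (valid for $y \in [0,1]$) then yields $\gamma \geq 1 - \theta c/p$, so it suffices to prove $\theta c/p \leq 3/4$.

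Next, I would bound $c/p$ by observing $c = 1 - e^{-p \ln(2n)} \leq p \ln(2n)$, which is the elementary inequality $1 - e^{-x} \leq x$ applied at $x = p \ln(2n) \geq 0$. This kills the $p$ in the denominator, and combined with the case hypothesis $\theta \leq 1/\log(2n)$ it gives
\begin{align*}
\frac{\theta c}{p} \;\leq\; \theta \ln(2n) \;\leq\; \frac{\ln(2n)}{\log(2n)} \;=\; \ln 2,
\end{align*}
where $\log$ denotes $\log_2$ throughout the paper. Hence $\gamma \geq 1 - \ln 2 > 1/4$, as required.

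The main obstacle I anticipate is handling the regime where $p$ is close to $0$: the exponent $1/p$ in the definition of $\gamma$ is then very large, and a crude estimate such as $\gamma^p \geq 1 - \theta$ would produce $\gamma \geq (1-\theta)^{1/p}$, which degenerates as $p \to 0$. The key point that makes the argument work is to \emph{keep} the factor $c = 1 - (2n)^{-p}$ rather than absorb it into $1$: since $c$ shrinks linearly in $p$ for small $p$, the ratio $c/p$ is uniformly bounded by $\ln(2n)$, and the case hypothesis on $\theta$ then exactly cancels that $\ln(2n)$ up to the constant $\ln 2$.
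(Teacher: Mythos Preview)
Your proof is correct and follows essentially the same route as the paper: both rewrite $\gamma^p = 1 - \theta c$ with $c = 1-(2n)^{-p}$, then use the linearization $c \leq p\ln(2n)$ (the paper phrases this as $x + 2^{-x} \geq 1$, i.e., $1-2^{-x}\leq x$) together with the case hypothesis $\theta \leq 1/\log(2n)$ to control $\theta c$ relative to $p$. The only difference is the final step: the paper sets $\tilde p \coloneqq \theta c$, shows $\tilde p \leq p$, swaps exponents to obtain $(1-\tilde p)^{1/p} \geq (1-\tilde p)^{1/\tilde p}$, and then uses $(1-x)^{1/x} \geq 1/4$ for $x \leq 1/2$; you instead apply Bernoulli's inequality directly to get $\gamma \geq 1 - \theta c/p \geq 1 - \ln 2$, which is a slightly cleaner variant and yields the marginally sharper constant $1-\ln 2 \approx 0.307$ in place of $1/4$.
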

\begin{proof}
We will first show that $p \geq \left(1-\frac{1}{(2n)^p}\right)\theta$. For this bound, note that the function $f(x) \coloneqq x+2^{-x}$ is monotone increasing for all $x \geq 0$. Hence, for any nonnegative $x$, we have $f(x) \geq f(0) = 1$. In particular, substituting $x=p\log (2n)$ gives us $p\log(2n) + 2^{-p\log (2n)} \geq 1$.\footnote{Recall that in the current context $0 < p \leq 1$.} Simplifying this equation leads to the stated inequality 
\begin{align}
p & \geq \left(1- 2^{-p\log (2n)}\right)\frac{1}{\log (2n)} \nonumber \\
& \geq \left(1- 2^{-p\log (2n)}\right)\theta \tag{since $\theta \leq \frac{1}{\log (2n)}$} \\
& = \left(1- \frac{1}{(2n)^p}\right)\theta \label{ineq:ptheta}
\end{align}
Now, write $\tilde{p} \coloneqq \left(1- \frac{1}{(2n)^p}\right)\theta$ to denote the right-hand-side of inequality (\ref{ineq:ptheta}). Also, note that $\tilde{p} \leq \theta\leq \frac{1}{\log(2n)}\leq \nicefrac{1}{2}$. These observations establish the proposition: 
\begin{align*}
\gamma &=\left(\frac{\sum_{i\in S^c} v_i(\omega_i)^p + \frac{1}{(2n)^p} \sum_{i\in S} v_i(\omega_i)^p}{\sum_{i=1}^n v_i(\omega_i)^p} \right)^{1/p}\\
 &=\left(\frac{\sum_{i=1}^n v_i(\omega_i)^p -\sum_{i\in S} v_i(\omega_i)^p + \frac{1}{(2n)^p}  \sum_{i\in S} v_i(\omega_i)^p}{\sum_{i=1}^n v_i(\omega_i)^p}\right)^{1/p}\\
&=\left((1-\theta)+\frac{1}{(2n)^p}\theta\right)^{1/p}\\
&=\left(1- \left(1- \frac{1}{(2n)^p}\right)\theta\right)^{1/p}\\
&=(1-\tilde{p})^{1/p}\\
&\geq (1-\tilde{p})^{1/\tilde{p}} \tag{since $0< (1-\tilde{p})<1$ and $\tilde{p}\leq p$; inequality (\ref{ineq:ptheta})}\\
& \geq \frac{1}{4} \tag{since $\tilde{p}\leq \frac{1}{2}$}
\end{align*}
\end{proof}
Using this proposition ($\gamma \geq 1/4$), we obtain the competitive ratio for the current case: 
\begin{align}
\sum_{i=1}^n v_i(x_i)^p &= \sum_{i\in S^c} v_i(x_i)^p + \sum_{i\in S} v_i(x_i)^p\nonumber\\
&  \geq \sum_{i\in S^c} \left(\frac{v_i(\omega_i)}{32\log^3(2n)}\right)^p + \sum_{i\in S} v_i(x_i)^p \tag{$v_i(x_i)\geq \frac{v_i(\omega_i)}{32\log^3(2n)}$, for all $i\in S^c$, and $p>0$}\\ 
& \geq \sum_{i\in S^c} \left(\frac{v_i(\omega_i)}{32\log^3(2n)}\right)^p + \frac{1}{(2n)^p}\sum_{i\in S} v_i(\omega_i)^p \tag{$v_i(x_i)\geq \frac{1}{2n}>\frac{v_i(\omega_i)}{2n}$ for all $i\in[n]$}\\
& \geq \frac{1}{\left(32\log^3(2n) \right)^{p}} \left(\sum_{i\in S^c} v_i(\omega_i)^p+ \frac{1}{(2n)^p}\sum_{i\in S} v_i(\omega_i)^p\right) \tag{$p>0$} \\
& = \frac{1}{\left(32\log^3(2n) \right)^{p}} \ \  \gamma^p \sum_{i=1}^n v_i(\omega_i)^p \tag{by the definition of $\gamma$} \\
& \geq \frac{1}{\left(128 \log^3(2n) \right)^{p}} \sum_{i=1}^n v_i(\omega_i)^p \label{c1p>0:st2}\tag{since $\gamma\geq\frac{1}{4}$} 
\end{align}
Here, $0<p \leq 1$ and, hence, exponentiating both sides of the last inequality by $1/p$ (and dividing by $n^{1/p}$), we get $\left( \frac{1}{n} \sum_i v_i(x_i)^p \right)^{1/p} \geq \frac{1}{128\log^3(2n)} \left(\frac{1}{n}\sum_i v_i(\omega_i)^p \right)^{1/p}$, i.e., ${\rm M}_p(\x)\geq \frac{1}{128\log^3(2n)} {\rm M}_p(\opt)$. This, overall, establishes the stated competitive ratio for allocation $\x$ in the case $\theta\leq \frac{1}{\log (2n)}$. \\

\noindent
\textbf{Case {\rm II}:} $\theta > \frac{1}{\log(2n)}$ (i.e., $\sum_{i=1}^n v_i(\omega_i)^p < \log (2n)\sum_{i\in S} v_i(\omega_i)^p$). 

In this case the contribution of the $(2\fac)$-sub-optimal agents (i.e., agents in $S$) in the $p$-mean welfare of $\opt$ is sufficiently high. Here, we bound the size of the sets $S\left( \alpha \right) \coloneqq \left\{i \in S  : \alpha \leq v_i(\omega_i) < 2\alpha \right\}$, which partition $S$ (with $\alpha \coloneqq \left\{ \frac{1}{2^k} : 1 \leq k \leq \log(2n) \right\}$).  Recall that  $v_i(x_i^\alpha) = \sum_{s=1}^T x_i^{s,\alpha} v_i^s$. 

\begin{proposition}\label{plem:3}
For any $\alpha\in \left\{ \frac{1}{2^k} : 1 \leq k \leq \log(2n)\right\}$, the size $|S(\alpha)|\leq \frac{6}{\alpha} \log (2n) \sum_{i=1}^nv_i(x_i^\alpha)$.
\end{proposition}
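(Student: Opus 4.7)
The plan is to adapt the charging argument from the proof of Lemma \ref{commonlem:1} (and its specialization in Lemma \ref{commonlem:2}), but with the right-hand side expressed in terms of the algorithm's $\alpha$-allocation $\sum_j v_j(x_j^\alpha)$ rather than just the number of agents $n$. The key observation is that an agent $i \in S(\alpha)$ must actually contribute to the algorithm's $\alpha$-part throughput, because $i$ stays active for $\alpha$ throughout, witnesses a large-value argmax in every round, and has already accumulated value at least roughly $\alpha$ in $\opt$ before the last round.

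First I would recall the containment $S(\alpha) \subseteq B_T^\alpha \cap H(\alpha, \opt)$ established in the proof of Lemma \ref{commonlem:2}; this uses that $v_i(x_i) < v_i(\omega_i)/(2\fac) < 2\alpha/(2\fac) = \alpha/\fac$, so $i \in A_T^\alpha$, and the last good has value at most $1/n^2 \le \alpha/4$, so in fact $i \in B_T^\alpha$. Then, since $i \in B_T^\alpha$, the defining condition in Line \ref{line:btalpha} (applied at iteration $T-1$) yields $\sum_{s=1}^{T-1} v_i^s > 1 - \alpha/4$, which combined with $v_i(\omega_i) \geq \alpha$ and $\omega_i^s \leq 1$ gives
\[
\sum_{s=1}^{T-1} \omega_i^s v_i^s \ \geq \ v_i(\omega_i) - \sum_{s=T}^T \omega_i^s v_i^s \ \geq \ \alpha - \tfrac{\alpha}{4} \ = \ \tfrac{3\alpha}{4}.
\]

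Next, since $i \in B_T^\alpha \subseteq A_T^\alpha$, the agent $i$ was active throughout, i.e., $i \in A_s^\alpha$ for every $s \leq T-1$. Writing $a_s^\alpha \coloneqq \argmax_{j \in A_s^\alpha} v_j^s$, this gives $v_i^s \leq v_{a_s^\alpha}^s$. Summing the preceding display over $i \in S(\alpha)$ and swapping order of summation,
\[
\tfrac{3\alpha}{4}\,|S(\alpha)| \ \leq \ \sum_{s=1}^{T-1} \sum_{i \in S(\alpha)} \omega_i^s v_i^s \ \leq \ \sum_{s=1}^{T-1} v_{a_s^\alpha}^s \Big(\sum_{i \in S(\alpha)} \omega_i^s\Big) \ \leq \ \sum_{s=1}^{T-1} v_{a_s^\alpha}^s,
\]
using $\sum_i \omega_i^s \leq 1$. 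Now apply $x_{a_s^\alpha}^{s,\alpha} \geq 1/(4\log(2n))$ from Line \ref{line:max-active-agent} to convert $v_{a_s^\alpha}^s \leq 4\log(2n)\, x_{a_s^\alpha}^{s,\alpha} v_{a_s^\alpha}^s$, and then bound the single-agent term by the full sum $\sum_j x_j^{s,\alpha} v_j^s$. This yields
\[
\tfrac{3\alpha}{4}\,|S(\alpha)| \ \leq \ 4\log(2n) \sum_{s=1}^{T} \sum_{j=1}^n x_j^{s,\alpha} v_j^s \ = \ 4\log(2n) \sum_{j=1}^n v_j(x_j^\alpha),
\]
and rearranging gives $|S(\alpha)| \leq \frac{16}{3\alpha}\log(2n) \sum_j v_j(x_j^\alpha) \leq \frac{6}{\alpha}\log(2n) \sum_j v_j(x_j^\alpha)$, which is the claim.

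The main subtlety is just tracking that the containment $S(\alpha) \subseteq B_T^\alpha$ really does let us invoke the ``$i$ active everywhere with $v_i^s \leq v_{a_s^\alpha}^s$'' step, plus the mild book-keeping of losing the last round when converting $\sum_{s=1}^T v_i^s = 1$ into $\sum_{s=1}^{T-1}\omega_i^s v_i^s \geq 3\alpha/4$; the rest is precisely the charging already used in Lemma \ref{commonlem:1}, with the output expressed in algorithm-allocation form.
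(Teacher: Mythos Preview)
Your proof is correct and follows essentially the same approach as the paper. The paper's own proof is more terse: after establishing $S(\alpha) \subseteq B_T^\alpha \cap H(\alpha,\opt)$ exactly as you do, it simply cites inequality (\ref{ineq:social-welf}) from the proof of Lemma \ref{commonlem:1} (with $t=T$) to obtain $\tfrac{3\alpha}{4}|S(\alpha)| \leq 4\log(2n)\sum_{j}\sum_{s=1}^T x_j^{s,\alpha}v_j^s = 4\log(2n)\sum_j v_j(x_j^\alpha)$, whereas you re-derive that charging step in full; the only cosmetic difference is that at $t=T$ the paper uses $\sum_{s=1}^T \omega_i^s v_i^s = v_i(\omega_i) \geq \alpha > \tfrac{3\alpha}{4}$ directly, while you peel off the last round via $v_i^T < \alpha/4$.
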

\begin{proof}
For each agent $i\in S(\alpha) \subseteq S = S_\fac (\opt)$, we have $v_i(x_i) < \frac{v_i(\omega_i)}{2 \fac} < \frac{2\alpha}{2\fac}=\frac{\alpha}{\fac}$. Since $\sum_{s=1}^T x^{s, \alpha}_i v^s_i \leq v_i(x_i)$, we get that agent $i$ continues to be in the active set (for $\alpha$) throughout the execution of the algorithm, i.e., $i \in A^\alpha_T$. In fact, agent $i \in B^\alpha_T$, since the value of the last good $T$ is at most $1/n^2$, which in turn in upper bounded by $\alpha/4$ (see Line \ref{line:btalpha} of \alg($\fac$)).  
Furthermore, for each agent $i \in S(\alpha)$, we have $v_i(\omega_i) \geq \alpha$, i.e., $ i \in H(\alpha ,\opt)$. These observations imply that every agent $ i \in S(\alpha)$ is contained in $B^\alpha_T \cap H(\alpha ,\opt)$ as well; equivalently, $S(\alpha) \subseteq B^\alpha_T \cap H(\alpha ,\opt)$. 

Therefore, inequality (\ref{ineq:social-welf}) (obtained in Lemma \ref{commonlem:1}) gives us 

\begin{align}
    \left|S\left( \alpha \right)\right| \leq |B^\alpha_T \cap H(\alpha ,\opt) | \leq  \frac{6}{\alpha} \log (2n) \sum_{i=1}^n \sum_{s=1}^T x^{s, \alpha}_i v^s_i = \frac{6}{\alpha} \log (2n) \sum_{i=1}^n v_i(x_i^\alpha)    
\end{align} 
\end{proof}

Note that $v_i(x_i) = \sum_{\alpha} v_i(x_i^\alpha)$; throughout, we will sum $\alpha$ across the set $\left\{ \frac{1}{2^k} : 1 \leq k \leq \log(2n)\right\}$. The next two propositions provide supporting technical results. 

\begin{proposition}\label{plem:4}
For $0<p\leq1$ and each agent $i \in [n]$, we have  $\sum_{\alpha} v_i(x_i^\alpha)^p \leq \log^{1-p}(2n) \cdot v_i(x_i)^p$.
\end{proposition}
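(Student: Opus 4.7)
The plan is to apply Jensen's inequality (concavity of $x \mapsto x^p$ for $0 < p \leq 1$) to the $\log(2n)$ nonnegative quantities $\{v_i(x_i^\alpha)\}_\alpha$, together with the observation that $\sum_\alpha v_i(x_i^\alpha) \leq v_i(x_i)$.

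First, I would establish the underlying linear relation between $v_i(x_i)$ and the parts $v_i(x_i^\alpha)$. From Line 11 of $\alg(\fac)$ we have $x_i^t = \tfrac{1}{2n} + \sum_\alpha x_i^{t,\alpha}$, so additivity of the valuation gives
\begin{align*}
v_i(x_i) \;=\; \sum_{t=1}^T x_i^t v_i^t \;=\; \frac{1}{2n}\sum_{t=1}^T v_i^t + \sum_\alpha \sum_{t=1}^T x_i^{t,\alpha} v_i^t \;=\; \frac{1}{2n} + \sum_\alpha v_i(x_i^\alpha),
\end{align*}
using the scaling $\sum_t v_i^t = 1$. In particular $\sum_\alpha v_i(x_i^\alpha) \leq v_i(x_i)$.

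Next, the power-mean / concavity step. Let $K \coloneqq \log(2n)$ be the number of dyadic values $\alpha = 1/2^k$, $1 \le k \le K$. Since $x \mapsto x^p$ is concave on $[0,\infty)$ for $0 < p \leq 1$, Jensen's inequality applied to the $K$ nonnegative numbers $\{v_i(x_i^\alpha)\}_\alpha$ yields
\begin{align*}
\frac{1}{K}\sum_{\alpha} v_i(x_i^\alpha)^p \;\leq\; \left(\frac{1}{K}\sum_\alpha v_i(x_i^\alpha)\right)^p,
\end{align*}
i.e. $\sum_\alpha v_i(x_i^\alpha)^p \leq K^{1-p}\bigl(\sum_\alpha v_i(x_i^\alpha)\bigr)^p$. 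Combining with the bound $\sum_\alpha v_i(x_i^\alpha) \le v_i(x_i)$ and using that $y \mapsto y^p$ is monotone nondecreasing for $p>0$, we obtain
\begin{align*}
\sum_\alpha v_i(x_i^\alpha)^p \;\leq\; K^{1-p}\, v_i(x_i)^p \;=\; \log^{1-p}(2n)\cdot v_i(x_i)^p,
\end{align*}
which is the desired inequality. There is no real obstacle here; the only thing to be careful about is the direction of Jensen (which requires $0<p\leq 1$) and the monotonicity step (which requires $p > 0$), both of which hold in the stated regime.
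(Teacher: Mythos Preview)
Your proof is correct and follows essentially the same approach as the paper: both apply Jensen's inequality to the concave map $x\mapsto x^p$ over the $\log(2n)$ terms $v_i(x_i^\alpha)$ and then use $\sum_\alpha v_i(x_i^\alpha)\le v_i(x_i)$. If anything, you are slightly more careful than the paper, which treats $v_i(x_i)=\sum_\alpha v_i(x_i^\alpha)$ as an equality (ignoring the extra $\tfrac{1}{2n}$ from Line~\ref{line:prop}); your inequality version is the cleaner statement and yields the same bound.
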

\begin{proof}
Note that $f(x)=x^p$ is a concave function for $p\in(0,1]$. Hence, 
\begin{align*}
\left(\frac{v_i(x_i)}{\log(2n)}\right)^p &= \left(\frac{\sum_{\alpha} v_i(x_i^\alpha)}{\log (2n)}\right)^p\\
&\geq \frac{\sum_{\alpha} v_i(x_i^\alpha)^p}{\log(2n)} \tag{via Jensen's inequality}\\
\end{align*}
Therefore, the desired inequality holds $\sum_{\alpha} v_i(x_i^\alpha)^p \leq \log^{1-p}(2n) \ v_i(x_i)^p$.
\end{proof}

\begin{proposition}\label{plem:5}
Let $\beta, a_1, a_2, \ldots, a_n \in \mathbb{R}_{\geq 0}$ be non-negative real numbers such that $0\leq a_i\leq\beta<1$, for all $i \in [n]$. Also, let $z = \sum_{i=1}^n a_i$. Then, for any $p \in (0,1]$, we have $\sum_{i=1}^n a_i^p\geq \beta^{p-1} \ z$.
\end{proposition}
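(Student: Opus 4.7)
The plan is to exploit the fact that, since $p \leq 1$, the exponent $p - 1$ is non-positive, so the function $x \mapsto x^{p-1}$ is non-increasing on $(0, \infty)$. Combined with the hypothesis $a_i \leq \beta$, this immediately yields a pointwise comparison that summing establishes the inequality.

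Concretely, I would first dispose of the degenerate cases: if $\beta = 0$, then all $a_i = 0$, both sides equal zero (with the convention $0 \cdot \beta^{p-1} = 0$), and the claim is trivial. So assume $\beta \in (0, 1)$. Next, I would separate the indices based on whether $a_i = 0$. For indices with $a_i = 0$, we have $a_i^p = 0$ and the contribution to $z$ is also zero, so they may be ignored. For each index $i$ with $a_i > 0$, since $0 < a_i \leq \beta < 1$ and $p - 1 \leq 0$, the monotonicity of $x \mapsto x^{p-1}$ on $(0,\infty)$ gives
\[
a_i^{p-1} \;\geq\; \beta^{p-1}.
\]
Multiplying both sides by $a_i > 0$ yields $a_i^p \geq \beta^{p-1} \cdot a_i$.

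Summing this pointwise inequality over all $i \in [n]$ (noting that the zero terms contribute nothing to either side) gives
\[
\sum_{i=1}^n a_i^p \;\geq\; \beta^{p-1} \sum_{i=1}^n a_i \;=\; \beta^{p-1} \cdot z,
\]
which is the desired bound. There is essentially no obstacle here; the only subtlety is the boundary case $a_i = 0$ when $p < 1$ (where $a_i^{p-1}$ is undefined), which is handled by simply excluding such indices from the pointwise step, since their contribution to both sides of the final inequality vanishes.
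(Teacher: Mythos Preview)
Your proof is correct and considerably simpler than the paper's. The paper argues via a smoothing/extremal procedure: it repeatedly transfers mass between pairs $a_i \geq a_j$ (replacing them by $a_i + \delta$ and $a_j - \delta$ with $\delta = \min\{\beta - a_i, a_j\}$), observes that such transfers can only decrease $\sum_i a_i^p$ because $(a+\delta)^p + (b-\delta)^p$ is non-increasing in $\delta$ for $p \in (0,1]$, and drives the sequence to one in which every entry is either $0$ or $\beta$ except possibly a single leftover $\widehat{\beta} \in [0,\beta]$; it then checks the inequality by hand on this extremal configuration. You bypass all of this by noting the pointwise bound $a_i^p = a_i \cdot a_i^{p-1} \geq a_i \cdot \beta^{p-1}$, valid for $a_i \in (0,\beta]$ because $x \mapsto x^{p-1}$ is non-increasing when $p \leq 1$, and then summing. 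Your argument is a one-liner, while the paper's has the side benefit of identifying the minimizing configuration explicitly; for the purpose of the proposition, though, that extra information is never used, so your route is strictly more economical.
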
 
\begin{proof}
We start with the given sequence $a_1, a_2, \ldots, a_n \in \mathbb{R}_{\geq 0}$ and transform it into another sequence $\widehat{a}_1, \widehat{a}_2, \ldots, \widehat{a}_n$ with the same sum ($\sum_{i=1}^n \widehat{a}_i = \sum_{i=1}^n a_i = z$) and the property that all the $\widehat{a}_i$s---besides one---are either zero or $\beta$. Furthermore, we will show that $\sum_{i=1}^n a_i^p \geq \sum_{i=1}^n \widehat{a}_i^p \geq \beta^{p-1} \ z$. This will establish the proposition.   

Consider updating any two numbers $a_i \geq a_j$ in the given sequence $\{a_i\}_i$ by setting $\tilde{a}_i = a_i + \delta$ and $\tilde{a}_j = a_j - \delta$ with $\delta = \min \{ \beta - a_i, a_j \}$. Note that, for $p \in (0,1]$ and any $a\geq b>0$, the function $f(\delta) \coloneqq (a+\delta)^p+(b-\delta)^p$ is non-increasing in the range $\delta \in [0,b]$. Hence, the update gives us $a^p_i + a^p_j \geq \tilde{a}_i^p + \tilde{a}_j^p $ and maintains the sum. Also, by construction,  in the new sequence, $\{\tilde{a}_i, \tilde{a}_j \} \cup \{a_k\}_{k \neq i, j}$, the maximum value is still at most $\beta$. 

A repeated application of such an updated leads to a sequence $\{ \widehat{a}_i \}_i$ such that $\sum_{i=1}^n a_i^p \geq \sum_{i=1}^n \widehat{a}_i^p$ along with $\sum_{i=1}^n \widehat{a}_i = z$ and $0\leq \widehat{a}_i \leq \beta$, for all $i \in [n]$. Furthermore, all the numbers $\widehat{a}_i$s---besides at most one---are either zero or $\beta$. Hence, without loss of generality, assume that $\widehat{a}_i \in \{0, \beta\}$, for all $i < n$, and $\widehat{a}_n =  \widehat{\beta}$, for some $0 \leq \widehat{\beta} \leq \beta$. In this setup, 
\begin{align}
\sum_{i=1}^n \widehat{a}_i^p & =\frac{z-\widehat{\beta}}{\beta}\beta^p+ \widehat{\beta}^p \nonumber \\ 
& = z \ \beta^{p-1}+ \widehat{\beta}^{p}-\widehat{\beta} \ \beta^{p-1} \nonumber \\
&\geq z \ \beta^{p-1} \label{ineq:blah}
\end{align}
The last inequality follows from $\widehat{\beta}^{p}-\widehat{\beta} \ \beta^{p-1} = \widehat{\beta} \left( \widehat{\beta}^{p-1} - \beta^{p-1} \right) \geq 0$; recall that $\widehat{\beta} \leq \beta$ and $p \in (0,1]$. 
Inequality (\ref{ineq:blah}) and the bound $\sum_{i=1}^n a_i^p \geq \sum_{i=1}^n \widehat{a}_i^p$ establish the proposition. 
\end{proof}

Using the bounds on $|S(\alpha)|$s and $v_i(x_i^\alpha)$s, we now establish the competitive ratio. Inequality (\ref{ineq:allagents}) gives us $v_i(x_i^\alpha)<\frac{2\alpha}{\fac}=\frac{\alpha}{8\log^3(2n)}$. Considering $\beta = \frac{\alpha}{8\log^3(2n)}$, $a_i= v_i(x_i^\alpha)$ (for all $i$), $z = \sum_{i=1}^n a_i$, and applying Proposition \ref{plem:5} we get
\begin{equation}
    \sum_{i=1}^n v_i(x_i^\alpha) \leq \frac{\sum_{i=1}^n v_i(x_i^\alpha)^p}{\left(\frac{\alpha}{8\log^3(2n)}\right)^{p-1}} \label{plem:6}
\end{equation}

Finally, we have the bound:
\begin{align*}
\sum_{i=1}^n v_i(w_i)^p &\leq \log (2n)\sum_{i\in S} v_i(w_i)^p \tag{current case}\\
& \leq \log (2n) \sum_{\alpha} \sum_{i\in S(\alpha)} v_i(w_i)^p \tag{$S(\alpha)$s partition $S$}\\
& \leq  \log (2n)\sum_{\alpha}\sum_{i\in S(\alpha)} (2\alpha)^p \tag{since $v_i(w_i)<2\alpha$, for all $i\in S(\alpha)$, and $p>0$}\\
& =  \log (2n)\sum_{\alpha} (2\alpha)^p|S(\alpha)|\\
&  \leq \log (2n)\sum_{\alpha} 2^p\alpha^{p}  \frac{6 \log (2n)}{\alpha} \sum_{i=1}^n v_i(x_i^\alpha) \tag{via Proposition \ref{plem:3}} \\
&   \leq \log (2n)\sum_{\alpha} 2^p\alpha^{p-1}\cdot 6 \log (2n) \frac{\sum_{i=1}^n v_i(x_i^\alpha)^p}{\left(\frac{\alpha}{8\log^3(2n)}\right)^{p-1}} \tag{via inequality (\ref{plem:6})}\\
&= 2^p\cdot 6\log^2 (2n)\cdot (8\log^3(2n))^{p-1}\sum_{i=1}^n\sum_{\alpha} v_i(x_i^\alpha)^p\\
&\leq 2^p\cdot 6\log^{3-p}(2n)\cdot (8\log^3(2n))^{p-1}\sum_{i=1}^n v_i(x_i)^p \tag{via Proposition \ref{plem:4}}\\
&= 2^p\cdot 6\cdot 8^{p-1}\cdot \log^{2p}(2n)\sum_{i=1}^n v_i(x_i)^p\\
& \leq 2^p\cdot 8^{p}\cdot \log^{2p}(2n)\sum_{i=1}^n v_i(x_i)^p
\end{align*}
Here,  $0<p \leq 1$ and, hence, exponentiating both sides of the last inequality by $1/p$ (and dividing by $n^{1/p}$), we get $\left( \frac{1}{n} \sum_i v_i(x_i)^p \right)^{1/p} \geq \frac{1}{16\log^2(2n)} \left(\frac{1}{n}\sum_i v_i(\omega_i)^p \right)^{1/p}$, i.e., ${\rm M}_p(\x)\geq \frac{1}{16\log^2(2n)} {\rm M}_p(\opt)$. This, overall, establishes the stated competitive ratio for allocation $\x$ in the case $\theta> \frac{1}{\log (2n)}$.

Overall, this completes the analysis for $p \in (0,1]$.
\section{Lower Bounds}
\label{section:lower-bounds}
This section complements our positive results by establishing lower bounds for online maximization of $p$-mean welfare.  

\subsection{Sub-Optimality for all ${\pmb{p<1}}$}
\label{sec:6.1}
We provide an adversarial instance to show that, for any $p<1$, there does not exist an online algorithm that computes allocations with optimal  $p$-mean welfare.\footnote{Recall that, for $p=1$, a greedy online algorithm finds an allocation with maximum possible (average) social welfare.}    

\LowerBoundSubOpt*
\begin{proof}
We start with the description of the hard instance for any given online algorithm $\mathcal{A}$. Consider a setting with four agents and five goods. Write $\x = (x_i^t)_{i \in [4], t \in [5]}$ to denote the allocation returned by algorithm $\mathcal{A}$. 

In the first round (i.e., for the first good), agents $1$ and $2$ have value $\nicefrac{1}{2}$, and the remaining agents have value $0$. Write $h_1$ to denote the agent that received the maximum allocation in the first round (i.e., the maximum fraction of the first good), $h_1 \coloneqq \argmax_{i\in\{1,2\}}x_i^1$. Also, let $\ell_1 \coloneqq \{1, 2\} \setminus \{h_1\}$, i.e., $\ell_i$ receives a lower (than $h_1$) fraction of the first good. We will assume, without loss of generality, that each good is assigned only among agents that have nonzero value for the good. 

In the second round (i.e., for the second good), agents $3$ and $4$ have value $\nicefrac{1}{2}$, and the remaining agents have value $0$. For this round, write $h_2 \coloneqq \argmax_{i\in\{3,4\}} x_i^2$ and $\ell_2 \coloneqq \{3, 4 \} \setminus \{h_2\}$. 

In the third round, agents $\ell_1$ and $\ell_2$ have value of $\nicefrac{1}{2}$ and the other agents have value $0$. The fourth good is valued at $\nicefrac{1}{2}$ by agent $h_1$ and the other agents value the good at $0$. Finally, in the fifth round, agent $h_2$ has value $\nicefrac{1}{2}$ and the other agents have a value of $0$.

Note that the values in the third, fourth, and fifth rounds are set (adversarially) by considering the allocations made by the algorithm $\mathcal{A}$ in the first two rounds. Also, by construction, for each agent $i$, we have the scaling $\sum_{t=1}^5 v_i^t=1$.\footnote{We can also ensure that $v_i^t \le 1/n^2$, by just dividing each round into $2/n^2$ identical sub-rounds.}

The matrix $B$ (below) provides an example of such a hard instance. The rows of $B$ correspond to the five rounds (goods) and the columns correspond to the four agents. In this example, $\ell_1=1$, $h_1=2$, $\ell_2=3$, and $h_2=4$. The lower bound example can be extended to general $n$ agents by repeating the five rounds $n/4$ times. In each repetition, we select the agents $4k-3$, $4k-2$, $4k-1$, and $4k$ for the $k$th group of five rounds, and for the remaining agents set the value to be zero. Matrix $A$ highlights such an extension.

\[B:=\begin{bmatrix}
\nicefrac{1}{2} & \nicefrac{1}{2} & 0 &0 \\
0 & 0 & \nicefrac{1}{2} & \nicefrac{1}{2} \\
\nicefrac{1}{2} & 0 & \nicefrac{1}{2} & 0\\
0 & \nicefrac{1}{2} & 0 & 0 \\
0 & 0 & 0 & \nicefrac{1}{2} 
\end{bmatrix},\quad\quad
A := \begin{Large}\begin{bmatrix} 
    B & \dots & 0 & \dots &\text{ }0\\
    \vdots & \ddots &\vdots &  & \vdots\\
    0 &  \dots  & B & \dots &\text{ }0\\
    \vdots& &\vdots & \ddots & \vdots &\\
    0&\dots &0 &\dots & B&\\
    \end{bmatrix}\end{Large}
\]

Hence, for the remainder of the proof, we just focus on such an instance with the four agents. First, we derive a lower bound on the optimal $p$-mean welfare, ${\rm M}_p^*$. In particular, consider the following (offline) allocation: in the first round, the fractional assignment for agent $\ell_1$ is $\nicefrac{3}{4}$ and for agent $h_1$ it is $\nicefrac{1}{4}$. In the second round, the fractional assignments of agents $\ell_2$ and $h_2$ are $\nicefrac{3}{4}$ and $\nicefrac{1}{4}$, respectively. In the third round, the good is divided equally among the agents $\ell_1$ and $\ell_2$. In the fourth and fifth rounds, respectively, agents $h_1$ and $h_2$ receive the entire good. This allocation ensures that every agent achieves a total value of $5/8$. Hence, the optimal $p$-mean welfare ${\rm M}_p^* \geq \nicefrac{5}{8}$.

Now, we upper bound the $p$-mean welfare of the allocation $\x = (x_i)_{i \in [4]}$ returned by the online algorithm $\mathcal{A}$. Note that in an allocation, if a nonzero fraction of a good is allocated to an agent who values the good at $0$, then we can improve the $p$-mean welfare by allocating this fraction to an agent who values the good at $\nicefrac{1}{2}$. Hence, we assume that in allocation $\x$ each good is assigned among the agents who have a value of $\frac{1}{2}$ for the good. Let $a \coloneqq x^1_{\ell_1}$, $b\coloneqq x^2_{\ell_2}$, and $c \coloneqq x^3_{\ell_1}$. Then, $x^1_{h_1} = 1 - a$ and $x^2_{h_2} = 1 - b$. In addition, we have $x^4_{h_1} =1$ and $x^5_{h_2}=1$. The values of the agents can be expressed in terms of these fractional assignments as follows: $v_{\ell_1}(x_{\ell_1}) = \frac{a+c}{2}$, $v_{h_1}(x_{h_1}) = \frac{1}{2} + \frac{1-a}{2}$, $v_{\ell_2} (x_{\ell_2}) = \frac{b}{2}+\frac{1-c}{2}$, and $v_{h_2}(x_{h_2}) = \frac{1}{2}+\frac{1-b}{2}$. By construction, we have $a \leq 1/2$, $b\leq 1/2$, and we assume, without loss of generality, that $c\leq 1/2$. 

The $p$-mean welfare of the computed allocation is ${\rm M}_p \left( \frac{a+c}{2},\frac{1}{2}+\frac{1-a}{2},\frac{b}{2}+\frac{1-c}{2},\frac{1}{2}+\frac{1-b}{2}\right)$. {If $p=-\infty$ (Egalitarian Welfare), we have $M_p(\x)\leq \frac{a+c}{2}\leq \frac{1}{2}<\frac{5}{8}={\rm M}_p^*$. Therefore, for the rest of the proof let us assume that $p\in (-\infty,1)$.}

For any $p\in (-\infty,1)$, $c, c' \in \mathbb{R}_{\geq 0}$, and $0 \leq L < H$, the function $f(\delta) \coloneqq {\rm M}_p(L+ \delta, H-\delta, c, c')$ is strictly increasing in the range $\delta \in \left( 0, \frac{H-L}{2} \right)$ (see Proposition \ref{proposition:pigou-dalton} in Appendix \ref{appendix:lower-bound}). Using this fact repeatedly, we obtain:
\begin{align}
{\rm M}_p\left( \x \right) &= {\rm M}_p\left(\frac{a+c}{2},\frac{1}{2}+\frac{1-a}{2},\frac{b}{2}+\frac{1-c}{2},\frac{1}{2}+\frac{1-b}{2}\right) \nonumber \\ 
& \leq {\rm M}_p\left(\frac{a+c}{2}+\frac{1-2a}{4}, \frac{1}{2}+\frac{1-a}{2}-\frac{1-2a}{4},\frac{b}{2}+\frac{1-c}{2},\frac{1}{2}+\frac{1-b}{2}\right)\label{eq:eq2}\\
& \leq {\rm M}_p\left(\frac{1+2c}{4}, \frac{3}{4},\frac{b}{2}+\frac{1-c}{2}+\left(\frac{1}{4}-\frac{b}{2}\right),\frac{1}{2}+\frac{1-b}{2}-\left(\frac{1}{4}-\frac{b}{2}\right)\right) \label{eq:eq3}\\
& = {\rm M}_p\left(\frac{1+2c}{4},\frac{3}{4},\frac{1}{4}+\frac{1-c}{2},\frac{3}{4}\right) \nonumber \\ 
& \leq {\rm M}_p\left(\frac{1+2c}{4}+\frac{1-2c}{4},\frac{3}{4},\frac{1}{4}+\frac{1-c}{2}-\frac{1-2c}{4},\frac{3}{4}\right)\label{eq:eq5}\\
& = {\rm M}_p\left(\frac{1}{2},\frac{3}{4},\frac{1}{2},\frac{3}{4} \right) \nonumber \\ 
& < {\rm M}_p\left(\frac{1}{2}+\frac{1}{8}, \frac{3}{4}-\frac{1}{8}, \frac{1}{2}+\frac{1}{8}, \frac{3}{4}-\frac{1}{8} \right) = {\rm M}_p\left(\frac{5}{8}, \frac{5}{8}, \frac{5}{8}, \frac{5}{8} \right) =\frac{5}{8} \nonumber \\ 
\end{align}
Here, Inequality \eqref{eq:eq2} follows from $\frac{a+c}{2}\leq \frac{1}{2}+\frac{1-a}{2}$ and $0\le \frac{1-2a}{4} \le \left(\frac{1}{2}+\frac{1-a}{2}- \frac{a+c}{2}\right)/2$.
Inequality \eqref{eq:eq3} utilizes $\frac{b}{2}+\frac{1-c}{2}\leq \frac{1}{2}+\frac{1-b}{2}$ and $0\le \frac{1}{4}-\frac{b}{2} \le \left(\frac{1}{2}+\frac{1-b}{2}-\frac{b}{2}-\frac{1-c}{2}\right)/2$.
Inequality \eqref{eq:eq5} follows from $\frac{1}{4}+\frac{c}{2}\leq \frac{1}{4}+\frac{1-c}{2}$ and $0\le \frac{1-2c}{4} \le \left(\frac{1}{4}+\frac{1-c}{2}-\frac{1+2c}{4}\right)/2$.

Therefore, the $p$-mean welfare of the computed allocation $\x$ is sub-optimal, ${\rm M}_p(\x) < 5/8 = {\rm M}_p^*$. The theorem stands proved. 
\end{proof}

The next section provides a stronger lower bound for $p<0$.

\subsection{Lower Bound for ${\pmb{p<0}}$}
\label{sec:6.2}
This section considers exponent parameters $p<0$ and establishes a lower bound of $\left(2^{-\left( 2 + \nicefrac{2}{|p|} \right)} \cdot n^{\frac{|p|}{2|p|+1}} \right)$ for the competitive ratio of $p$-mean welfare maximization. Note that for any fixed (negative) constant $c <0$ and all $p \leq c$, this lower bound reduces to $O\left(n^{\frac{|p|}{2|p|+1} - \varepsilon} \right)$, for any constant $\varepsilon>0$. Furthermore, the lower bound tends to $n^{1/2}$ as $p \to -\infty$.

\LowerBoundGeneral*
\begin{proof}
We begin by describing the adversarial instance. Let $a \coloneqq \frac{|p|+1}{2|p|+1}$ and choose the number of agents $n$ to be greater than $2^{\frac{2|p|+1}{|p|}}$. This choice ensures that $n^{\frac{|p|}{2|p|+1}}>2$ and, hence, $n^{{a}-\frac{a}{|p|+1}}>2$. This inequality reduces to 

\begin{align}
n^a-n^{\frac{a}{|p|+1}}&> \frac{1}{2} n^a \label{eq:eq10} 
\end{align}

Also, note that 
\begin{align}
1-a+\frac{a}{|p|+1} =1-\frac{|p|}{2|p|+1} =\frac{|p|+1}{2|p|+1} =a \label{eq:eq11} 
\end{align}

Equations \eqref{eq:eq10} and \eqref{eq:eq11} will be useful in the analysis.

We divide the agents into $n^{1-a}$ groups, with each group containing $n^a$ agents. In round $t \in [n^{1-a}]$ (i.e., for the $t$-th good), the agents from $t$-th group have value $\frac{1}{2}$ and all other agents have value $0$.

Now, from each group $t \in [n^{1-a}]$, we select $n^{\frac{a}{|p|+1}}$ agents each of whom has received at most $2n^{-a}$ fraction of the good $t$. We we call such agents as special agents. At least  $n^{\frac{a}{|p|+1}}$ special agents exist within each group $t$. Otherwise, the total fraction of good $t$ allotted to the agents in group $t$ would be at least {$(n^a-n^{\frac{a}{|p|+1}})\frac{2}{n^a}>\frac{n^a}{2}\frac{2}{n^a}=1$} (see equation (\ref{eq:eq10})); this inequality leads to a contradiction, since the total fraction allotted is at most $1$. Write $\mathcal{E}$ to denote the set of all the special agents. The total number of special agents $|\mathcal{E}| = n^{\frac{a}{|p|+1}+1-a}=n^a$ (see equation (\ref{eq:eq11})).

Continuing the instance construction, in the round $(n^{1-a}+1)$, all the special agents have value  $\frac{1}{2}$ and the  remaining agents have value $0$. Let $\mathcal{M} \coloneqq [n] \setminus \mathcal{E}$ denote the set of all the non-special agents. Agents in $\mathcal{M}$ have zero value for the good in round $(n^{1-a}+1)$.
{In the round $n^{1-a}+j+1$, with $1\leq j\leq |\mathcal{M}|$, the $j$-th agent in $\mathcal{M}$ values the good---i.e., the $(n^{1-a}+j+1)$-th good---at $\frac{1}{2}$ and all the other agents value the good at $0$.}
This completes the description of the adversarial instance.

Now, we provide a lower bound on the optimal $p$-mean welfare, ${\rm M}^*_p$, in the instance. Consider an (offline) allocation wherein the good in each round $t \in [n^{1-a}]$ is uniformly distributed among the \emph{special agents} of $t$-th group. Furthermore, for all $1\leq j \leq |\mathcal{M}|$, the good in the round $(n^{1-a}+j+1)$ is allotted entirely to the $j$-th agent in $\mathcal{M}$. In such an allocation every special agent receives a total value of at least $\frac{1}{2} \ \frac{1}{\left(n^{\frac{a}{|p|+1}}\right)}$ and every non-special agent receive a total value of at least $\frac{1}{2}$. Therefore, the following bound holds
\begin{align*}
{\rm M}_p^* & \geq \left(\frac{1}{n}\left( |\mathcal{E}|\left(\frac{1}{2n^{\frac{a}{|p|+1}}}\right)^p+ \left(n-|\mathcal{E}| \right)\left(\frac12\right)^p\right)\right)^{1/p} \\
&= \left(\frac{n}{n^{a}\left(2n^{\frac{a}{|p|+1}}\right)^{|p|}+2^{|p|}\left(n-n^{a}\right)}\right)^{1/|p|} \tag{since $|\mathcal{E}| = n^a$ and $p<0$} \\
&\geq \left(\frac{n}{n^a \ 2^{|p|} \ n^{\frac{a|p|}{|p|+1}} + 2^{|p|} \ n}\right)^{1/|p|}\\
&= \left(\frac{n}{2^{|p|}n+2^{|p|}n}\right)^{1/|p|} \tag{since $a+\frac{a|p|}{|p|+1} =1$}\\
&=\frac{1}{2^{1+\frac{1}{|p|}}}
\end{align*}

Next, we upper bound the $p$-mean welfare of the allocation {$\x = (x_i)_{i\in[n]}$} computed online for the instance. Each special agent at the end of round $n^{1-a}$ received a total value of at most $\frac{1}{2} \ \frac{2}{n^a}=\frac{1}{n^a}$. Furthermore, let $\mathcal{F} \subseteq \mathcal{E}$ denote the subset of special agents each of whom received at most $\frac{2}{|\mathcal{E}|} = \frac{2}{n^a}$ fraction of the good in round $(n^{1-a}+1)$. Note that $|\mathcal{E} \setminus \mathcal{F}| \leq \frac{n^a}{2}$; otherwise, the total fraction of good allotted to the special agents would be greater than $\frac{n^{a}}{2} \ \frac{2}{n^{a}}=1$, which is a contradiction. Since $|\mathcal{E}| =n^a$ and $|\mathcal{E} \setminus \mathcal{F}| \leq \frac{n^a}{2}$, we get $|\mathcal{F}| \geq \frac{n^a}{2}$. Also, the definition of these sets ensure that, for each agent $i \in \mathcal{F}$, the total value $v_i(x_i) \leq \frac{1}{2} \frac{2}{n^a} + \frac{1}{2} \frac{2}{n^a} = \frac{2}{n^a}$. Using these observations we obtain
\begin{align*}
M_p\left( \x \right)&\leq \left(\frac{1}{n}\left(|\mathcal{F}| \cdot \left(\frac{2}{n^a}\right)^p + \left(n-|\mathcal{F}|\right)\cdot 1^p\right)\right)^{1/p} \\
&= \left(\frac{n}{\left(\frac{n^a}{2}\right)^{|p|} \cdot |\mathcal{F}|+n-|\mathcal{F}|}\right)^{1/|p| } \tag{since $p<0$}\\
&\leq \left(\frac{n}{ \left(\frac{n^a}{2}\right)^{|p|} \cdot |\mathcal{F}| } \right)^{1/|p|}\\
&\leq \left(\frac{n}{ \left(\frac{n^a}{2}\right)^{|p|} \  \frac{n^a}{2} } \right)^{1/|p|} \tag{since $|\mathcal{F}| \geq n^a/2$}\\ 
&= 2\left(\frac{2n}{n^{\frac{|p|^2+2|p|+1}{2|p|+1}}}\right)^{1/|p|} \tag{$a=\frac{|p|+1}{2|p|+1}$}\\
&= 2\left(\frac{2}{n^{\frac{|p|^2}{2|p|+1}}}\right)^{1/|p|}\\
& = \frac{2^{1+\frac{1}{|p|}}}{n^{\frac{|p|}{2|p|+1}}}
\end{align*}

These bound establish the stated competitive ratio
\begin{align*}
\frac{{\rm M}^*_p}{{\rm M}_p(\x)} & \geq \frac{\left(\frac{1}{2^{1+\frac{1}{|p|}}}\right)}{\left(\frac{2^{1+\frac{1}{|p|}}}{n^{\frac{|p|}{2|p|+1}}}\right)} = \frac{n^{\frac{|p|}{2|p|+1}}}{2^{2+\frac{2}{|p|}}}.
\end{align*}
\end{proof}

The following corollary directly follows from Theorem \ref{theorem:lower-bound} and provides a lower bound for instances wherein the number of agents, $n$, is sufficiently larger than $1/|p|$. 

\begin{restatable}{corollary}{CorollaryLowerBoundSubLinear}
For any $p<0$, there exists an integer $n_0 \in \mathbb{Z}_+$ such that the $p$-mean welfare maximization problem---with $n\geq n_0$ agents---does not admit an online algorithm with competitive ratio $O \left( n^{\frac{|p|}{2|p|+1} - \varepsilon} \right)$, for any constant $\varepsilon >0$.
\end{restatable}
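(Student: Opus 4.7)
The plan is to derive the corollary as a direct consequence of Theorem~\ref{theorem:lower-bound} by unwinding the $O(\cdot)$ notation and choosing $n_0$ large enough to absorb the constants. Specifically, fix any $p<0$ and any constant $\varepsilon>0$. Suppose, toward a contradiction, that there is an online algorithm whose competitive ratio on instances with $n$ agents is $O\!\left(n^{\frac{|p|}{2|p|+1}-\varepsilon}\right)$; that is, there exist constants $c>0$ and $N_0\in\mathbb{Z}_+$ such that for every $n\geq N_0$ the algorithm's competitive ratio is at most $c\cdot n^{\frac{|p|}{2|p|+1}-\varepsilon}$.

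Next, I would invoke Theorem~\ref{theorem:lower-bound}, which asserts that on instances with $n>2^{(2|p|+1)/|p|}$ agents (the regime used in the theorem's construction) every online algorithm has competitive ratio at least $2^{-(2+2/|p|)}\cdot n^{\frac{|p|}{2|p|+1}}$. Chaining the two inequalities gives, for all $n$ in the common regime,
\[
2^{-(2+2/|p|)}\cdot n^{\frac{|p|}{2|p|+1}} \;\leq\; c\cdot n^{\frac{|p|}{2|p|+1}-\varepsilon},
\]
which simplifies to $n^{\varepsilon}\leq c\cdot 2^{2+2/|p|}$, i.e., $n\leq \left(c\cdot 2^{2+2/|p|}\right)^{1/\varepsilon}$.

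Finally, I would set
\[
n_0 \;\coloneqq\; \max\!\left\{\,N_0,\ \left\lceil 2^{(2|p|+1)/|p|}\right\rceil+1,\ \left\lceil\left(c\cdot 2^{2+2/|p|}\right)^{1/\varepsilon}\right\rceil+1\,\right\},
\]
so that any $n\geq n_0$ violates the inequality above, producing the desired contradiction. The main (minor) subtlety is being careful with quantifier order: the corollary states ``there exists $n_0$'' after ``for any $\varepsilon>0$,'' so $n_0$ is allowed to depend on $\varepsilon$ (and on the hypothetical algorithm's constant $c$), which is exactly what the construction above requires. No other difficulties arise, since all the combinatorial work has already been done in Theorem~\ref{theorem:lower-bound}; the corollary is purely a matter of rephrasing a concrete lower bound as an asymptotic non-existence statement.
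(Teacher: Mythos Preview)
Your proposal is correct and matches the paper's approach: the paper gives no separate proof, merely remarking that the corollary ``directly follows from Theorem~\ref{theorem:lower-bound},'' and your unwinding of the $O(\cdot)$ notation is precisely that derivation. One minor quantifier quibble: in the corollary $n_0$ is chosen \emph{before} any putative algorithm, so it should not depend on the algorithm's constants $c,N_0$; simply take $n_0=\lceil 2^{(2|p|+1)/|p|}\rceil+1$ (depending only on $p$), and the contradiction you derive still occurs for some sufficiently large $n$, which is all that is needed to refute an asymptotic $O(\cdot)$ bound.
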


The next corollary states the lower bound specifically for egalitarian welfare ($p=-\infty$). 
\begin{corollary}
\label{corollary:lower-bound-egalitarian}
For any constant $\varepsilon>0$, there does not exist an online algorithm with competitive ratio $n^{1/2 - \varepsilon}$ for egalitarian welfare maximization. 
\end{corollary}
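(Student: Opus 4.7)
The plan is to apply Theorem \ref{theorem:lower-bound} with $|p|$ taken to be a sufficiently large constant (depending only on $\varepsilon$) and then convert the resulting $p$-mean bound to an egalitarian-welfare bound by revisiting the very construction used in that proof.

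First, since $\frac{|p|}{2|p|+1} = \frac{1}{2} - \frac{1}{2(2|p|+1)} \to \frac{1}{2}$ as $|p| \to \infty$, I fix a negative constant $p^*$ (depending only on $\varepsilon$) with $\frac{|p^*|}{2|p^*|+1} > \frac{1}{2} - \varepsilon$; a direct calculation shows that any $|p^*| \geq \lceil 1/(4\varepsilon) \rceil$ suffices. Note that the statement is trivial for $\varepsilon \geq 1/2$, so I may assume $\varepsilon < 1/2$.

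Next, I instantiate the adversarial instance constructed in the proof of Theorem \ref{theorem:lower-bound} with this $p^*$ on $n$ agents, where $n$ is any integer exceeding the constant $2^{(2|p^*|+1)/|p^*|}$ required there. Setting $a \coloneqq \frac{|p^*|+1}{2|p^*|+1}$, the offline allocation $\omega$ exhibited there---uniformly distributing each of the first $n^{1-a}$ goods among the $n^{a/(|p^*|+1)}$ special agents of its group and giving each remaining good integrally to a distinct non-special agent---guarantees every agent value at least $\frac{1}{2\, n^{a/(|p^*|+1)}}$. Hence
\[
M^*_{-\infty} \;\geq\; M_{-\infty}(\omega) \;\geq\; \frac{1}{2\, n^{a/(|p^*|+1)}}.
\]
The key structural observation from the same proof, which does not depend on $p$, is that for any online algorithm's output $\x$ the subset $\mathcal{F} \subseteq \mathcal{E}$ of special agents receiving at most $2/n^a$ fraction of the ``merging'' good in round $n^{1-a}+1$ satisfies $|\mathcal{F}| \geq n^a/2$, and each such agent has $v_i(x_i) \leq 2/n^a$. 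In particular, $M_{-\infty}(\x) \leq 2/n^a$.

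Combining these two bounds yields
\[
\frac{M^*_{-\infty}}{M_{-\infty}(\x)} \;\geq\; \frac{n^{a - a/(|p^*|+1)}}{4} \;=\; \frac{n^{|p^*|/(2|p^*|+1)}}{4}.
\]
By the choice of $p^*$, the exponent $\frac{|p^*|}{2|p^*|+1} - \left(\frac{1}{2} - \varepsilon\right)$ is strictly positive, so dividing the above ratio by $n^{1/2-\varepsilon}$ leaves a growing positive power of $n$. Consequently, for all sufficiently large $n$, the competitive ratio strictly exceeds $n^{1/2-\varepsilon}$, ruling out any online algorithm with such a competitive ratio. There is no genuine obstacle here---the only point to verify is that the structural bound on $\mathcal{F}$ inside the proof of Theorem \ref{theorem:lower-bound} is an egalitarian-welfare bound in disguise, since $M_{-\infty}(\x)$ is at most the value of any single special agent in $\mathcal{F}$.
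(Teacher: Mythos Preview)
Your proof is correct and aligns with the paper's implicit approach: the paper states Corollary~\ref{corollary:lower-bound-egalitarian} without a standalone argument, treating it as the $p\to-\infty$ specialization of Theorem~\ref{theorem:lower-bound}. You make this precise by fixing a finite $p^*$ with exponent exceeding $\tfrac{1}{2}-\varepsilon$ and then, crucially, re-reading the construction in the proof of Theorem~\ref{theorem:lower-bound} as directly bounding the minimum agent value on both sides---the offline allocation gives every agent at least $\tfrac{1}{2}n^{-a/(|p^*|+1)}$, while any online allocation leaves some agent in $\mathcal{F}$ with at most $2n^{-a}$---which is exactly the step needed to pass from the $p$-mean statement to the egalitarian one.
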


\section{Conclusion and Future Work}
This work studies online allocation of divisible goods and develops encompassing guarantees for $p$-mean welfare objectives. Our results hold under a standard (in the fair division literature) scaling assumption. Relaxing this assumption by, say, considering the problem in the algorithms-with-prediction framework \cite{mitzenmacher2020algorithms} is an interesting direction for future work. Another relevant direction would be study online $p$-mean welfare maximization with stochastic valuations or in the random-order-arrival model. Connecting approximation guarantees for $p$-mean welfare and other well-studied fairness criteria, such as (bounded) envy, is a meaningful thread as well.  

The current paper focussed on divisible goods. However, some of our results extend to settings wherein the goods cannot be fractionally assigned, i.e., extend to indivisible goods. In particular, under assumption that all the (indivisible) goods have sufficiently small values, one can obtain high-probability bounds for egalitarian welfare. Working with such (beyond worst case) assumptions and studying online $p$-mean welfare maximization for indivisible goods will also be interesting.  

\bibliographystyle{alpha}
\bibliography{references}

\appendix
\section{Missing Proof from Section \ref{section:lower-bounds}}
\label{appendix:lower-bound}

\begin{proposition}
\label{proposition:pigou-dalton}
For any $p\in (-\infty, 1)$, $c, c' \in \mathbb{R}_{\geq 0}$, and $0 \leq L < H$, the function $f(\delta) \coloneqq {\rm M}_p(L+ \delta, H-\delta, c, c')$ is strictly increasing in the range $\delta \in \left( 0, \frac{H-L}{2} \right)$
\end{proposition}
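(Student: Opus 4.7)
\bigskip
\noindent\textbf{Proof Proposal.} The plan is to treat this as a direct calculus statement: fix $c, c' \geq 0$ and $0 \le L < H$, and show $f'(\delta) > 0$ on $\bigl( 0, (H-L)/2 \bigr)$. Since $c$ and $c'$ contribute additive constants (or a multiplicative constant in the geometric-mean case), the entire question reduces to analyzing the two ``movable'' terms $(L+\delta)$ and $(H-\delta)$, which satisfy $L+\delta < H-\delta$ throughout the open interval. I will split into three cases according to the value of $p$.

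\medskip
\noindent\textbf{Case $p \in (0,1)$.} Define $g(\delta) \coloneqq (L+\delta)^p + (H-\delta)^p + c^p + c'^p$, so $f(\delta) = \bigl( g(\delta)/4 \bigr)^{1/p}$. I would compute
\[
g'(\delta) \;=\; p \bigl[(L+\delta)^{p-1} - (H-\delta)^{p-1}\bigr].
\]
Since $p-1 < 0$, the function $t \mapsto t^{p-1}$ is strictly decreasing on $(0,\infty)$, and because $L+\delta < H-\delta$ we obtain $(L+\delta)^{p-1} > (H-\delta)^{p-1}$, so $g'(\delta) > 0$. The outer map $u \mapsto (u/4)^{1/p}$ is strictly increasing for $p>0$, giving $f'(\delta)>0$.

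\medskip
\noindent\textbf{Case $p < 0$.} Here I assume the arguments are strictly positive (otherwise the $p$-mean is $0$ or undefined; in the invocation of this proposition the entries are positive since the agents' values lie in $(0,1)$). With the same $g$, we again have
\[
g'(\delta) \;=\; p \bigl[(L+\delta)^{p-1} - (H-\delta)^{p-1}\bigr],
\]
and the bracketed quantity is still positive (the exponent $p-1<0$ makes $t\mapsto t^{p-1}$ strictly decreasing). Multiplied by the negative factor $p$, this gives $g'(\delta) < 0$. The outer map $u \mapsto (u/4)^{1/p}$ is strictly \emph{decreasing} for $p<0$, so composing yields $f'(\delta) > 0$ as desired.

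\medskip
\noindent\textbf{Case $p=0$.} Using the limit definition $M_0(\nu_1,\dots,\nu_n) = \bigl(\prod_i \nu_i\bigr)^{1/n}$, I would take logarithms:
\[
\log f(\delta) \;=\; \tfrac{1}{4}\bigl[\log(L+\delta) + \log(H-\delta) + \log c + \log c'\bigr],
\]
so $\dfrac{f'(\delta)}{f(\delta)} = \dfrac{1}{4}\Bigl(\dfrac{1}{L+\delta} - \dfrac{1}{H-\delta}\Bigr) > 0$, since $L+\delta < H - \delta$ implies $\tfrac{1}{L+\delta} > \tfrac{1}{H-\delta}$. Because $f(\delta) > 0$, we conclude $f'(\delta) > 0$.

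\medskip
\noindent In all three cases $f' > 0$ strictly on the open interval $(0,(H-L)/2)$, proving the claim. There is no real ``hard part'' here---the whole proof is a one-variable calculus check---and the only subtlety is making sure the outer map $u\mapsto u^{1/p}$ flips monotonicity exactly when $p<0$ so that the sign of $g'$ combines correctly with it. I would present the $p\ne 0$ cases in a unified manner (observing that $\mathrm{sign}(f') = \mathrm{sign}\bigl((1/p)\cdot g'\bigr)$ and that $\mathrm{sign}(g') = \mathrm{sign}(p)$, so the two signs cancel) and then dispatch $p=0$ separately via the geometric-mean formula.
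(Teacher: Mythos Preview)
Your proposal is correct and follows essentially the same approach as the paper: a direct calculus check, splitting into the cases $p \neq 0$ (derivative of the power-sum, using that $t \mapsto t^{p-1}$ is strictly decreasing) and $p = 0$ (derivative of the geometric mean). The only cosmetic difference is that the paper treats $p \neq 0$ as a single case via the explicit quotient formula for $f'(\delta)$, whereas you split into $p>0$ and $p<0$ before noting yourself that they can be unified; your version also explicitly flags the positivity assumption on the arguments when $p<0$, which the paper leaves implicit.
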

\begin{proof}
We divide the proof into two cases based on the value of $p$

\noindent
\textbf{Case 1:} $p\neq 0$.

In this case we have $f(\delta)=\left(\frac{(L+\delta)^p+(H-\delta)^p+c^p+(c')^p}{4}\right)^{\nicefrac{1}{p}}$. The derivative $f'(\delta)=\frac{(L+\delta)^{p-1}-(H-\delta)^{p-1}}{4\left(\frac{(L+\delta)^p+(H-\delta)^p+c^p+(c')^p}{4}\right)^{1-\frac{1}{p}}}$ is greater than zero in the range $\left( 0, \frac{H-L}{2} \right)$ as $(L+\delta)^p>(H-\delta)^p$. Hence $f(\delta)$ is strictly increasing in the range $\delta \in \left( 0, \frac{H-L}{2} \right)$. \\

\noindent
\textbf{Case 2:} $p=0$ (Nash Social Welfare)

In this case we have $f(\delta)=\left((L+\delta)(H-\delta)cc'\right)^{\nicefrac{1}{4}}$. The derivative $f'(\delta)=\frac{cc'}{4}\frac{H-L-2\delta}{\left((L+\delta)(H-\delta)cc'\right)^{\frac{3}{4}}}$ is greater than zero in the range $\left( 0, \frac{H-L}{2} \right)$ as $H-L>2\delta$. Hence $f(\delta)$ is strictly increasing in the range $\delta \in \left( 0, \frac{H-L}{2} \right)$.
\end{proof}

\end{document}